\begin{document}
\newtheorem{theorem}{Theorem}[section]
\newtheorem{corollary}[theorem]{Corollary}
\newtheorem{lemma}[theorem]{Lemma}
\newtheorem{observation}[theorem]{Observation}
\newtheorem{proposition}[theorem]{Proposition}
\newtheorem{definition}[theorem]{Definition}
\newtheorem{claim}[theorem]{Claim}
\newtheorem{fact}[theorem]{Fact}
\newtheorem{assumption}[theorem]{Assumption}

\newcommand{\qed}{\rule{7pt}{7pt}}
\newcommand{\dis}{\mathop{\mbox{\rm d}}\nolimits}
\newcommand{\per}{\mathop{\mbox{\rm per}}\nolimits}
\newcommand{\area}{\mathop{\mbox{\rm area}}\nolimits}
\newcommand{\cw}{\mathop{\rm cw}\nolimits}
\newcommand{\ccw}{\mathop{\rm ccw}\nolimits}
\newcommand{\DIST}{\mathop{\mbox{\rm DIST}}\nolimits}
\newcommand{\OP}{\mathop{\mbox{\it OP}}\nolimits}
\newcommand{\OPprime}{\mathop{\mbox{\it OP}^{\,\prime}}\nolimits}
\newcommand{\ihat}{\hat{\imath}}
\newcommand{\jhat}{\hat{\jmath}}
\newcommand{\abs}[1]{\mathify{\left| #1 \right|}}

\newenvironment{proof}{\noindent{\bf Proof}\hspace*{1em}}{\qed\bigskip}
\newenvironment{proof-sketch}{\noindent{\bf Sketch of Proof}\hspace*{1em}}{\qed\bigskip}
\newenvironment{proof-idea}{\noindent{\bf Proof Idea}\hspace*{1em}}{\qed\bigskip}
\newenvironment{proof-of-lemma}[1]{\noindent{\bf Proof of Lemma #1}\hspace*{1em}}{\qed\bigskip}
\newenvironment{proof-attempt}{\noindent{\bf Proof Attempt}\hspace*{1em}}{\qed\bigskip}
\newenvironment{proofof}[1]{\noindent{\bf Proof}
of #1:\hspace*{1em}}{\qed\bigskip}
\newenvironment{remark}{\noindent{\bf Remark}\hspace*{1em}}{\bigskip}


\newcommand{\FOR}{{\bf for}}
\newcommand{\TO}{{\bf to}}
\newcommand{\DO}{{\bf do}}
\newcommand{\REPEAT}{{\bf repeat}}
\newcommand{\UNTIL}{{\bf until}}
\newcommand{\WHILE}{{\bf while}}
\newcommand{\AND}{{\bf and}}
\newcommand{\IF}{{\bf if}}
\newcommand{\THEN}{{\bf then}}
\newcommand{\ELSE}{{\bf else}}

\makeatletter
\def\fnum@figure{{\bf Figure \thefigure}}
\def\fnum@table{{\bf Table \thetable}}
\long\def\@mycaption#1[#2]#3{\addcontentsline{\csname
  ext@#1\endcsname}{#1}{\protect\numberline{\csname 
  the#1\endcsname}{\ignorespaces #2}}\par
  \begingroup
    \@parboxrestore
    \small
    \@makecaption{\csname fnum@#1\endcsname}{\ignorespaces #3}\par
  \endgroup}
\def\mycaption{\refstepcounter\@captype \@dblarg{\@mycaption\@captype}}
\makeatother

\newcommand{\figcaption}[1]{\mycaption[]{#1}}
\newcommand{\tabcaption}[1]{\mycaption[]{#1}}
\newcommand{\head}[1]{\chapter[Lecture \##1]{}}
\newcommand{\mathify}[1]{\ifmmode{#1}\else\mbox{$#1$}\fi}
\newcommand{\bigO}O
\newcommand{\set}[1]{\mathify{\left\{ #1 \right\}}}
\def\half{\frac{1}{2}}


\newcommand{\enc}{{\sf Enc}}
\newcommand{\dec}{{\sf Dec}}
\newcommand{\E}{{\rm Exp}}
\newcommand{\Var}{{\rm Var}}
\newcommand{\Z}{{\mathbb Z}}
\newcommand{\F}{{\mathbb F}}
\newcommand{\integers}{{\mathbb Z}^{\geq 0}}
\newcommand{\R}{{\mathbb R}}
\newcommand{\Q}{{\cal Q}}
\newcommand{\eqdef}{{\stackrel{\rm def}{=}}}
\newcommand{\from}{{\leftarrow}}
\newcommand{\vol}{{\rm Vol}}
\newcommand{\poly}{{\rm poly}}
\newcommand{\ip}[1]{{\langle #1 \rangle}}
\newcommand{\wt}{{\rm wt}}
\renewcommand{\vec}[1]{{\mathbf #1}}
\newcommand{\mspan}{{\rm span}}
\newcommand{\rs}{{\rm RS}}
\newcommand{\RM}{{\rm RM}}
\newcommand{\Had}{{\rm Had}}
\newcommand{\calc}{{\cal C}}

\newcommand{\fig}[4]{
        \begin{figure}
        \setlength{\epsfysize}{#2}
        \vspace{3mm}
        \centerline{\epsfbox{#4}}
        \caption{#3} \label{#1}
        \end{figure}
        }

\newcommand{\ord}{{\rm ord}}

\providecommand{\norm}[1]{\lVert #1 \rVert}
\newcommand{\embed}{{\rm Embed}}
\newcommand{\qembed}{\mbox{$q$-Embed}}
\newcommand{\calh}{{\cal H}}
\newcommand{\lp}{{\rm LP}}

\newcommand{\sspace}{\baselineskip 14pt}
\newcommand{\dspace}{\baselineskip 24pt}
\newcommand{\ASG}{\leftarrow}
\newcommand{\TN}{{\bf then}}
\newcommand{\EL}{{\bf else}}
\newcommand{\EI}{{\bf endif}}
\newcommand{\RE}{{\bf repeat}}
\newcommand{\UN}{{\bf until}}
\newcommand{\ER}{{\bf endrepeat}}
\newcommand{\WH}{{\bf while}}
\newcommand{\EW}{{\bf endwhile}}
\newcommand{\FO}{{\bf for}}
\newcommand{\BY}{{\bf by}}
\newcommand{\EF}{{\bf endfor}}
\newcommand{\CA}{{\bf case}}
\newcommand{\EC}{{\bf endcase}}
\newcommand{\TR}{{\bf true}}
\newcommand{\FA}{{\bf false}}
\newcommand{\T}{\hspace*{.3in}}

\def\ignore#1{\relax}
\providecommand{\email}[1]{\href{mailto:#1}{\nolinkurl{#1}\xspace}}
\pagestyle{empty}
\title{Optimal Parametric Search for Path and Tree Partitioning}

\author{Greg N. Frederickson\thanks{Supported in part by the National Science Foundation under
grants CCR-86202271 and CCR-9001241,
and by the Office of Naval Research under contract
N00014-86-K-0689.} \\
\and Samson Zhou\thanks{Research supported by NSF CCF-1649515.
Email: \email{samsonzhou@gmail.com}
}}
\date{\today}
\dspace
\maketitle
\pagestyle{plain}
\noindent
{\bf Abstract.}
We present linear-time algorithms for partitioning a path or a tree with weights on the vertices by removing $k$ edges to maximize the minimum-weight component. We also use the same framework to partition a path with weight on the vertices, removing $k$ edges to minimize the  maximum-weight component. The algorithms use the parametric search paradigm, testing candidate values until an optimum is found while simultaneously reducing the running time needed for each test.
For path-partitioning, the algorithm employs a synthetic weighting scheme that results in a constant fraction reduction in running time after each test. 
For tree-partitioning, our dual-pronged strategy makes progress no matter what the specific structure of our tree is.
\vspace{.2in}\\
{\bf Key words and phrases.}
Adaptive algorithms, 
data structures,
max-min,
min-max,
parametric search,
partial order,
path partitioning,
prune-and-search,
sorted matrices,
tree partitioning.
\newpage
\pagenumbering{arabic}
\dspace

\section{Introduction}

Parametric search is a powerful technique for solving various
optimization problems
\cite{AP,C1,C2,CSSS,FJ1,FJ2,Gu,M,R,S,Z}.
To solve a given problem by parametric search, one must find a threshold value, namely a largest (or smallest) value
that passes a certain feasibility test.
In general, we must identify explicitly or implicitly a set of candidate values, and search the set, 
choosing one value at a time upon which to test feasibility.
As a result of the test, we discard from further consideration various values in the set.
Eventually, the search will converge on the optimal value.
 
In \cite{M}, Megiddo emphasized the role that
parallel algorithms can play in the implementation of such
a technique.
In \cite{C2}, Cole improved the time bounds for a number of problems
addressed in \cite{M},
abstracting two techniques upon which parametric search can be based:
sorting and performing independent binary searches.
Despite the clever ideas elaborated in these papers,
none of the algorithms presented in \cite{M, C2} 
is known to be optimal.
In fact, rarely have algorithms for parametric search problems
been shown to be optimal.
(For one that is optimal, see \cite{CSSS}.)
In at least some cases, the time required for feasibility testing
matches known lower bounds for the parametric search problem.
But in the worst case, $\Omega (\log n)$ values must be tested for feasibility, where $n$ is the size of the input.
Is this extra factor of at least $\log n$ necessary?
For several parametric search problems on paths and trees,
we show that a polylogarithmic penalty in the running time can be avoided,
and give linear-time (and hence optimal) algorithms for these problems.

We consider the max-min tree $k$-partitioning problem \cite{PS}.
Let $T$ be a tree with $n$ vertices and a nonnegative weight associated with each vertex. 
Let $k < n$ be a given positive integer.
The problem is to delete $k$ edges in the tree so as to maximize the weight of the lightest of the resulting subtrees.
Perl and Schach introduced an algorithm that runs in $O(k^2 rd(T)+kn)$ time, where $rd(T)$ is the radius of the tree \cite{PS}.
Megiddo and Cole first considered the special case in which the tree is a path, and gave $O(n (\log n)^2)$-time and $O(n \log n)$-time algorithms, resp., for this case.\footnote{All logarithms are to the base 2.}
(Megiddo noted that an $O(n \log n)$-time algorithm is possible, using ideas from \cite{FJ1}.)
For the problem on a tree, Megiddo presented an $O(n( \log n)^3)$-time algorithm \cite{M}, and Cole presented an $O(n( \log n)^2)$-time algorithm \cite{C2}. 
The algorithm we present here runs in $O(n)$ time, which is clearly optimal.

A closely related problem is the min-max path $k$-partitioning problem \cite{BPS},  for which we must delete $k$ edges in the path so as to minimize the weight of the heaviest of the resulting subpaths. 
Our techniques yield linear-time algorithms for both versions of path-partitioning  and for max-min tree $k$-partitioning problem. Our approach is less convoluted than that in \cite{FTR, F1}, and thus its analysis should be more amenable to independent confirmation and publication. 
We believe that this paper provides for the full acknowledgement of linear time  algorithms for these particular problems.

Our results are a contribution to parametric search in several ways
beyond merely producing optimal algorithms for path and tree partitioning.  
In contrast to \cite{F1}, we introduce synthetic weights on candidate values 
or groups of candidate values from a certain implicit matrix. 
We use weighted selection to resolve the values representing many shorter paths before the values for longer paths, enabling future feasibility tests to run faster. 
We also use unweighted selection to reduce the size of the set of candidate values so that the time for selecting 
test values does not become an obstacle to achieving linear time. 

Our parametric search on trees reduces feasibility test time across subpaths, while simultaneously also pruning paths that
have no paths that branch off of them. To demonstrate progress, we identify an effective measure of current problem size. 
Surprisingly, the number of vertices remaining in the tree seems not to be so helpful, as is also the case for the number of candidate values. 
Instead, we show that the time to perform a feasibility test actually is effective. 
Our analysis for tree partitioning captures the total progress from the beginning of the algorithm, not necessarily between consecutive rounds, as was the case for the path-partitioning problem. 
Thus it seems necessary to have a dual-pronged strategy to simultaneously compress the search structure of paths and also prune paths that end with a leaf.

We organize our paper as follows. 
In Section 2 we discuss features of parametric search, and lay the foundation for the optimal algorithms that we describe in subsequent sections.
In Section 3 we present our approach for solving both
the min-max and the max-min partitioning problem on a path. 
In Section 4 we build on the results in Section 3 and
present our approach for optimally solving the max-min partitioning problem on a tree.
\bigskip

\section{Paradigms for parametric search}
\label{sec:prelims}

In this section we discuss features of parametric search,
and lay the foundation for the optimal algorithms
that we give in subsequent sections.
We first review straightforward feasibility tests
to test the feasibility of a search value in a tree.
We then review how to represent all possible search values
of a path within a ``sorted matrix''. 
We next present a general approach for search that
uses values from a collection of sorted matrices.
Finally, we describe straightforward approaches
for the path and the tree that are as good as any
algorithms in \cite{M,C2}.

We first describe straightforward feasibility tests for cutting edges in a tree so as to maximize the minimum weight of any resulting component (max-min problem).
The feasibility test takes a test value $\lambda$, and determines if at least $k$ cuts can be placed in the tree
such that no component has weight less than $\lambda$. 
We take $\lambda^*$ to be the largest value that passes the test.

A straightforward test for the max-min problem in a tree is given in \cite{M},
and the feasibility test for the min-max problem is similar \cite{KM}. 
We focus first on max-min, using an algorithm {\it FTEST0}, which starts 
by rooting the tree at a vertex of degree 1
and initializing the number of cuts, $numcut$, to $0$.
It next calls {\it explore}$({\it root})$ to explore the tree starting at the root.
When the exploration is complete,
if $numcut > k$,
then $\lambda$ is a lower bound on $\lambda^*$,
and otherwise $\lambda$ is an upper bound on $\lambda^*$.
The procedure {\it explore} is:

\bigskip
\sspace
\noindent
{\bf proc} {\it explore} $({\bf vertex} ~v)$\vspace{.05in}\\
$\T $ {\it accum\_wgt}$(v)$ $\ASG$ weight of $v$\vspace{.05in}\\
$\T $ $\FO$ each child $w$ of $v$ $\DO$ ${\it explore(w)}$ $\EF$\vspace{.05in}\\
$\T $ Adjust {\it accum\_wgt}$(v)$ and $numcut$.
 
\dspace
\bigskip
\noindent
For the max-min problem, we adjust {\it accum\_wgt}$(v)$ and $numcut$ by:

\bigskip
\sspace
\noindent
$\T $ $\FO$ each child $w$ of $v$ $\DO$ Add ${\it accum\_wgt}(w)$ to ${\it accum\_wgt}(v)$. $\EF$\vspace{.05in}\\
$\T $ $\IF$ {\it accum\_wgt}$(v) \geq \lambda$\vspace{.05in}\\ 
$\T $ $\TN$ $numcut \ASG numcut + 1$\vspace{.05in}\\
$\T $ $\T $ {\it accum\_wgt}$(v) \ASG 0$\vspace{.05in}\\
$\T $ $\EI$
 
\dspace
\bigskip

In all but one case, we cut an edge in the tree whenever we increment $numcut$ by 1. 
For the max-min problem, we generally cut the edge from the current vertex $v$ to its parent except for 
the last increment of $numcut$ in {\it FTEST0}.
This is because either $v$ will be the root and thus does not have a parent,
or the fragment of the tree above $v$ will have total weight less than $\lambda$.
By not adjusting $numcut$ to reflect the actual number of cuts in this case,
we are able to state the threshold test for both versions of {\it FTEST0}
in precisely the same form. 
The feasibility test takes constant time per vertex, and thus uses $O(n)$ time.

In a parametric search problem,
the search can be restricted to considering values from a finite set of values.
The desired value is the largest (in the case of a max-min problem)
or the smallest (in the case of a min-max problem)
that passes the feasibility test.
Let each value in the finite set be called a {\it candidate value}.
We next discuss a data structure that contains all candidate
values for problems on just a path.
This structure is based on ideas in \cite{FJ1} and \cite{FJ2}.
Let a matrix be called a {\it sorted matrix}
if for every row the values are in nondecreasing order,
and for every column the values are in nonincreasing order.
(Note that for notational convenience,
this definition varies slightly from the definition in \cite{FJ1}.)
Let the vertices on the path $P$ be indexed from 1 to $n$.

The set of values that we need to search is the set of sums of weights of vertices from $i$ to $j$, for all pairs $i\leq j$. 
We succinctly maintain this set of values, plus others, in a data structure, the \emph{succinct description}. 
For $i=0,1, \cdots ,n$, let $A_i$ be the sum of the weights of vertices 1 to $i$.
Note that for any pair $i,j$ with $i \leq j$, the sum of the weights from vertex $i$ to $j$ is $A_j - A_{i-1}$.
Let $X_1$ be the sequence of sums $A_1, A_2, \cdots ,A_n$, and let $X_2$ be the sequence of sums $A_0, A_1, \cdots ,A_{n-1}$.
Then sorted matrix $M(P)$ is the $n \times n$ Cartesian matrix $X_1-X_2$, where the $ij$-th entry is $A_j-A_{i-1}$.
In determining the above, we can use proper subtraction, in which, $a-b$ gives $\max \{a-b,0\}$.
Clearly, the values in any row of $M(P)$ are in nondecreasing order, and the values in any column of $M(P)$ are in nonincreasing order. 
Representing $M(P)$ explicitly would require $\Theta(n^2)$ time and space. 
Thus, our data structure succinctly represents $M(P)$ (and hence, $P$) in $O(n)$ space by the two vectors $X_1$ and $X_2$. 

In general, our algorithm also needs to inspect specific subpaths of $P$. 
However, repeatedly copying subvectors of $X_1$ and $X_2$ can take more than linear time in total. 
On the other hand, for a subpath $Q$ of $P$, the corresponding matrix $M(Q)$ is a submatrix of $M(P)$, which we can recover from the succinct representation of $P$. 
Thus, our algorithm succinctly represents $M(Q)$ by the start and end indices of $M(Q)$ within $M(P)$. 
In this way, we can generate the values of $M(Q)$ from the vectors $X_1$ and $X_2$ of $M(P)$, as well as the location of the submatrix as given by the succinct representation of $Q$. 
Therefore, our algorithm avoids needlessly recopying vectors and instead generates in $O(n)$ total time the succinct representations of all subpaths that it may inspect.

\begin{figure}[thb]
\begin{center}
\includegraphics{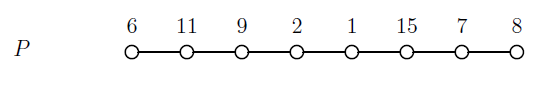}
\includegraphics{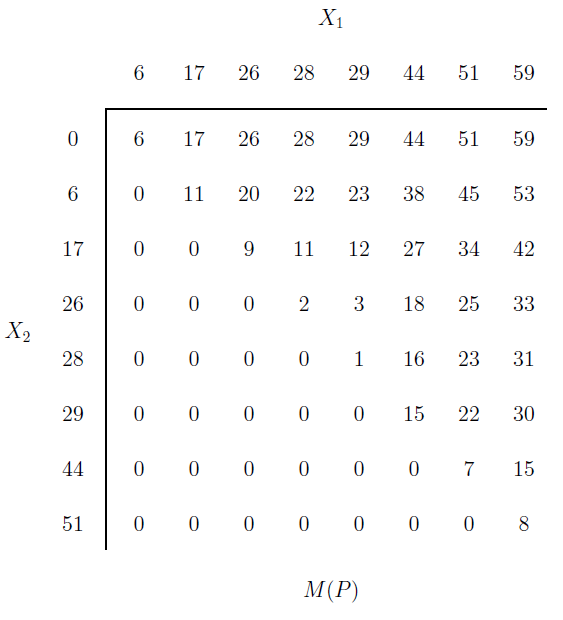}
\end{center}
\caption{\small Path $P$, vectors $X_1$ and $X_2$, and explicit illustration of matrix $M(P)$
\label{fig1}}
\end{figure}

As an example, we show a vertex-weighted path $P$ in Fig.~\ref{fig1},
and its associated matrix $M(P)$.
We list the sequence $X_1$ horizontally above $M(P)$,
and the sequence $X_2$ vertically to the left of $M(P)$,
in such a way that the $ij$-th element of $M(P)$
is beneath the $j$-th element of $X_1$
and to the right of the $i$-th element of $X_2$.

We next describe the general form {\it PARAM\_SEARCH} of all of our searching algorithms.
It is related to, and uses some of the ideas in, algorithms found in \cite{C2}, \cite{FJ1}, \cite{FJ2}, and \cite{M}.
By specifying the specific subroutines {\it INIT\_MAT}, {\it TEST\_VAL}, {\it UPDATE\_MAT},  we will be able to give four versions of {\it PARAM\_SEARCH}, namely {\it PATH0}, {\it TREE0}, {\it PATH1}, and {\it TREE1}.
In all four versions, {\it PARAM\_SEARCH} takes as its arguments an integer $k$ and either a vertex-weighted path or a tree.
It initializes searching bounds $\lambda_1$ and $\lambda_2$, where $\lambda_1 < \lambda_2$. 
The algorithm progressively narrows these bounds until they satisfy the following conditions by the end of the algorithm.
For a max-min problem, $\lambda_1$ is the largest value that is feasible,
and $\lambda_2$ is the smallest value that is not feasible.
(For a min-max problem, $\lambda_1$ is the largest value that is not feasible,
and $\lambda_2$ is the smallest value that is feasible.)

{\it PARAM\_SEARCH} will use {\it INIT\_MAT} to initialize $\cal M$,
a collection of succinctly represented square sorted matrices,
the union of whose values is the set of values that we consider.
{\it PARAM\_SEARCH} then performs a series of iterations.
On each iteration, it will use {\it TEST\_VAL} to identify and test a
small number of values $\lambda$
drawn from matrices in $\cal M$.
Each value $\lambda$ will be either the largest or the smallest element
in some matrix in $\cal M$.
As a result of the feasibility tests,
{\it UPDATE\_MAT} updates $\mathcal{M}$ by deleting 
certain matrices from $\cal M$,
dividing certain matrices into 4 submatrices,
and inserting certain matrices into $\cal M$.
Note that in initializing $\lambda_2$ to $\infty$,
we take $\infty$ to be any value greater than the total weight
of all vertices in the path or the tree.

\bigskip
\sspace
\noindent
{\bf Algorithm} {\it PARAM\_SEARCH}\vspace{.05in}\\
$\T $ $\lambda_1 \ASG 0$\\
$\T $ $\lambda_2 \ASG \infty$\vspace{.05in}\\
$\T $ {\it INIT\_MAT}\vspace{.05in}\\
$\T $ $\WH$ $\cal M$ is not empty $\DO$\vspace{.05in}\\
$\T \T $ {\it TEST\_VAL}\vspace{.05in}\\
$\T \T $ {\it UPDATE\_MAT}\vspace{.05in}\\
$\T $ $\EW$\\
$\T $ Output $\lambda_1$ and $\lambda_2$.\\
$\T $ /* For max-min, $\lambda^*$ will be the final $\lambda_1$ and for min-max, $\lambda^*$ will be the final $\lambda_2$. */
 
\dspace
\bigskip

In the remainder of this section we
describe simple algorithms {\it PATH0} and {\it TREE0}
for partitioning a path and a tree, resp.
These algorithms match the time of the algorithms in \cite{C2},
and set the stage for the improved algorithms
that we present in the next two sections, 
in which we introduce data structures that enable faster feasibility tests 
and strategies that prune the tree quickly. 
We first describe a simple approach to the max-min problem on a path $P$.
Below are the three routines 
{\it PATH0\_init\_mat},
{\it PATH0\_test\_val},
and {\it PATH0\_update\_mat}.
For a max-min problem,
if $\lambda > \lambda_1$ and $\lambda$ is feasible, then we reset $\lambda_1$ to $\lambda$.
Otherwise, if $\lambda < \lambda_2$ and $\lambda$ is not feasible,
then we reset $\lambda_2$ to $\lambda$. 
Thus at termination, $\lambda^*=\lambda_1$.
(For the min-max problem, we reset $\lambda_2$ to $\lambda$ if $\lambda$ is feasible,
and $\lambda_1$ to $\lambda$ if $\lambda$ is not feasible. 
Thus, at termination, $\lambda^*=\lambda_2$.)
On every iteration we split matrices of size greater than $1\times1$ into four smaller submatrices.
We assume that the dimension
of each sorted matrix is a power of 2.
If this is not the case,
then we pad out the matrix logically with zeroes.
In the following, the notation $(\lambda_1,\lambda_2)$
denotes the open interval of values between $\lambda_1$ and $\lambda_2$. 
We also use $R(P)$ to denote the representatives whose values are within the interval $(\lambda_1,\lambda_2)$.

\bigskip
\sspace
\noindent
{\it PATH0\_init\_mat}:\vspace{.05in}\\
$\T $ Implicitly split sorted matrix $M(P)$ for the path $P$ into four square submatrices. \\
$\T $ Initialize $\cal M$ to be the set containing these four square submatrices.

\bigskip
\noindent
{\it PATH0\_test\_val}:\vspace{.05in}\\
$\T $ $\IF$ each submatrix in $\cal M$ contains just 1 element \\
$\T $ $\TN$ Let $R$ be the multiset of values in the submatrices. \\
$\T $ $\EL$ Let $R$ be the multiset consisting of the smallest and the largest \\
$\T \T$ element from each matrix in ${\cal M}$. \\
$\T $ $\EI$ \\
$\T $ $\FO$ two times $\DO$: \\
$\T \T$ Let $R'$ be the subset of $R$ that contains only values in the interval $(\lambda_1,\lambda_2)$. \\
$\T \T$ $\IF$ $R'$ is not empty $\TN$ \\
$\T \T \T$ Select the median element $\lambda$ in $R'$. \\
$\T \T \T$ $\IF$ {\it FTEST0}$(P,k,\lambda ) = ``lower''$ (\emph{i.e.}, $k<numcuts$) \\
$\T \T \T \T$ $\TN$ $\lambda_1 \ASG \lambda$ \\
$\T \T \T$ $\EL$ $\lambda_2 \ASG \lambda$ \\
$\T \T \T$ $\EI$ \\
$\T \T$ $\EI$ \\
$\T \EF$

\bigskip
\noindent
{\it PATH0\_update\_mat}:\vspace{.05in}\\
$\T$ Discard from ${\cal M}$ any matrix with no values in $(\lambda_1,\lambda_2)$. \\
$\T $ $\IF$ each submatrix in $\cal M$ contains more than 1 element \\
$\T $ $\TN$ Split each submatrix $M$ in $\cal M$ into four square submatrices, \\
$\T \T $ discarding any resulting submatrix with no values in $(\lambda_1,\lambda_2)$. \\
$\T $ $\EI$ 
 
\dspace
\bigskip

The following lemma is similar in spirit to Lemma 5 in \cite{FJ1}
and Theorem 2 in \cite{FJ2}.

\bigskip

\begin{lemma}
\label{lem:2:1}
Let $P$ be a path of $n>2$ vertices.
The number of iterations needed by {\it PATH0} is $O(\log n)$,
and the total time of {\it PATH0} exclusive of feasibility tests is $O(n)$.
\end{lemma}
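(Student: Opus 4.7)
The plan is to prove the two claims separately. Each iteration's non-feasibility-test work is $O(|\mathcal{M}_t|)$: recording the two representatives per matrix takes constant time per matrix, linear-time selection on $R$ (whose size is at most $2|\mathcal{M}_t|$) is linear in $|\mathcal{M}_t|$, re-scanning each matrix against the updated $(\lambda_1,\lambda_2)$ is constant per matrix, and splitting a survivor into four succinctly-represented submatrices is also constant per matrix. Hence it suffices to bound the iteration count by $O(\log n)$ and to show $\sum_t |\mathcal{M}_t|=O(n)$.

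For the iteration count, each matrix of side $>1$ is replaced by four submatrices of half the side, so after $\lceil \log n\rceil$ iterations every matrix in $\mathcal{M}$ is a singleton (padding dimensions up to a power of two if necessary). From that point on, no further splitting occurs and $|\mathcal{M}|=|R|$. Each of the two median tests per iteration eliminates at least half of the values in $R'=R\cap(\lambda_1,\lambda_2)$, so $|\mathcal{M}|$ shrinks by a factor of $4$ per iteration, and the algorithm must terminate within $O(\log n)$ additional iterations. This yields the $O(\log n)$ bound on the total number of iterations.

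For the work bound, I plan to prove inductively that $|\mathcal{M}_t|\le c\cdot 2^t$ for a suitable constant $c$, so that $\sum_{t=0}^{O(\log n)} |\mathcal{M}_t|=O(2^{O(\log n)})=O(n)$. At iteration $t$, $|R|=2|\mathcal{M}_t|$, and the two median tests narrow $(\lambda_1,\lambda_2)$ so that at most $|R|/4=|\mathcal{M}_t|/2$ representatives remain inside the updated interval. Generically, a matrix is retained exactly when at least one of its representatives lies in the new interval, so at most $|\mathcal{M}_t|/2$ parents survive, and splitting each into four children produces $|\mathcal{M}_{t+1}|\le 2|\mathcal{M}_t|$ up to lower-order corrections. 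Starting from $|\mathcal{M}_0|=4$, the claimed geometric bound follows.

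The main obstacle is handling the ``straddler'' case: a matrix with $\min\le\lambda_1$ and $\max\ge\lambda_2$ may contain values in $(\lambda_1,\lambda_2)$ without either corner lying in the interval, and is therefore kept despite contributing nothing to $R'$. I plan to bound the number of straddlers at iteration $t$ by $O(2^t)$ via a staircase argument inside $M(P)$: the cells of $M(P)$ with value $\ge\lambda_2$ form a monotone staircase, and this staircase can cross at most $O(n/s_t)=O(2^t)$ of the $s_t\times s_t$ uniform blocks tiling $M(P)$, where $s_t=n/2^{t+1}$. Since every straddler must contain cells from both sides of both staircases, the total straddler count is $O(2^t)$, which can be absorbed into the inductive step $|\mathcal{M}_{t+1}|\le 2|\mathcal{M}_t|+O(2^t)$; a standard unrolling (or a slightly tightened version of the recurrence using the fact that only the TR and BL children of a straddler can never both be new straddlers) preserves the geometric bound $|\mathcal{M}_t|=O(2^t)$ and therefore the desired $O(n)$ total work.
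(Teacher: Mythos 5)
Your high-level plan matches the paper's: bound the per-iteration non-test work by $O(|\mathcal{M}_t|)$, show $O(\log n)$ iterations, and show $\sum_t |\mathcal{M}_t| = O(n)$ by proving $|\mathcal{M}_t| = O(2^t)$. You also correctly identify the real obstacle — straddling submatrices (min $\le\lambda_1$, max $\ge\lambda_2$) that survive while contributing no representative to $R'$ — and your staircase/monotone-boundary argument for counting how many $s_t\times s_t$ blocks the level sets of $\lambda_1,\lambda_2$ can cross is a valid substitute for the paper's anti-diagonal argument.

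However, the inductive step as you state it does not close. The recurrence $|\mathcal{M}_{t+1}|\le 2|\mathcal{M}_t|+O(2^t)$ does \emph{not} unroll to $O(2^t)$: setting $b_t=|\mathcal{M}_t|/2^t$ gives $b_{t+1}\le b_t+O(1)$, hence $|\mathcal{M}_t|=O(t2^t)$ and $\sum_t|\mathcal{M}_t|=O(n\log n)$, which is not good enough. The coefficient $2$ comes from bounding non-straddler survivors by $|R'|\le|\mathcal{M}_t|/2$ — one surviving block per unresolved representative. That is too loose. Your proposed patch (a claim about which children of a straddler can be straddlers) doesn't touch this term. What you actually need, and what your own staircase already gives you for free, is the stronger fact: any block \emph{not} crossed by either staircase lies entirely on one side of $\lambda_1$ and entirely on one side of $\lambda_2$; if it survives, it lies wholly inside $(\lambda_1,\lambda_2)$ and therefore has \emph{both} corners in $R'$. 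Thus survivors $\le |R'|/2 + O(2^t) \le |\mathcal{M}_t|/4 + O(2^t)$, so $|\mathcal{M}_{t+1}|\le|\mathcal{M}_t|+O(2^t)$, which does unroll to $O(2^t)$. This is exactly the extra bit of structure the paper extracts with its ``at most two extremal submatrices per diagonal'' observation (the paper actually carries a two-variable induction on $S(i)$ and $U(i)$, the number of unresolved representatives, to the same effect). So: right tools, but the inductive bookkeeping as written gives a factor that is off by $\log n$, and the fix needs the staircase bound applied to all boundary-crossing blocks, not just to straddlers.
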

\begin{proof}
We call the multiset of smallest and largest values from each submatrix in $\cal M$ the {\it representatives} of $\cal M$,
and we call the subset of representatives that are in $(\lambda_1,\lambda_2)$ the {\it unresolved representatives} of $\cal M$.
For iteration $i = 1, 2, \ldots , \log n -1$,
let $S(i)$ be the number of submatrices in $\cal M$,
and $U(i)$ be the number of unresolved representatives of $\cal M$.
We first show that $S(i) \leq 7*2^{i+1}-4i-8$,
and $U(i) \leq 3*2^{i+3}-8i-14$.
We prove this by induction on $i$.
The basis is for $i=1$.
At the beginning of iteration 1,
there are 4 submatrices in $\cal M$ and 8 unresolved representatives of $\cal M$.
The first feasibility test resolves at least 4 of these representatives,
and the second feasibility test leaves at most 2 unresolved.
At most all 4 submatrices remain after discarding.
Splitting the submatrices at the end of iteration 1 gives at most 16 submatrices,
Note that for $i=1$, $S(i) \leq 16 = 7*2^{1+1}-4-8$ 
and 32 representatives, at most $32-6=26$ of which are unresolved.
and $U(i) \leq 26 = 3*16-8-14$.
Thus the basis is proved.

For the induction step, $i>1$.
By the induction hypothesis $S(i-1) \leq 7*2^i-4i-4$ and $U(i-1) \leq 3*2^{i+2}-8i-6$.
Let $R(i-1)$ be the set of representatives of $\cal M$ at the end of iteration $i-1$.
Note that these elements fall on at most $2^{i+1}-1$ diagonals of $M(P)$, since each iteration can only split the existing submatrices into at most four smaller submatrices. 
The feasibility tests on iteration $i$ will leave $u \leq \lfloor U(i-1)/4 \rfloor \leq 3*2^i-2i-2$ representatives unresolved.
Let $d_j$ be the number of elements of $R(i-1)$ on the $j$-th diagonal that are unresolved, so that $\sum d_j=u$. 
Except for possibly the submatrices with the largest and smallest representatives, all other submatrices on the $j$-th diagonal have two representatives, each in the range $(\lambda_1,\lambda_2)$. 
Thus, there will be at most $\lfloor (d_j +2)/2 \rfloor$ submatrices
whose representatives are on the $j$-th diagonal and are not
both at most $\lambda_1$ and not both at least $\lambda_2$.
Then summing over at most $2^{i+1}-1$ diagonals of $M(P)$, 
there are $S(i)/4 \leq \sum_j\lfloor{(d_j+2)/2\rfloor}\le(u + 2^{i+2}-2)/2$ submatrices
that cannot be discarded at the end of iteration $i$.
There will be $2S(i)/4 - u \leq 2^{i+2}-2$ representatives
of these submatrices that are resolved.
After quartering, there are $S(i) \leq 4*(u + 2^{i+2}-2)/2$
$\leq 3*2^{i+1}-4i-4 + 2^{i+3} -4$ submatrices
at the end of iteration $i$.
Simplifying, we have $S(i) \leq 7*2^{i+1}-4i-8$.
After quartering the submatrices,
the number of unresolved representatives of submatrices in $\cal M$ will be
$U(i) = 2S(i)-(2S(i)/4 - u)$
$= (3/2)S(i) + u$
$\leq (3/2)*(7*2^{i+1}-4i-8) + 3*2^i-2i-2$.
Simplifying,
we have $U(i) \leq 3*2^{i+3}-8i-14$.
This concludes the proof by induction.

There will be at most $\log n -1$ iterations until all submatrices consist
of single values.
At that point we will have
$S(\log n -1) \leq 7*2^{\log n}-4\log n -8$.
On each remaining iteration,
the number of elements in $\cal M$ will be at least quartered.
Thus the total number of iterations is $O(\log n)$.
The work on iteration $i$, for $i = 1, 2, \ldots ,\log n -1$,
will be $O(S(i))$ $= O(2^i)$.
Thus the total time on these iterations, exclusive of feasibility testing,
will be $O(\sum_{i=1}^{\log n -1}2^i)$ $=O(n)$.
The total time on the remaining iterations will be
$O(2(7*2^{\log n}-4\log n -8))$ $=O(n)$.
\end{proof}

\begin{figure}[thb]
\begin{center}
\includegraphics{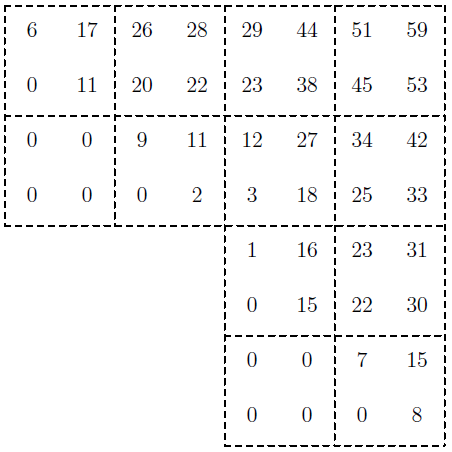}
\end{center}
\caption{\small Explicit illustration of submatrices, just after quartering on the first iteration of {\it PATH0}}
\label{fig2p2}
\end{figure}

\begin{figure}[thb]
\begin{center}
\includegraphics{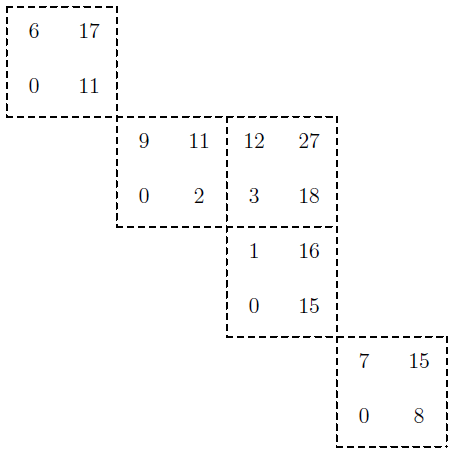}
\end{center}
\caption{\small Explicit illustration of submatrices at the end of the first iteration {\it PATH0}}
\label{fig2p3}
\end{figure}

We illustrate {\it PATH0} using path $P$ as in Fig. \ref{fig2p2}, with $k=3$.
First we set $\lambda_1$ to 0 and $\lambda_2$ to $\infty$.
We initialize the set $\cal M$ to the set consisting of
the four $4 \times 4$ submatrices of the matrix $M(P)$.
On the first iteration of the while-loop,
$R= \{59,3,28,0,31,0,0,0\}$ and $R'= \{59,3,28,31\}$.
The median of this set is $(28+31)/2 = 29.5$.
For $\lambda = 29.5$, no cuts are required, so we reset $\lambda_2$ to 29.5.
Then we recompute $R'$ to be $\{3,28\}$, whose median is $(3+28)/2 = 15.5$. 
For $\lambda = 15.5$, 1 cut is required, so we reset $\lambda_2$ to 15.5. 
We discard the submatrix with all values less than or equal to $\lambda_1$, leaving three submatrices. 
We quarter these submatrices into 12 submatrices as shown in Fig. \ref{fig2p2}.
Of these 12 submatrices, five have all values too large, and two have all values too small.
We discard them, leaving the five submatrices pictured in Fig. \ref{fig2p3}.

On iteration 2, $R= \{17,0,27,3,11,0,16,0,15,0\}$, and $R'= \{3,11,15\}$.
The median of this set is $11$.
For $\lambda = 11$, 3 cuts are required, so we reset $\lambda_1$ to 11.
Then we recompute $R'$ to be $\{15\}$, whose median is $15$.
For $\lambda = 15$, 2 cuts are required, so we reset $\lambda_2$ to 15.
There are no submatrices with all values at least 15.5, and one submatrix with all values at most 11, and we discard the latter.
We quarter the remaining four submatrices, giving 16 submatrices of dimension $1 \times 1$, of which all but the one containing $12$ are either too large or too small.
On iteration 3, $R=\{12\}$, $R'=\{12\}$, and the median is $12$.
For $\lambda = 12$, 3 cuts are required, so we reset $\lambda_1$ to 12.
At this point, all values are discarded, so that the revised $R'$ is empty, and a second selection is not performed on iteration 3.
All submatrices will be discarded from $\cal M$, and {\it PATH0} will terminate with $\lambda_1 = 12$ and $\lambda_2 = 15$, and output $\lambda^*=12$.
\bigskip

\begin{theorem}
\label{thm:2:2}
Algorithm {\it PATH0} finds a max-min partition of a path of $n$
weighted vertices in $O(n \log n)$ time.
\end{theorem}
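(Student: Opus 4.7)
The plan is to combine the structural bounds of Lemma~\ref{lem:2:1} with the cost of an individual feasibility test. First I would observe that, as noted just after the description of {\it FTEST0}, each call to {\it FTEST0} on a path of $n$ vertices runs in $O(n)$ time. Each iteration of the main loop of {\it PATH0} invokes {\it FTEST0} at most twice, and by Lemma~\ref{lem:2:1} the number of iterations is $O(\log n)$, so the total time spent on feasibility testing is $O(n \log n)$. Lemma~\ref{lem:2:1} further bounds the work outside of feasibility tests by $O(n)$, which yields the claimed $O(n \log n)$ overall running time.

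For correctness I would argue that $\lambda^*$ is maintained as a candidate throughout the search. Any feasible value is bounded above by the weight of some contiguous subpath (the lightest subpath in its partition), so $\lambda^*$ occurs as an entry of $M(P)$. These entries are exactly the values covered by the four submatrices placed in $\cal M$ by {\it PATH0\_init\_mat}. I would then verify by induction on the iteration count that two invariants are preserved: (i) $\lambda_1$ is always feasible and $\lambda_2$ is always infeasible, so $\lambda^* \in [\lambda_1, \lambda_2)$; and (ii) every entry of $M(P)$ lying in $(\lambda_1, \lambda_2)$ is contained in some submatrix of $\cal M$. Invariant (i) follows directly from the update rule in {\it PATH0\_test\_val}, and invariant (ii) holds because {\it PATH0\_update\_mat} discards a submatrix only when all of its entries lie outside the current open interval, while quartering preserves the multiset of entries. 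When $\cal M$ finally becomes empty, $(\lambda_1, \lambda_2)$ contains no entry of $M(P)$, so no feasible value strictly larger than $\lambda_1$ can exist and we conclude $\lambda^* = \lambda_1$.

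The main obstacle is minor but worth mentioning: within a single iteration the algorithm may shrink $(\lambda_1, \lambda_2)$ twice, and one must check that the second selection recomputes $R'$ against the already-tightened interval, so that the second test value is genuinely in the current open interval and no candidate value still in play is mistakenly resolved. Once this point is handled, the argument collapses to the accounting given in the first paragraph, and Lemma~\ref{lem:2:1} does all of the heavy lifting.
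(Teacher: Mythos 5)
Your proof is correct and follows essentially the same route as the paper's: invoke Lemma~\ref{lem:2:1} to bound both the number of iterations (hence feasibility tests) and the non-test overhead, and argue correctness from the facts that $\lambda^*$ is an entry of $M(P)$ and that {\it PATH0\_update\_mat} only discards values outside $(\lambda_1,\lambda_2)$. One small imprecision: ``any feasible value is bounded above by the weight of some contiguous subpath'' does not by itself give that $\lambda^*$ \emph{equals} such an entry; you also need that the weight of the lightest part in a witnessing partition is itself feasible and hence, by maximality of $\lambda^*$, equal to $\lambda^*$ --- but this is a one-line fix and the conclusion stands.
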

\begin{proof}
Correctness follows from the correctness of {\it FTEST0}, from the fact that all possible candidates for $\lambda^*$ are included in $M(P)$, and from the fact that each value discarded is either at most $\lambda_1$ or at least $\lambda_2$.

By Lemma \ref{lem:2:1}, {\it PATH0} will take $O(n)$ total time, exclusive of the feasibility tests, and will produce a sequence of $O( \log n )$ values to be tested.
It follows that the total time for all feasibility tests is $O(n \log n)$.
\end{proof}

The time for {\it PATH0} corresponds to the times achieved by Megiddo and Cole
for the path problem.
We show how to do better in Section \ref{sec:path}.

We next describe a simple approach {\it TREE0} to the max-min problem on a tree.
We first define an {\it edge-path-partition} of a tree
rooted at a vertex of degree 1.
Partition the edges of the tree into paths,
where a vertex is an endpoint of a path
if and only if it is of degree not equal to 2 with respect to the tree.
Call any path in an edge-path-partition
that contains a leaf in the tree a {\it leaf-path}.
As an example, consider the vertex-weighted tree in Fig. \ref{tfig2p3}(a).
Fig. \ref{tfig2p3}(b) shows the edge-path-partition for $T$.
There are 7 paths in the partition, as shown.
Four of these paths are leaf-paths.

\begin{figure}[thb]
\begin{center}
\includegraphics[scale=0.9]{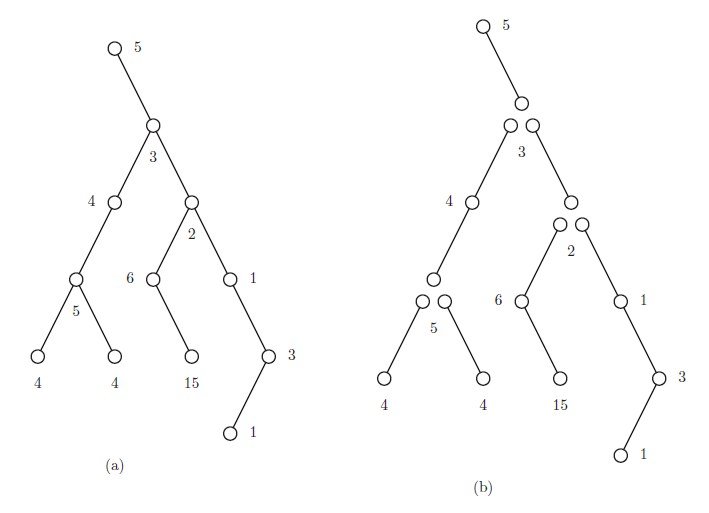}
\end{center}
\caption{\small An edge-path-partition for $T$}
\label{tfig2p3}
\end{figure}

\begin{figure}[thb]
\begin{center}
\includegraphics{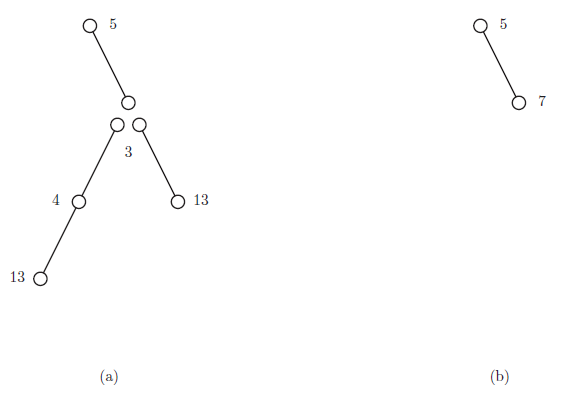}
\end{center}
\caption{\small The edge-path-partition of the resulting tree after all leaf-paths are deleted}
\label{tfig2p4}
\end{figure}

We now proceed with the approach for the tree.
Below are the three routines 
{\it TREE0\_init\_mat},
{\it TREE0\_test\_val},
and {\it TREE0\_update\_mat}.
The basic idea is to perform the search first on the leaf-paths,
and thus determine which edges in the leaf-paths should be cut.
When no search value on a leaf-path is contained in the open interval
$(\lambda_1, \lambda_2)$,
we prune the tree and repeat the process. 
We use the straightforward feasibility test described earlier.

The determination of cuts and pruning of the tree proceeds as follows.
If $T$ contains more than one leaf, do the following.
For each leaf-path $P_j$,
infer the cuts in $P_j$ such that each component in turn going up in $P_j$,
except the highest component on $P_j$, has total weight
as small as possible but greater than $\lambda_1$.
Delete all vertices beneath the top vertex of each leaf-path $P_j$,
and add to the weight of the top vertex in $P_j$
the weight of the other vertices in the highest component of $P_j$.
This leaves a smaller tree in which all leaf-paths
in the original tree have been deleted.
The smaller tree has at most half of the number of leaves of $T$.
Reset $T$ to be the smaller tree, and $k$ to be
the number of cuts remaining to be made in this tree.

\bigskip
\sspace
\noindent
{\it TREE0\_init\_mat}:\vspace{.05in}\\
$\T $ Initialize $T$ to be the tree rooted at a vertex of degree 1.\\
$\T $ Concatenate the leaf-paths of $T$ together, yielding path $P'$.\\
$\T $ Split sorted matrix $M(P')$ for the path $P'$ into 4 square submatrices. \\
$\T $ Let $\cal M$ be the set containing these four square submatrices.

\bigskip
\noindent
{\it TREE0\_test\_val}:\vspace{.05in}\\
$\T $ (Identical to {\it PATH0\_ident\_val}, \\
$\T \T$ except that {\it FTEST0} is called with argument $T$ rather than $P$).

\bigskip
\noindent
{\it TREE0\_update\_mat}:\vspace{.05in}\\
$\T$ Call {\it PATH0\_update\_mat} on $\mathcal{M}$.\\
$\T$ $\IF$ ${\cal M}$ is empty and $T$ is not a path \\
$\T$ $\TN$ \\
$\T \T$ $\FO$ each leaf-path of $T$ $\DO$ \\
$\T \T \T$ Determine cuts on the leaf-path, and decrease $k$ accordingly.\\
$\T \T \T$ Add to the weight of the top vertex the total weight of the\\
$\T \T \T \T$ other vertices in the highest component of the leaf-path.\\
$\T \T \T$ Delete from $T$ all vertices in the leaf-path except the top vertex.\\
$\T \T$ $\EF$ \\
$\T \T$ Concatenate the leaf-paths of $T$ together, yielding new path $P'$.\\
$\T \T$ Split sorted matrix $M(P')$ for path $P'$ into 4 square submatrices. \\
$\T \T$ Let $\cal M$ be the set containing these four square submatrices.\\
$\T$ $\EI$ 
 
\dspace
\bigskip

As an example we consider the max-min problem on the tree shown
in Fig. \ref{tfig2p3}(a), with $k=3$.
In the initialization,
four leaf-paths are identified and concatenated together,
giving the path $P' = 4, 5, 4, 5, 15, 6, 2, 1, 1, 3, 2$.
Once all search values associated with $P'$ are resolved,
$\lambda_1 = 10$ and $\lambda_2 = 13$. 
We place one cut between the vertices of weight 15 and 6, and reset $k$ to 2.
We add the weights of the two leaves of weight 4 to the weight of their parent, giving it weight 13.
We add the weights of all descendants of the vertex with weight 2, except the weight of 15, to the vertex of weight 2, giving it a weight also of 13.
Then we delete all edges in the leaf-paths. 
Fig. \ref{tfig2p4}(a) shows the edge-path-partition of the resulting tree.
There are two leaf-paths in this partition, with vertex weights 13, 4, and 3 on one, and 13 and 3 on the other.
We form the path $P' = 13, 4, 3, 13, 3$. 
Once we have resolved all search values associated with $P'$, we still have $\lambda_1 = 10$ and $\lambda_2 = 13$.
We can then place two cuts, above each of the vertices of weight 13,
and reset $k$ to 0. 
We add the weight of the vertex of weight 4 to its parent, giving it weight 7.
Then we delete all edges in the leaf-paths.
The edge-path-partition of the resulting tree is shown in Fig. \ref{tfig2p4}(b).
There is just one leaf-path in this partition,
with vertex weights 5 and 7.
Once all search values associated with $P'$ are resolved,
we have $\lambda_1 = 12$ and $\lambda_2 = 13$.
Since the tree consists of a single path, the algorithm then terminates with $\lambda^*=12$.
\bigskip

\begin{theorem}
\label{thm:2:3}
Algorithm {\it TREE0} finds a max-min partition of a tree of $n$
weighted vertices in $O(n (\log n)^2)$ time.
\end{theorem}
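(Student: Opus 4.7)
The plan is to combine Lemma~\ref{lem:2:1} (for each invocation of the PATH0-style search inside {\it TREE0\_update\_mat}) with a phase analysis that bounds the number of pruning rounds. I would partition the execution of {\it TREE0} into \emph{phases}, where a single phase consists of the iterations of the main while-loop between two successive rebuildings of $\mathcal{M}$ from a freshly concatenated $P'$. Inside a phase, the algorithm is essentially running {\it PATH0} on the current $P'$, so Lemma~\ref{lem:2:1} applies: it uses $O(\log |P'|)$ iterations and $O(|P'|)$ total work exclusive of feasibility tests. The only difference from {\it PATH0} is that each feasibility test is on the current (pruned) tree rather than on $P'$, so each test costs $O(n)$ by the analysis of {\it FTEST0}.

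To bound the number of phases, I would argue that each phase at least halves the number of leaves of the current $T$: every leaf-path is entirely deleted except for its top vertex, and two sibling leaf-paths share the same top vertex, which becomes a leaf (or disappears) in the next tree. Since the starting tree has at most $n$ leaves, there are at most $\lceil \log n \rceil$ phases. I would also note that the total size of all the $P'$'s across phases is $O(n)$, because every vertex appears in some leaf-path in at most one phase before being either deleted or absorbed into its parent's weight.

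Putting these together, the work per phase is $O(|P'|) + O(\log |P'|) \cdot O(n) = O(|P'|) + O(n \log n)$, and summing over $O(\log n)$ phases yields $O(n) + O(n (\log n)^2) = O(n (\log n)^2)$ total time. Correctness I would get along the same lines as Theorem~\ref{thm:2:2}: each value discarded from $\mathcal{M}$ is certifiably outside $(\lambda_1,\lambda_2)$; and when all search values in the current $P'$ have been resolved, it is safe to commit the induced leaf-path cuts (cutting so that each component going up has weight minimally exceeding $\lambda_1$) and to absorb the residual weight into the top vertex, because no threshold in the remaining search can lie in the resolved interval, so the optimal $\lambda^*$ is unaffected by the prune.

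The main obstacle I anticipate is the second point: carefully justifying that the ``commit-and-prune'' step at the end of a phase cannot destroy the optimum. Specifically, I would need to argue that for any feasible $\lambda \in (\lambda_1,\lambda_2)$ that survives the phase, the greedy bottom-up cuts on each leaf-path forced by $\lambda_1$ are compatible with some optimal set of $k$ cuts, so that replacing each leaf-path by a single top vertex whose weight equals the residual of the highest component does not change which $\lambda$ pass the feasibility test in the reduced tree. Everything else is a direct bookkeeping calculation using Lemma~\ref{lem:2:1} and the halving of the leaf count.
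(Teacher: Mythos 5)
Your proposal matches the paper's proof: the same phase decomposition into successive $P'$ rebuildings, the same use of Lemma~\ref{lem:2:1} to get $O(\log n)$ feasibility tests per phase at $O(n)$ each, and the same halving-of-leaves argument to bound the number of phases by $O(\log n)$, multiplying out to $O(n(\log n)^2)$. The correctness point you flag as a potential obstacle (that committing greedy cuts on resolved leaf-paths cannot destroy the optimum) is handled in the paper by the same observation you make, namely that any future test value $\lambda$ will satisfy $\lambda > \lambda_1$, so the inferred cuts are consistent with every value still under consideration.
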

\begin{proof}
For correctness of the tree algorithm,
note that $\lambda_1$ always corresponds to a feasible value,
that all possible values resulting from leaf paths
are represented in $M(P')$,
and that the cuts are inferred on leaf-paths assuming (correctly)
that any subsequent value $\lambda$ to be tested
will have $\lambda > \lambda_1$.

We analyze the time as follows.
By Lemma \ref{lem:2:1}, resolving the path $P'$ will
use $O(\log n)$ feasibility tests,
and time exclusive of feasibility tests of $O(n)$.
Since each feasibility test takes $O(n)$ time,
the feasibility tests will use $O(n \log n)$ time.
Since resolving the path $P'$ will at least halve the number of leaves
in the tree, the number of such paths until the final version of $P'$ is $O(\log n)$.
Thus the total time to partition the tree by this method
is $O(n (\log n)^2)$.
\end{proof}

The time for {\it TREE0}
beats the time of $O(n(\log n)^3)$ for Megiddo's algorithm,
and matches the time of $O(n(\log n)^2)$ for Cole's algorithm.
We show how to do better for the max-min problem in Section \ref{sec:tree}.
\bigskip

\section{Partitioning a Path}
\label{sec:path}

In this section we present an optimal algorithm to perform
parametric search on a graph that is a path of $n$ vertices. 
The algorithm follows the paradigm of parametric search by repeatedly performing 
feasibility tests over potential optimal values, as it gathers 
information so that subsequent feasibility tests can be 
performed increasingly faster. The improvement in speed of the subsequent 
feasibility tests is enough so that the entire running time is linear.
Our discussion focuses on the max-min problem;
at the end of the section we identify the changes necessary
for the min-max problem.
Throughout this section we assume that the vertices of any path or subpath
are indexed in increasing order from the start to the end of the path.

We first consider the running time of {\it PATH0},
to determine how the approach might be accelerated.
All activities except for feasibility testing use a total of $O(n)$ time.
Feasibility testing can use a total of $\Theta (n \log n)$ time in the worst case,
since there will be $\Theta (\log n)$ values to be tested in the worst case,
and each feasibility test takes $\Theta (n)$ time.
It seems unlikely that one can reduce the number of tests that need to be made,
so then to design a linear-time algorithm,
it seems necessary to design a feasibility test that will quickly begin to take $o(n)$ time.
We show how to realize such an approach.

We shall represent the path by a partition into subpaths, each of which can be searched in time proportional to the logarithm of its length. 
Each such subpath will possess a property
that makes feasibility testing easier.
Either the subpath will be singular or resolved.
A subpath $P'$ is {\it singular} if it consists of one vertex,
and it is {\it resolved} if no value in $M(P')$ falls in the interval
$(\lambda_1, \lambda_2 )$.
If a subpath is resolved,
then the position of any one cut determines the positions
of all other cuts in the subpath,
irrespective of what value of $\lambda$ we choose from within
$(\lambda_1, \lambda_2 )$.
If we have arranged suitable data structures when the subpath becomes resolved,
then we do not need a linear scan of it.

To make the representation simple, we restrict the subpaths in the partition to have lengths that are powers of 2.
Each subpath will consist of vertices whose indices are
$(j-1)2^i+1, \ldots ,j\:2^i$ for integers $j>0$ and $i \geq 0$.
Initially the partition will consist of $n$ singular subpaths.
Each nonsingular subpath will have $i>0$.
When introduced into the partition,
each such subpath will replace its two {\it constituent subpaths},
the first with indices $(2j-2)2^{i-1}+1, \ldots ,(2j-1)2^{i-1}$,
and the second with indices $(2j-1)2^{i-1}+1, \ldots ,(2j)2^{i-1}$.

We represent the partition of path $P$ into the subpaths with 
three arrays $last[1..n]$, $ncut[1..n]$ and $next[1..n]$. 
Consider any subpath in the partition,
with first vertex $v_f$ and last vertex $v_t$.
Let $v_l$ be an arbitrary vertex in the subpath.
The array $last$ identifies the end of a subpath,
given the first vertex of a subpath.
Thus $last(l)=t$ if $l=f$ and is arbitrary otherwise.
Given a cut on the subpath,
the array $next$ identifies,
in constant time, a cut further on in that subpath.
The array $ncut$ identifies the number of cuts skipped
in moving to that further cut.
Let $w(l,t)$ be the sum of the weights of vertices $v_l$ through $v_t$.
If $w(l,t) < \lambda_2$,
then $next(l)=null$ and $ncut(l)=0$.
Otherwise, $next(l) \geq l$ is the index of the last vertex before a cut,
given that $l=1$ or $v_l$ is the first vertex after a cut.
Then $ncut(l)$ is the number of cuts after $v_l$ up to and including the one
following $v_{next(l)}$.
We assume that the last cut on a subpath will leave
a (possibly empty) subset of vertices of total weight less than $\lambda_2$.
(Note that the last cut on the path as a whole must then be ignored.)

Given the partition into subpaths, we describe feasibility test {\it FTEST1}.
Let $\lambda$ be the value to be tested,
with $\lambda_1 < \lambda < \lambda_2$.
For each subpath, we use binary search to find the first cut,
and then, follow $next$ pointers and add $ncut$ values to
identify the number of cuts on the subpath.
When we follow a path of $next$ pointers,
we will compress this path.
This turns out to be a key operation as we consider
subpath merging and its effect on feasibility testing.
(Note that the path compression makes
{\it FTEST1} a function with side effects.)\\

\sspace
\noindent
{\bf func} {\it FTEST1} ({\bf path} $P$, {\bf integer} $k$, {\bf real} $\lambda$){\vspace{.05in}\\
$\T $ $f \ASG 1$\\
$\T $ $numcut \ASG -1$; $remainder\ASG 0$\\
$\T $ $\WH$ $f \leq n$ $\DO$ /* search the next subpath: */\\
$\T \T $ $t \ASG last(f)$\\
$\T \T $ $\IF$ $remainder + w(f,t) < \lambda$\\
$\T \T $ $\TN$ $remainder \ASG remainder + w(f,t)$\\
$\T \T $ $\EL$ \\
$\T \T \T $ $numcut \ASG numcut+1$ \\
$\T \T \T $ Binary search for a smallest $r$ so that $w(f,r) + remainder \geq \lambda$. \\
$\T \T \T $ $\IF$ $r<t$ \\
$\T \T \T $ $\TN$ \\
$\T \T \T \T $ $(s,sumcut) \ASG$ {\it search\_next\_path} $(r,t)$\\
$\T \T \T \T $ $numcut \ASG numcut+sumcut$ \\
$\T \T \T \T $ {\it compress\_next\_path} $(r,s,t,sumcut)$ \\
$\T \T \T $ $\EI$  \\
$\T \T \T $ $remainder \ASG w(s+1,t)$ \\
$\T \T $ $\EI$  \\
$\T \T $ $f \ASG t+1$  \\
$\T $ $\EW$ \\
$\T $ $\IF$ $numcut \geq k$ $\TN$ {\bf return}(``$lower$'') $\EL$ {\bf return}(``$upper$'') $\EI$ \\
{\bf endfunc}
 
\bigskip
\noindent
{\it search\_next\_path} ({\bf vertex\_index} $l,t$)\vspace{.05in}\\
$\T $ $sumcut \ASG 0$ \\
$\T $ $\WH$ $l < t$ and $next(l+1) \neq null$ \\
$\T \T $ $sumcut \ASG sumcut+ncut(l+1)$ \\
$\T \T $ $l \ASG next(l+1)$ \\
$\T $ $\EW$ \\
$\T $ {\bf return}$(l,sumcut)$ 
 
\bigskip
\noindent
{\it compress\_next\_path} ({\bf vertex\_index} $l,s,t$, {\bf integer} $sumcut$)\vspace{.05in}\\
$\T $ $\WH$ $l < t$ and $next(l+1) \neq null$ \\
$\T \T $ $sumcut \ASG sumcut-ncut(l+1)$ \\
$\T \T $ $ncut(l+1) \ASG ncut(l+1)+sumcut$ \\
$\T \T $ $temp \ASG next(l+1)$ \\
$\T \T $ $next(l+1) \ASG s$ \\
$\T \T $ $l \ASG temp$ \\
$\T $ $\EW$  

\dspace
\bigskip

Use of this feasibility test by itself is not enough to guarantee
a quick reduction in the time for feasibility testing.
This is because there is no assurance that the interval
$(\lambda_1, \lambda_2)$ will be narrowed in a manner
that allows longer subpaths to quickly replace shorter subpaths
in the partition of path $P$.
To achieve this effect,
we reorganize the positive values from $M(P)$
into submatrices that correspond in a natural way to subpaths.
Furthermore, we associate synthetic weights with these submatrices
and use these weights in selecting the weighted median for testing.
The synthetic weights place a premium on resolving first the values from submatrices corresponding to short subpaths.
However, using synthetic weights could mean that we consider many values of small weight repeatedly, causing the total time for selection to exceed $\Theta (n)$.
To offset this effect, we also find the unweighted median, and test this value.
This approach guarantees that we discard at least half of the submatrices'
representatives on each iteration,
so that each submatrix inserted into $\cal M$ need be charged
only a constant for its share of the total work in
selecting values to test.

We now proceed to a broad description of {\it PATH1}.
The basic structure follows that of {\it PARAM\_SEARCH}, in that there will be the routines {\it PATH1\_init\_mat}, {\it PATH1\_test\_val}, and {\it PATH1\_update\_mat}. 
However, we will slip the set-up and manipulation
of the data structures for the subpaths into the routines
{\it PATH1\_init\_mat} and {\it PATH1\_update\_mat}.
Let $large(M)$ be the largest element in submatrix $M$,
and let $small(M)$ be the smallest element in $M$.\\
 
\sspace
\noindent
{\it PATH1\_init\_mat}:\vspace{.05in}\\
$\T $ Initialize $\cal M$ to be empty. \\
$\T $ Call {\it mats\_for\_path}$(P,1,n)$ to insert submatrices of $M(P)$ into $\cal M$. \\
$\T $ $\FO$ $l \ASG 1$ $\TO$ $n$ $\DO$ $last(l) \ASG l$; $next(l) \ASG 0$; $ncut(l) \ASG 0$ $\EF$
 
\bigskip
\noindent
{\it PATH1\_test\_val}:\vspace{.05in}\\
$\T $ $R \ASG \emptyset$ \\
$\T $ $\FO$ each $M$ in $\cal M$ $\DO$ \\
$\T \T $ $\IF$ $large(M) < \lambda_2$ \\
$\T \T $ $\TN$ Insert $large(M)$ into $R$ with synthetic weight $w(M)/4$. $\EI$ \\
$\T \T $ $\IF$ $small(M) > \lambda_1$ \\
$\T \T $ $\TN$ Insert $small(M)$ into $R$ with synthetic weight $w(M)/4$. $\EI$ \\
$\T $ $\EF$ \\
$\T$ Select the (synthetic) weighted median element $\lambda$ in $R$. \\
$\T$ $\IF$ {\it FTEST1}$(P,k,\lambda ) = ``lower$'' $\TN$ $\lambda_1 \ASG \lambda$ $\EL$ $\lambda_2 \ASG \lambda$ $\EI$ \\
$\T$ Remove from $R$ any values no longer in $(\lambda_1,\lambda_2)$. \\
$\T$ $\IF$ $R$ is not empty \\
$\T$ $\TN$ \\
$\T \T$ Select the unweighted median element $\lambda '$ in $R$. \\
$\T \T$ $\IF$ {\it FTEST1}$(P,k,\lambda ') = ``lower$'' $\TN$ $\lambda_1 \ASG \lambda '$ $\EL$ $\lambda_2 \ASG \lambda '$ $\EI$ \\
$\T$ $\EI$ 
 
\bigskip
\noindent
{\it PATH1\_update\_mat}:\vspace{.05in}\\
$\T $ $\WH$ there is an $M$ in $\cal M$ such that $small(M) \geq \lambda_2$ or $large(M) \leq \lambda_1$ \\
$\T \T $ or $small(M) \leq \lambda_1 \leq \lambda_2 \leq large(M)$ $\DO$ \\
$\T \T $ $\IF$ $small(M) \geq \lambda_2$ or $large(M) \leq \lambda_1$ \\
$\T \T $ $\TN$ \\
$\T \T \T $ Delete $M$ from $\cal M$. \\
$\T \T \T $ $\IF$ this is the last submatrix remaining for a subpath $P'$ \\
$\T \T \T $ $\TN$ $glue\_paths(P')$ \\
$\T \T \T $ $\EI$ \\
$\T \T $ $\EI$ \\
$\T \T $ $\IF$ $small(M) \leq \lambda_1$ and $large(M) \geq \lambda_2$ \\
$\T \T $ $\TN$ Split $M$ into four square submatrices, each of synthetic weight $w(M)/8$. \\
$\T \T $ $\EI$ \\
$\T $ $\EW$
 
\dspace
\bigskip

\begin{figure}[thb]
\begin{center}
\includegraphics{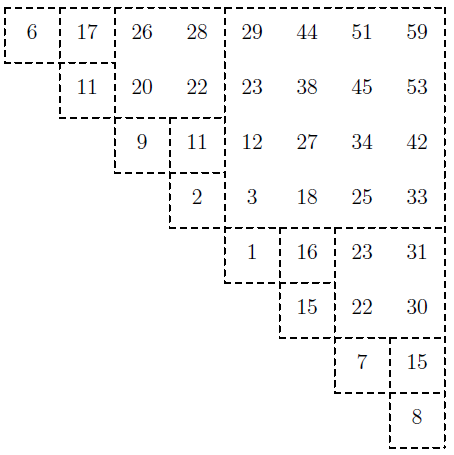}
\end{center}
{\caption{\small Explicit representation of initial submatrices for $M(P)$ in {\it PATH1}}\label{fig3p1}}
\end{figure}

Below is the procedure {\it mats\_for\_path},
which inserts submatrices of the appropriate previously discussed synthetic weight into $\cal M$.
A call with arguments $f$ and $t$
will generate submatrices for all required subpaths of
a path containing the vertices with indices $f,f+1,\ldots ,t$.
(In this section, we assume that $f=1$ and $t=n$,
but we state the procedure in this form so that it can be used
in the next section too.)
The submatrices for the matrix $M(P)$ in Fig.~\ref{fig1}
are shown in Fig.~\ref{fig3p1}. 
Every subpath $P'$ that will appear in some partition of $P$
is initialized with $cleaned(P')$ and $glued(P')$ to {\bf false},
where $cleaned(P')$ indicates whether or not all values in
the submatrix associated with $P'$ are outside the interval
$(\lambda_1,\lambda_2)$,
and $glued(P')$ indicates whether or not all values in $M(P')$
are outside the interval $(\lambda_1,\lambda_2)$. \\

\sspace
\noindent
{\bf proc} {\it mats\_for\_path} ({\bf path} $P$, {\bf integer} $f,t$){\vspace{.05in}\\
$\T $ $size \ASG 1$ \\
$\T $ $w \ASG 4n^4$ /* synthetic weight for $1\times1$ submatrices */\\
$\T $ $\WH$ $f\leq t$ $\DO$ \\
$\T \T $ $\FO$ $i \ASG f$ $\TO$ $t$ $\BY$ $size$ $\DO$ \\
$\T \T \T $ Insert the succinct description of submatrix $[i\:..\:(i\!+\!\lceil size/2\rceil\!-\!1),$ \\
$\T \T \T \T $ $(i\!+\!\lceil size/2\rceil )..\:(i\!+\!size\!-\!1)]$ for $M(P)$ into $\cal M$ with synthetic weight $w$. \\
$\T \T \T $ Let $P'$ be the subpath of $P$ whose vertices have indices \\
$\T \T \T \T $ $i,\ldots ,i+size-1$. \\
$\T \T \T $ $cleaned(P') \ASG \FA$; $glued(P') \ASG \FA$ \\
$\T \T \T $ $\EF$ \\
$\T \T $ $size \ASG size*2$ \\
$\T \T $ $w \ASG w/2$ /* smaller synthetic weight for increased size of submatrix */\\
$\T \T $ $f \ASG size*\lceil (f-1)/size\rceil +1$ \\
$\T \T $ $t \ASG size*\lfloor t/size\rfloor$ \\
$\T $ $\EW$  \\
{\bf endproc}\\

\dspace
\bigskip

We next describe the procedure {\it glue\_paths}, which checks to see if two constituent subpaths can be combined together.
Let $P'$ be any subpath of weight at least $\lambda_2$, and let $f'$ and $t'$ be the indices of the first and last vertices in $P'$, resp.
Let the $\lambda$-{\it prefix} of $P'$, designated $\lambda${\it pref}$\,(P')$, be vertices $v_{f'},\cdots ,v_l$ in $P'$ where $l$ is the largest index such that $w(f',l) < \lambda_2$.
Let the $\lambda$-{\it suffix} of $P'$, designated $\lambda${\it suff}$\,(P')$, be vertices $v_l,\cdots ,v_{t'}$ in $P'$ where $l$ is the smallest index such that $w(l,t') < \lambda_2$. 
To glue two subpaths $P_2$ and $P_3$ together into a subpath $P_1$, we must have $(glued(P_2)$ {\bf and} $glued(P_3)$ {\bf and} $cleaned(P_1))$. 
Procedure {\it glue\_paths} sets the $next$ pointers from vertices in $\lambda${\it suff}$\,(P_2)$ to vertices in $\lambda${\it pref}$\,(P_3)$. \\

\sspace
\noindent
{\bf proc} {\it glue\_paths} ({\bf path} $P_1$){\vspace{.05in}\\
$\T $ $cleaned(P_1) \ASG \TR$\\
$\T $ $\IF$ $P_1$ has length 1\\
$\T $ $\TN$ \\
$\T \T $ $glued(P_1) \ASG \TR$ \\
$\T \T $ Let $l$ be the index of the vertex in $P_1$. \\
$\T \T $ $\IF$ $w(l,l) \geq \lambda_2$ $\TN$ $next(l) \ASG l$; $ncut(l) \ASG 1$ $\EI$ \\
$\T \T $ Reset $P_1$ to be subpath of which $P_1$ is now a constituent subpath. \\
$\T $ $\EI$ \\
$\T $ Let $P_2$ and $P_3$ be the constituent subpaths of $P_1$. \\
$\T $ $\WH$ $glued(P_2)$ and $glued(P_3)$ and $cleaned(P_1)$ and $P_1 \neq P$ $\DO$ \\
$\T \T $ $glued(P_1) \ASG \TR$\\
$\T \T $ Let $f_2$ and $t_2$ be resp. the indices of the first and last vertices in $P_2$. \\
$\T \T $ Let $f_3$ and $t_3$ be resp. the indices of the first and last vertices in $P_3$. \\
$\T \T $ $last(f_2) \ASG t_3$\\
$\T \T $ $\IF$ $w(f_2,t_3) \geq \lambda_2$\\
$\T \T $ $\TN$ \\
$\T \T \T $ $\IF$ $w(f_2,t_2) < \lambda_2$ \\
$\T \T \T $ $\TN$ $f \ASG f_2$ \\
$\T \T \T $ $\EL$ /* initialize $f$ to the front of $\lambda${\it suff}$\,(P_2)$ */ \\
$\T \T \T \T $ $f \ASG t_2$\\
$\T \T \T \T $ $\WH$ $w(f-1,t_2) < \lambda_2$ $\DO$ $f \ASG f-1$ $\EW$ \\
$\T \T \T $ $\EI$ \\
$\T \T \T $ $t \ASG f_3$ \\
$\T \T \T $ $\WH$ $f \leq t_2$ and $w(f,t_3) \geq \lambda_2$ $\DO$ \\
$\T \T \T $ /* set $next$ pointers for $\lambda${\it suff}$\,(P_2)$ */ \\
$\T \T \T \T $ $\WH$ $w(f,t) < \lambda_2$ $\DO$ $t \ASG t+1$ $\EW$ \\
$\T \T \T \T $ $ncut(f) \ASG 1$ \\
$\T \T \T \T $ $next(f) \ASG t$ \\
$\T \T \T \T $ $f \ASG f+1$ \\
$\T \T \T $ $\EW$ \\
$\T \T $ $\EI$  \\
$\T \T $ Reset $P_1$ to be subpath of which $P_1$ is now a constituent subpath. \\
$\T \T $ Let $P_2$ and $P_3$ be the constituent subpaths of $P_1$. \\
$\T $ $\EW$ \\
{\bf endproc}\\

\dspace

It would be nice if procedure {\it glue\_paths},
after setting pointers in $\lambda${\it suff}$\,(P_2)$,
would perform pointer jumping,
so that $next$ pointers for vertices in $\lambda${\it pref}$\,(P_1)$
would point to vertices in $\lambda${\it suff}$\,(P_1)$. 
Unfortunately, it is not apparent how to incorporate pointer jumping into {\it glue\_paths} without using $O(n)$ time over all calls in the worst case. 
We thus opt for having {\it glue\_paths} do no pointer jumping,
and instead we do pointer jumping under the guise of path
compression in {\it FTEST1}.
We use an argument based on amortization to show that this works well.
\bigskip

\begin{lemma}
\label{lem:3:1}
Let $P$ be a path of $n$ vertices.
All calls to {\it glue-paths} will take amortized time of $O(n)$,
and {\it FTEST1} will search each subpath that is glued
in amortized time proportional to the logarithm of its length.
\end{lemma}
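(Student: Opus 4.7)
The plan is a coordinated amortized analysis, using the observation that each vertex of $P$ can transition from ``unresolved'' (no next pointer installed) to ``resolved'' (next pointer installed) at most once, and then bounding both pieces of work by this single count of transitions plus lower-order bookkeeping.

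First I would analyze a single call \emph{glue\_paths}$(P_1)$ that merges constituents $P_2$ and $P_3$. Its work decomposes into: (i) the backward scan that locates the front of $\lambda\text{suff}(P_2)$, (ii) the outer loop that installs one new $next$ pointer and one $ncut$ value for each vertex of $\lambda\text{suff}(P_2)$, and (iii) the inner loop that advances the cursor $t$ through $P_3$. The cursor $t$ is monotone nondecreasing within a call (incrementing $f$ by one decreases $w(f,t)$ by exactly $w(f-1,f-1)$, so $t$ can only need to move forward), so the cost of (iii) within one call is bounded by the number of outer iterations plus the total forward motion of $t$. Each outer iteration permanently installs a fresh $next(f)$ for a vertex $f$ that, after this call, lies in the interior of $P_1$ rather than in $\lambda\text{suff}(P_1)$; thus $f$ can never again be processed by an outer loop. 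This furnishes the key charging: the work of (ii) is paid for by a distinct vertex per call, and each vertex pays only once over the life of the algorithm, giving $O(n)$ total for (ii).

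Next I would charge the cursor-advance work of (iii) to the newly installed next pointers. Each forward step of $t$ from some position $t^-$ to $t^-{+}1$ is paid for by the eventual cut that will be recorded with destination $t^-{+}1$ or later; since each vertex of $P_3$ can be the destination of at most one such installation per merge, and the total number of installations across all calls is at most $n$ by the argument above, the cumulative forward motion of $t$ is also $O(n)$. The backward scan (i) requires a separate argument: the length of $\lambda\text{suff}(P_2)$ walked here is at most the length scanned by the outer loop in (ii), so it is absorbed by the same charging scheme. Combining these pieces, the total time of all \emph{glue\_paths} calls is $O(n)$.

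For the per-subpath $O(\log |P'|)$ amortized search bound in \emph{FTEST1}, I would invoke a standard path-compression argument. After a subpath $P'$ of length $2^i$ has been fully glued, it was assembled by $i$ levels of merges, and each merge contributed one contiguous chain of $next$ pointers across some $\lambda$-suffix. Consequently, the uncompressed $next$-chain traversed by \emph{search\_next\_path} on $P'$ has depth at most $i = \log |P'|$ on its first visit. \emph{compress\_next\_path} then redirects every visited pointer to the terminal vertex $s$, so subsequent traversals descending into the same region take $O(1)$. A telescoping argument over levels shows that $j$ searches of a glued subpath $P'$ pay a total of $O(j + \log |P'|)$ pointer-follows, which is $O(\log |P'|)$ amortized per search. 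The main obstacle I anticipate is cleanly decoupling the two charging schemes, since next pointers installed by \emph{glue\_paths} and subsequently compressed by \emph{FTEST1} must not be re-created by a later \emph{glue\_paths} call; this is precisely the content of the observation that a vertex with $next(f)$ installed is never revisited by any outer loop of \emph{glue\_paths}, so the two amortizations are non-interacting.
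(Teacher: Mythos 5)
Your high-level intuition --- that each vertex acquires a $next$ pointer at most once, and that path compression in \emph{FTEST1} absorbs the rest --- points in the right direction, and in fact matches the spirit of the paper's two-potential (``glue-credit'' / ``jump-credit'') scheme. However, there are concrete gaps in both halves of your charging argument.

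For the cursor work (your piece (iii)), the charging is not tight. You charge each forward step of $t$ to ``the eventual cut with destination $t^-{+}1$ or later,'' and then appeal to the fact that only $O(n)$ installations happen in total. But a single installation $next(f)\leftarrow t$ can absorb an unbounded number of forward steps of $t$ (all the steps taken while the inner loop advances $t$ for that one $f$), and the same vertex of $P_3$ can be a destination again in a later merge (once $P_3$ has been absorbed into a larger subpath whose $\lambda$-prefix still contains it). So ``each vertex of $P_3$ is a destination at most once per merge'' does not bound the cumulative cursor motion. What rescues the bound is an observation you do not use: the two constituents $P_2,P_3$ always have equal length, and the cursor never leaves $P_3$, so when the light constituent is $P_m$ the cursor advance is at most $|P_3|=|P_m|$, chargeable to the vertices of $P_m$ (each such vertex plays the role of $P_m$ only once). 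When both constituents are heavy, the paper pays the cursor traversal of $\lambda\mathit{pref}(P_3)$ with glue-credits previously deposited on those vertices, credits that are never needed again because those vertices leave $\lambda\mathit{pref}(P_1)$ permanently. Your framework has no analogue of these carried credits.

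For the \emph{FTEST1} bound, the claim ``assembled by $i$ levels of merges $\Rightarrow$ uncompressed chain depth at most $i$'' does not follow. Each merge installs a whole contiguous block of $next$ pointers (one per vertex of $\lambda\mathit{suff}(P_2)$), and the chain followed by \emph{search\_next\_path} hops across these blocks in a way that is not a simple per-level count; the chain can revisit pointers installed at the same level of assembly. The paper instead models each heavy subpath as a sequence of $\lambda$-regions whose lengths shrink geometrically (each less than twice its predecessor), so that only $O(\log|P'|)$ hops are not covered by jump-credits, with the remaining hops amortized by path compression against those credits. Your telescoping argument is asserted but not grounded in any structural invariant of the $next$-pointer graph, which is precisely what the $\lambda$-region machinery supplies.
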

\begin{proof}
Suppose two subpaths $P_2$ and $P_3$ are glued together to give $P_1$,
where $P_1$ has weight at least $\lambda_2$.
We consider two cases.
First suppose either $P_2$ or $P_3$ has weight at most $\lambda_1$.
Let $P_m$ represent this subpath.
The time for {\it glue\_paths} is in worst case proportional
to the length of $P_m$.
Also, we leave a {\it glue-credit}
on each vertex in the $\lambda${\it pref}$\,(P_1)$ and $\lambda${\it suff}$\,(P_1)$,
and a {\it jump-credit} on each vertex in the $\lambda${\it pref}$\,(P_1)$.
The number of credits will be proportional to the length of $P_m$.
We charge this work and the credits to the vertices of $P_m$,
at a constant charge per vertex.
It is clear that any vertex in $P$ is charged at most once,
so that the total charge to vertices over all calls to {\it glue\_paths}
is $O(n)$.
The second case is when both $P_2$ and $P_3$ have weight at least $\lambda_2$.
In this case, the time for {\it glue\_paths} is proportional to the sum
of the lengths of $\lambda${\it suff}$\,(P_2)$
and $\lambda${\it pref}$\,(P_3)$,
and we use the glue-credits of $P_2$ and $P_3$ to pay for this.
The jump-credits are used by {\it FTEST1} rather than by {\it glue-paths},
and in fact, would present a problem if one tries to use them in {\it glue-paths}, as we discuss below.

Suppose both $P_2$ and $P_3$ have weight at least $\lambda_2$.
Then we might have wanted to have {\it glue-paths} jump the $next$ pointers
so that the $next$ pointer for a vertex in $\lambda${\it pref}$\,(P_2)$
would be reset to point to $\lambda${\it suff}$\,(P_3)$.
The time to reset such pointers would be proportional to the length
of $\lambda${\it pref}$\,(P_2)$.
The jump-credits of $\lambda${\it pref}$\,(P_3)$ could pay for this,
as long as the length of $\lambda${\it pref}$\,(P_2)$
is at most some constant (say 2)
times the length of the $\lambda${\it pref}$\,(P_3)$.
When the length of the $\lambda${\it pref}$\,(P_2)$
is at most twice the length of the $\lambda${\it pref}$\,(P_3)$,
we would not want to jump the pointers.

In general we could view a subpath as containing a sequence of
$\lambda$-{\it regions},
where each $\lambda$-region was once the $\lambda$-prefix of some subpath,
and the length of a $\lambda$-region is less than twice the length
of the preceding $\lambda$-region.
Note that the number of $\lambda$-regions in the sequence could be at most
the logarithm of the length of the subpath.
When gluing two subpaths together,
we could concatenate their sequences of $\lambda$-regions together,
and jump pointers over any $\lambda$-region that gets a predecessor
whose length is not more than twice its length.
Its jump-credits could then be used for this pointer jumping.
Searching a subpath in {\it FTEST1} would then take
time at most proportional to its length.

Of course {\it glue-paths} does not jump pointers.
However,
a lazy form of pointer-jumping is found in the path-compression of {\it FTEST1},
and the same sort of analysis can be seen to apply.
Suppose that {\it FTEST1} follows a pointer to a vertex
that was once in the $\lambda$-prefix of some subpath.
Consider the situation if {\it glue-paths} had jumped pointers.
If that pointer would have been present in the sequence of $\lambda$-regions,
then charge the operation of following that pointer to {\it FTEST1}.
Otherwise, charge the operation of following that pointer
to the jump-credits of the $\lambda$-prefix containing the vertex.
It follows that the number of pointers followed during a search
of a subpath that are not covered by jump-credits is at most
the logarithm of the length of the subpath.
Also, the binary search to find the position of the first cut
on a subpath uses time at most proportional
to the logarithm of the length of the subpath.
\end{proof}

\begin{lemma}
\label{lem:3:2}
Let $P$ be a path of $n$ vertices.
On the $i$-th iteration of the while-loop of {\it PATH1},
the amortized time used by feasibility test {\it FTEST1}
will be $O(i(5/6)^{i/5}\:n)$.
\end{lemma}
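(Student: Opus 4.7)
The plan is to combine Lemma 3.1 with a careful tracking of the collection $\cal M$ of submatrices. By Lemma 3.1, the amortized running time of {\it FTEST1} at iteration $i$ equals $O(\sum_{P' \in \Pi_i} \log L_{P'})$, where $\Pi_i$ is the partition of $P$ into glued-or-singular subpaths just before iteration $i$ and $L_{P'}$ is the length of subpath $P'$. So my target reduces to showing
\[
\sum_{P' \in \Pi_i} \log L_{P'} \;=\; O\!\left(i\, (5/6)^{i/5}\, n\right).
\]

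First I would bound every subpath length by $2^i$. A length-$2^k$ subpath can enter $\Pi_i$ only by gluing two previously glued length-$2^{k-1}$ constituents, and by induction on $k$ each length-$2^k$ glued subpath requires at least $k$ iterations to appear. Hence $\log L_{P'} \le i$ for every $P' \in \Pi_i$, and the target reduces to showing $|\Pi_i| = O((5/6)^{i/5} n)$.

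The main obstacle is bounding $|\Pi_i|$. I would track two quantities: the number $N_i$ of submatrices in $\cal M$ and their total synthetic weight $W_i$, observing that $|\Pi_i| = O(N_i)$ because each subpath in $\Pi_i$ contributes $O(1)$ submatrices to $\cal M$. Each iteration of {\it PATH1} performs a weighted-median test that resolves representatives of total synthetic weight at least $W_i/2$ (by the standard weighted-median reduction), followed by an unweighted-median test that halves the count of remaining unresolved representatives. The subsequent update step discards fully-resolved submatrices and quarters each straddling submatrix into four submatrices of synthetic weight $w/8$ each --- a fourfold multiplicative blow-up in count, but a halving of the total synthetic weight contributed by that submatrix. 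The schedule $w \propto 1/\text{size}$ was chosen precisely so that the weighted median preferentially targets the smaller and more numerous submatrices, whose resolution enables gluing and hence shrinks $|\Pi_i|$.

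The hard part is reconciling the $\times 4$ blow-up from quartering against the $\times \frac{1}{2}$ from each median test, and extracting the exponent $(5/6)^{1/5}$. I would introduce a potential $\Phi_i = \alpha N_i + \beta W_i$ for suitable constants $\alpha,\beta$ and argue that over any window of five consecutive iterations $\Phi$ drops by a factor of $5/6$, with the surplus $1/6$ coming from discards guaranteed by the two median tests beyond what is immediately re-created by the quartering in the update. Iterating this gives $N_i = O((5/6)^{i/5} n)$ and hence $|\Pi_i| = O((5/6)^{i/5} n)$. Multiplying by the per-subpath bound $\log L_{P'} \le i$ from the first step yields the claimed $O(i (5/6)^{i/5} n)$ amortized bound on the time for {\it FTEST1} on iteration $i$.
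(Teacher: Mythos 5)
The outer reduction is fine: by Lemma~\ref{lem:3:1} the amortized cost of {\it FTEST1} at iteration $i$ is $O\bigl(\sum_{P'\in\Pi_i}\log L_{P'}\bigr)$, and $\log L_{P'}\le i$ holds because a length-$2^k$ glued subpath requires at least $k$ gluing steps. The gap is in the claim $|\Pi_i|=O\bigl((5/6)^{i/5}n\bigr)$, and the proposed potential $\Phi_i=\alpha N_i+\beta W_i$ does not deliver it. First, the step $|\Pi_i|=O(N_i)$ is asserted but never justified; the partition $\Pi_i$ is governed by the cleaned/glued status of subpaths, while $N_i$ counts live submatrices, and these are related only indirectly (a subpath is glued only when its own submatrix is cleaned \emph{and} both constituents are glued, recursively). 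Second, $N_i$ has no reason to decay: quartering multiplies the count of a straddling submatrix by 4 while only halving its synthetic weight, so the linear combination $\alpha N_i+\beta W_i$ can easily stay flat or increase on an iteration. Finally, the ``factor $5/6$ over a window of 5 iterations'' is reverse-engineered to match the target rather than derived from the mechanism.

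What the paper actually does is quite different in the middle step. It tracks the \emph{effective synthetic weight} of $\cal M$ and shows it drops by a multiplicative $1/6$ \emph{every} iteration, so it is at most $(4/3)\cdot 4n^5\cdot(5/6)^i$ at iteration $i$. Then it uses a per-length-scale lower bound: a subpath of length $2^j$, quartered down to $1\times1$ submatrices, still carries synthetic weight $\Theta(n^4/2^{4j})$, and there are $n/2^j$ such subpaths, so the weight budget needed to keep \emph{all} of them unresolved is $\Theta(n^5/2^{5j})$. Equating $n^5(5/6)^i\approx n^5/2^{5j}$ gives $2^j\approx(6/5)^{i/5}$ --- this is where both the $1/5$ in the exponent and the base $5/6$ come from (the ``5'' is the exponent in $n^5/2^{5j}$, not a window of 5 iterations). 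A further geometric argument shows that at most a $1/2^{5k}$ fraction of subpaths of length $2^{j-k}$ can remain unresolved, so the partition is dominated by pieces of length $\approx 2^j$, and the total search time is $O\bigl((j/2^j)n\bigr)=O\bigl(i(5/6)^{i/5}n\bigr)$. Your proposal would need to replace the potential-function sketch with this weight-threshold-per-length-scale argument to close the gap.
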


\begin{proof}
We first consider the synthetic weights assigned to submatrices in $\cal M$.
Corresponding to subpaths of lengths $1, 2, 4, \ldots , n$,
procedure {\it mats\_for\_path} creates
$n$ submatrices of size $1 \times 1$,
$n/2$ submatrices of size $1 \times 1$,
$n/4$ submatrices of size $2 \times 2$, and so on,
up through one submatrix of size $n/2 \times n/2$.
The total synthetic weight of the submatrices corresponding to each path length is
$4n^5, n^5, n^5/4, \ldots , 4n^3$, resp.
It follows that the total synthetic weight for submatrices of all path lengths is
less than $(4/3)*4n^5$.

Let $wgt(M)$ be the synthetic weight assigned to submatrix $M$.
Define the {\it effective weight},
denoted {\it eff\_wgt}$(M)$, of a submatrix $M$ in $\cal M$
to be $wgt(M)$ if both its smallest value and largest value
are contained in the interval $(\lambda_1, \lambda_2)$
and $(3/4)wgt(M)$ if only one of its smallest value and largest value
is contained in the interval $(\lambda_1, \lambda_2)$.
Let {\it eff\_wgt}$(\cal M )$ be the total effective weight of all submatrices in $\cal M$.
When a feasibility test renders a value (the smallest or largest)
from $M$ no longer in the interval $(\lambda_1, \lambda_2)$,
we argue that {\it eff\_wgt}$(M)$ is reduced by at least $wgt(M)/4$
because of that value.

If both values were contained in the interval $(\lambda_1, \lambda_2)$,
and one no longer is, then clearly our claim is true.
If only one value was contained in the interval $(\lambda_1, \lambda_2)$,
and it no longer is,
then $M$ is replaced by four submatrices of effective weights
$wgt(M)/8 + wgt(M)/8 + (3/4)wgt(M)/8 + (3/4)wgt(M)/8$ $< wgt(M)/2$,
so that there is a reduction in effective weight by greater than $wgt(M)/4$.
If both values were contained in the interval $(\lambda_1, \lambda_2)$,
and both are no longer in,
then we consider first one and then the other.
Thus every element that was in $(\lambda_1, \lambda_2)$ but no longer is
causes a decrease in {\it eff\_wgt}$(\cal M )$ by an amount
at least equal to its synthetic weight in $R$.
Values with at least half of the total weight in $R$
find themselves no longer in $(\lambda_1, \lambda_2)$.
Furthermore, the total weight of $R$ is at least $1/3$ of
{\it eff\_wgt}$(\cal M )$.
This follows since a submatrix $M$ has either two values in $R$
at a total of $2w(M)/4 = (1/2)w(M)$
or one value at a weight of $w(M)/4 = (1/3)(3w(M)/4)$.
Thus {\it eff\_wgt}$(\cal M )$
decreases by a factor of at least $(1/2)(1/3) = 1/6$ per iteration.

Corresponding to the subpaths of length $2^j$,
there are $n/2^j$ submatrices created by {\it mats\_for\_path}.
If such a matrix is quartered repeatedly until $1 \times 1$ submatrices result,
each such submatrix will have weight $n^4/2^{4j-2}$.
Thus when as little as $(n/2^j)*(n^4/2^{4j-2})$ synthetic weight remains,
all subpaths of length $2^j$ can still be unresolved.
This can be as late as iteration $i$,
where $i$ satisfies
$(4/3)*4n^5*(5/6)^i = n^5/2^{5j-2}$,
or $2^{5j}=(3/4)*(6/5)^i$.
While all subpaths of length $2^j$ can still be unresolved on this iteration,
at most $(1/2*1/2^4)^k = 1/2^{5k}$ of the subpaths of length $2^{j-k}$
can be unresolved for $k= 1, \ldots, j$.
Also, by Lemma~3.1, each subpath can be searched by {\it FTEST1}
in amortized time proportional to the logarithm of its length.
Thus the time to search path $P$ on iteration $i$ is at worst proportional to
$(j/2^j)n(1 + 1/2^5 + 1/2^{10} + \cdots )$,
which is $O((j/2^j)n)$.
From the relationship of $i$ and $j$,
$2^j = (3/4)^{1/5}(6/5)^{i/5}$,
and $j = (1/5)\log(3/4) + (i/5)\log (6/5)$.
The lemma then follows.
\end{proof}

\begin{lemma}
\label{lem:3:3}
The total time for handling $\cal M$ and
performing selection over all iterations of {\it PATH1} is $O(n)$.
\end{lemma}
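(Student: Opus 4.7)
The plan is to show that each iteration of the main while-loop of \textit{PATH1} spends $O(|\mathcal{M}_i|)$ time on building $R$ from $\mathcal{M}$, on the two median selections (weighted and unweighted, both linear in $|R| \le 2|\mathcal{M}|$), and on scanning the straddling or entirely-outside matrices inside \textit{PATH1\_update\_mat}. Thus it suffices to prove $\sum_i |\mathcal{M}_i| = O(n)$, from which the lemma follows immediately.

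A simple geometric sum over \textit{mats\_for\_path} gives $|\mathcal{M}_1| = \sum_{j=0}^{\log n} n/2^j = O(n)$ for the initial collection. Next I would establish the structural invariant that at the start of any iteration $i\ge 2$, every matrix in $\mathcal{M}_i$ contributes at least one representative to $R_i$: the loop of \textit{PATH1\_update\_mat} runs until no $M$ satisfies $small(M) \geq \lambda_2$, $large(M) \leq \lambda_1$, or $small(M) \leq \lambda_1 \leq \lambda_2 \leq large(M)$, so every surviving $M$ has $small(M) > \lambda_1$ or $large(M) < \lambda_2$. Therefore $|\mathcal{M}_i|\le|R_i|\le 2|\mathcal{M}_i|$, and it suffices to bound $\sum_i |R_i|$.

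Now I would exploit the unweighted median test. Because the unweighted median of the in-interval subset of $R$ is tested, at least half of $R_i$ is pushed outside the narrowed interval $(\lambda_1,\lambda_2)$ during iteration $i$. Letting $n_i$ denote the number of new representatives contributed by the submatrices produced by the splits inside \textit{PATH1\_update\_mat} at iteration $i$, this yields the recurrence
\[
|R_{i+1}| \le |R_i|/2 + n_i.
\]
Telescoping gives $\sum_i |R_i| \le 2|R_1| + 2\sum_i n_i$, so the task reduces to showing $\sum_i n_i = O(n)$.

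The hard part will be bounding $\sum_i n_i$ in the presence of cascading splits. A matrix can only be split when it becomes straddling, and the invariant above guarantees that it was non-straddling at the end of the previous iteration, so the transition required one (or both) of its representatives to be resolved. I would therefore charge each new representative to the resolved representative(s) whose removal triggered the ancestor split, and absorb the cascade of further straddling submatrices using the sorted-matrix structure: the row/column orderings force that the upper-right and lower-left quadrants of any split matrix cannot both straddle (since $small(\mbox{UR}) \ge large(\mbox{LL})$ contradicts $\lambda_1 < \lambda_2$), so only a bounded number of the four children can themselves be straddling. Combining this amortized charging with the factor-$1/2$ decrease from the unweighted median gives $\sum_i n_i = O(n)$, hence $\sum_i |\mathcal{M}_i| = O(n)$, establishing the lemma.
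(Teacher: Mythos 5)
Your overall reduction matches the paper's: bound the selection work by the total number of representatives ever inserted into $R$, use the unweighted-median test to amortize the per-iteration cost, and reduce the problem to counting the submatrices ever created. The gap is in that count. You propose to charge each new representative to the resolved representative(s) that triggered the ancestor split, capping the cascade with the observation that the upper-right and lower-left quadrants of a split matrix cannot both straddle $(\lambda_1,\lambda_2)$. That observation is correct but rules out only one of the four children: three of the four quadrants ({\it UL}, {\it LR}, and one of {\it UR} or {\it LL}) can straddle simultaneously, even for the Cartesian matrix $X_1-X_2$. A fan-out of three per level does not give a linear bound --- a single call to {\it PATH1\_update\_mat} could in principle cascade to roughly $m^{\log_2 3}$ new submatrices from one $m\times m$ straddler --- and the factor-$1/2$ decrease from the unweighted median happens once per iteration, not once per split, so it cannot absorb the cascade. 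The per-iteration charging argument therefore does not close.

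The paper instead bounds the total number of splits globally, level by level, using a monotonicity argument that is uniform across the whole matrix and across all of time. Fix $j$ and tile $M(P)$ by $2^j\times 2^j$ submatrices. Along any anti-diagonal of this tiling, every entry of a tile that lies strictly lower-left is at most every entry of a tile that lies strictly upper-right, because the latter's entries are weights of subpaths that contain the former's. Since a split requires $small(M)\le\lambda_1<\lambda_2\le large(M)$, and since $\lambda_1$ is nondecreasing and $\lambda_2$ nonincreasing throughout {\it PATH1}, once one tile on an anti-diagonal straddles, no other tile on that anti-diagonal can ever straddle at any point during the algorithm. There are fewer than $2(n/2^j)$ such anti-diagonals, so at most $2(n/2^j)$ tiles of size $2^j\times 2^j$ are ever split, producing fewer than $8(n/2^j)$ new submatrices at level $j$; summing over $j$ yields $O(n)$ new submatrices and hence $O(n)$ insertions into $R$. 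This global anti-diagonal argument is the ingredient your proposal is missing.
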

\begin{proof}
Using the algorithm of \cite{BFPRT},
the time to perform selection in a given iteration
is proportional to the size of $R$.
Define the {\it effective count}, denoted {\it eff\_cnt}$(M)$,
of a submatrix $M$ in $\cal M$ to be
2 if both its smallest value and largest value
are contained in the interval $(\lambda_1, \lambda_2)$
and 1 if only one of its smallest value and largest value
is contained in the interval $(\lambda_1, \lambda_2)$.
Let {\it eff\_cnt}$(\cal M )$ be the total effective count
of all submatrices in $\cal M$.
Then the size of $R$ equals {\it eff\_cnt}$(\cal M )$.
On any iteration, the result of the feasibility test of $\lambda '$
is to resolve at least half of the values in $R$.

The time for inserting values into $R$ and performing the selections is $O(n)$.
We show this by using an accounting argument, charging 2 credits for each value
inserted into $R$. As $R$ changes, we maintain the invariant that the number
of credits is twice the size of $R$. When $R$ has $k$ elements,
a selection takes $O(k)$ time, paid for by $k$ credits, leaving $k/2$ elements
covered by $k$ credits. Since $n$ elements are inserted into $R$ during the whole of
{\it PATH1}, the time for forming $R$ and performing selections is $O(n)$.

It remains to count the number of submatrices inserted into $\cal M$.
Initially $2n-1$ submatrices are inserted into M. 
For $j = 1, 2, \ldots ,\log n -1$,
consider all submatrices of size $2^j \times 2^j$ that are at some point in $\cal M$. 
A matrix that is split must have its smallest value at most $\lambda_1$ and its largest value at
least $\lambda_2$. However, $M_{i,j} > M_{i-k,j+k}$ for $k > 0$, since the path represented
by $M_{i+k,j-k}$ is a subpath of the path represented by $M_{i,j}$. Hence, for any $2^j\times 2^j$
submatrix that is split, at most one submatrix can be split
in each diagonal going from lower left to upper right. There are fewer than
$2n$ diagonals, so there will be fewer than $2(n/2^j)$ submatrices that are split.
Thus the number resulting from quartering is less than $8(n/2^j)$. Summing
over all $j$ gives $O(n)$ submatrices in $\cal M$ resulting from quartering.
\end{proof}

We illustrate {\it PATH1} on path $P$ in Fig.~\ref{fig1}, with $k=3$.
(This is the same example that we discussed near the end of the previous section.)
The initial submatrices for ${\cal M}$ are shown in Fig.~\ref{fig3p1}.
The submatrix of size $4 \times 4$ has synthetic weight $2^{11}$,
the two submatrices of size $2 \times 2$ have synthetic weight $2^{12}$,
four submatrices of size $1 \times 1$ have synthetic weight $2^{13}$,
and the remaining eight submatrices of size $1 \times 1$ have synthetic weight $2^{14}$.
Initially $\lambda_1 = 0$ and $\lambda_2 = \infty$.
The $last$, $next$ and $ncut$ arrays are initialized as stated.

On the first iteration,
$R$ contains two copies each of
$6,11,9,2,1,15,7,8$ of synthetic weight $2^{12}$,
two copies each of $17,11,16,15$ of synthetic weight $2^{11}$,
$20,28,22,31$ of synthetic weight $2^{10}$,
and $3,59$ of synthetic weight $2^{9}$.
The weighted median of $R$ is 11. 
For $\lambda = 11$, three cuts are required, so we reset $\lambda_1$ to 11.
The revised version of $R$ will then be
$\{15,15,17,17,16,16,15,15,20,28,22,$ $31,59\}$.
The median of this is 17.
For $\lambda '= 17$, one cut is required, so we reset $\lambda_2$ to 17.
The set $\cal M$ is changed as follows. 
We discard all submatrices of size $1 \times 1$ except the one of synthetic weight $2^{14}$ containing $15$, and the two of synthetic weight $2^{13}$ containing $15$ in one and $16$ in the other.
We discard all submatrices of size $2 \times 2$.
We quarter the submatrix of size $4 \times 4$, giving four submatrices each of synthetic weight $2^8$.
We discard three of these submatrices because their values are all too large.
We quarter the remaining submatrix (containing $27,12,18,3$), giving four submatrices each of synthetic weight $2^5$.
We discard all submatrices but the submatrix of size $1 \times 1$ containing $12$.

As a result of submatrix discarding, we render a number of subpaths cleaned and glued.
We clean and glue every subpath of length 1 except $v_6$, the subpaths $v_1,v_2$ and $v_3,v_4$, the subpath $v_1,v_2,v_3,v_4$, and we clean but do not glue the subpath $v_5,v_6,v_7,v_8$.
When we glue subpath $v_6$, we set $next(6)$ to $6$, and $ncut(6)$ to $1$.
When we glue subpath $v_1,v_2$, we set $last(1)$ to $2$, $next(1)$ to $2$, and $ncut(1)$ to $1$.
When we glue subpath $v_3,v_4$, we set $last(3)$ to $4$, but we change no $next$ or $ncut$ values.
When we glue subpath $v_1,v_2,v_3,v_4$, we set $last(1)$ to $4$, $next(2)$ to $3$, and $ncut(2)$ to $1$. 
This completes all activity on the first iteration.

On the second iteration,
$R$ contains two copies of
$15$ of synthetic weight $2^{12}$,
two copies each of $16,15$ of synthetic weight $2^{11}$,
and two copies of $12$ of synthetic weight $2^{3}$.
The weighted median of $R$ is 15.
For $\lambda = 15$, 2 cuts are required,
so we reset $\lambda_2$ to 15.
The revised version of $R$ will then be
$\{12,12\}$.
The median of this is 12.
For $\lambda '= 12$, 3 cuts are required,
so we reset $\lambda_1$ to 12.

All remaining submatrices are discarded from $\cal M$.
When this happens, we clean and glue all remaining subpaths.
When we glue subpath $v_5,v_6$, we set $last(5)$ to $6$, $next(5)$ to $6$, and $ncut(5)$ to $1$.
When we glue subpath $v_7,v_8$, we set $last(7)$ to $8$, $next(7)$ to $8$, and $ncut(7)$ to $1$.
When we glue subpath $v_5,v_6,v_7,v_8$, we set $last(5)$ to $8$,
but we change no $next$ or $ncut$ values.
When we glue subpath $v_1,v_2,v_3,v_4,v_5,v_6,v_7,v_8$, we set $last(1)$ to $8$, $next(3)$ and $next(4)$ to $6$,
and $ncut(3)$ and $ncut(4)$ to $1$.
Since $\cal M$ is empty,
{\it PATH1} will then terminate with $\lambda_1 = 12$, 
$\lambda_2 = 15$, and output $\lambda^*=12$.

Note that we performed no compression of search paths 
on the second iteration.
Suppose for the sake of example that we performed a subsequent search
with $\lambda = 13$.
The initial cut would come after $v_2$,
and $next(3)$ and $next(7)$ would be followed to arrive at $v_8$.
Then we would reset $ncut(3)$ to $2$, and $next(3)$ to $8$.

\begin{theorem}
\label{thm:3:4}
Algorithm {\it PATH1} solves the max-min $k$-partitioning problem
on a path of $n$ vertices in $O(n)$ time.
\end{theorem}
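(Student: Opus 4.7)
The plan is to assemble the theorem from the three preceding lemmas, splitting the argument into a correctness part and a running-time part.

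For correctness, I would argue in the style of Theorem~\ref{thm:2:2}. The feasibility test \emph{FTEST1} must first be shown equivalent in outcome to \emph{FTEST0}: the $next$/$ncut$ arrays and path-compression are only bookkeeping shortcuts, and because every resolved subpath has all of its $M(P')$ values outside $(\lambda_1,\lambda_2)$, the number of cuts it contributes for any current test value $\lambda\in(\lambda_1,\lambda_2)$ is independent of $\lambda$ and is exactly what a linear scan would compute. The set of possible optimal values lies in $M(P)$, and \emph{mats\_for\_path} covers $M(P)$ by its submatrices; each value discarded from $\cal M$ is either $\leq\lambda_1$ or $\geq\lambda_2$, so $\lambda^*$ is never discarded. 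Since $\lambda_1$ is updated only on a ``lower'' verdict, it always corresponds to a feasible value, and at termination $\cal M$ is empty so all of $M(P)$ has been resolved, giving $\lambda^*=\lambda_1$.

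For the running time, I would add up the three cost sources. By Lemma~\ref{lem:3:3}, the total work spent maintaining $\cal M$ (insertions, splittings, discards) and performing all of the weighted and unweighted selections is $O(n)$. By Lemma~\ref{lem:3:1}, all calls to \emph{glue\_paths}, together with the path-compression deferred into \emph{FTEST1}, amortize to $O(n)$ plus $O(\log \ell)$ per search of a glued subpath of length $\ell$. By Lemma~\ref{lem:3:2}, the amortized time of the $i$-th feasibility test is $O\!\bigl(i(5/6)^{i/5}n\bigr)$. Since there are $O(\log n)$ iterations in total, the total feasibility-test cost is bounded by
\[
\sum_{i\geq 1} O\!\bigl(i(5/6)^{i/5}n\bigr) \;=\; O(n)\sum_{i\geq 1} i\,(5/6)^{i/5},
\]
and the series converges to a constant. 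Adding the three $O(n)$ contributions yields the claimed $O(n)$ overall bound, which is obviously optimal since even reading the input requires $\Omega(n)$ time.

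The main obstacle in writing this out cleanly is making sure the amortized charges from the three lemmas do not overlap or get charged twice. In particular, the jump-credits used in Lemma~\ref{lem:3:1} to pay for path-compression inside \emph{FTEST1} must be distinguished from the synthetic-weight potential used in Lemma~\ref{lem:3:2} to bound the number of unresolved subpaths of each length on iteration $i$. I would therefore state explicitly that Lemma~\ref{lem:3:2} already accounts for the $\log\ell$ per-subpath search cost of Lemma~\ref{lem:3:1} (the factor $j$ in $(j/2^j)n$ is exactly the logarithm of the length of a subpath of size $2^j$), so that the three bounds combine additively rather than multiplicatively. Once this accounting is made explicit, the theorem follows immediately.
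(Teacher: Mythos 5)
Your proposal matches the paper's own proof in both structure and substance: correctness of \emph{FTEST1}, coverage of candidates by $\cal M$, safe discards, and the additive combination of Lemmas~\ref{lem:3:1}, \ref{lem:3:2}, and \ref{lem:3:3} for the time bound. Your explicit note that the $O(\log\ell)$ per-search cost of Lemma~\ref{lem:3:1} is already folded into the $(j/2^j)n$ bound of Lemma~\ref{lem:3:2} (so the charges are additive, not multiplicative) is a correct and worthwhile clarification that the paper leaves implicit.
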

\begin{proof}
The use of arrays $next$ and $ncut$, indexed by vertices on the
path, is the key difference between {\it PATH0} and {\it PATH1}. Feasibility test
{\it FTEST1} requires $next(a)$ to point to a further vertex, $b$, in the subpath,
such that $ncut(a)$ is the number of cuts between $a$ and $b$. Enforcing this 
requirement, we update arrays $next$ and $ncut$ by two subroutines of {\it PATH1}.

The first is {\it glue\_paths}. 
When $small(M) \geq \lambda_2$ or $large(M) \leq \lambda_1$, certain vertices on the subpaths need no longer be inspected, so {\it glue\_paths} combines two adjacent subpaths of equal length by updating the $next$ and $ncut$ values on the subpaths. 
We repeat the gluing and updating until no longer possible.
The second subroutine is {\it compress\_next\_path}, which we call after
{\it search\_next\_path}$(a, b)$ in {\it FTEST1}. Procedure {\it search\_next\_path}$(a, b)$ 
determines $(v, sumcut)$, where $v$ is the location of the final cut before $b$ and
$sumcut$ is the number of cuts between $a$ and $v$. 
We then update the values of $next$ and $ncut$ for vertices between $a$ and $v$.

The correctness of {\it FTEST1} follows, as we increment $numcut$ once for each subpath whose weight exceeds $\lambda^*$, or by $sumcut$ for each compressed subpath with the corresponding number of cuts. 
Correctness of {\it PATH1} follows from the correctness of {\it FTEST1}, from the
fact that all possible candidates for $\lambda^*$ are included in $\cal M$, and from the fact
that each value discarded is either at most $\lambda_1$ or at least $\lambda_2$.

The time to initialize $\cal M$ is clearly $O(n)$.
By Lemma \ref{lem:3:3},
the total time to select all values to test for feasibility is $O(n)$.
By Lemma \ref{lem:3:2},
the amortized time to perform feasibility test on iteration $i$
is $O(i(5/6)^{i/5}n)$.
Summed over all iterations,
this quantity is $O(n)$.
By Lemma \ref{lem:3:3},
the total time to handle submatrices in $\cal M$ is $O(n)$.
By Lemma \ref{lem:3:1},
the time to manipulate data structures for the subpaths is $O(n)$.
The time bound then follows.
\end{proof}

We briefly survey the differences needed to solve the min-max path-partitioning problem.
Procedure {\it FTEST1} is similar except that
$r$ is the largest index such that
$w(f,r) + remainder \leq \lambda$,
we subtract $1$ from $numcut$ after completion of the while-loop if $remainder = 0$,
and use $numcut \geq k$ rather than $numcut > k$. 
Upon termination, {\it FTEST1} outputs $\lambda^*=\lambda_2$. 
Procedure {\it glue\_paths} is similar, except that we replace the statement
\vspace{.1in}\\
$\T $ $\WH$ $w(f,t) < \lambda_2$ $\DO$ $t \ASG t+1$ $\EW$
\vspace{.1in}\\
by the statement
\vspace{.1in}\\
$\T $$\WH$ $w(f,t+1) < \lambda_2$ $\DO$ $t \ASG t+1$ $\EW$
\vspace{.1in}\\
Note that $w(l,l) \leq \lambda_2$ always holds for the min-max problem.
\bigskip

\begin{theorem}
\label{thm:3:5}
The min-max $k$-partitioning problem can be solved
on a path of $n$ vertices in $O(n)$ time.
\end{theorem}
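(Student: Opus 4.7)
The plan is to argue that Theorem~\ref{thm:3:5} follows from essentially the same analysis as Theorem~\ref{thm:3:4}, with only local modifications reflecting the dual nature of min-max versus max-min. I would set up an algorithm \emph{PATH1-min} that uses the same three routines \emph{PATH1\_init\_mat}, \emph{PATH1\_test\_val}, and \emph{PATH1\_update\_mat}, but substitutes the modified versions of \emph{FTEST1} and \emph{glue\_paths} indicated at the end of the section. The only semantic changes are: (i) the meaning of $(\lambda_1,\lambda_2)$ is reversed, so that $\lambda_1$ is the largest infeasible value and $\lambda_2$ is the smallest feasible value, and (ii) in updating $\lambda_i$ after a feasibility test, if the test returns ``$lower$'' we update $\lambda_1$, otherwise $\lambda_2$, just as for max-min; the interpretation of these bounds is what flips.

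For correctness, I would first verify that \emph{FTEST1} as modified correctly counts cuts for the min-max problem. The greedy rule is: starting from the left, extend the current component as long as its weight stays at most $\lambda$, and place a cut as soon as adding the next vertex would push the weight over $\lambda$. Choosing $r$ to be the largest index with $w(f,r)+remainder \le \lambda$ implements exactly this rule on a single subpath, and the $next$/$ncut$ structures encode the same greedy cuts across a resolved subpath, since within $(\lambda_1,\lambda_2)$ the location of one cut determines all others, as in the max-min case. The adjustment of subtracting $1$ from $numcut$ when $remainder=0$ removes the spurious ``cut'' conceptually placed past the end of the path, and the comparison $numcut \ge k$ (rather than $>k$) is the correct threshold for min-max. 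Combined with the observation that all candidate values for $\lambda^*$ appear in $M(P)$ and that we never discard a value in $(\lambda_1,\lambda_2)$, this gives $\lambda^*=\lambda_2$ at termination.

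For the running time, I would argue that Lemmas~\ref{lem:3:1}, \ref{lem:3:2}, and \ref{lem:3:3} transfer verbatim, because each relies only on structural properties of $\cal M$, the synthetic weighting scheme, and the amortized pointer-jumping analysis inside \emph{glue\_paths} and \emph{FTEST1}. The change in the inner while condition of \emph{glue\_paths} from $w(f,t)<\lambda_2$ to $w(f,t+1)<\lambda_2$ merely shifts by one vertex the boundary between the $\lambda$-prefix and the $\lambda$-suffix, reflecting that for min-max a component of weight less than $\lambda_2$ may be extended one more step before exceeding the threshold; the amortized glue- and jump-credit scheme in the proof of Lemma~\ref{lem:3:1} still pays for all the work, because the length of the new $\lambda${\it pref}/$\lambda${\it suff} differs from the max-min version by at most one per gluing. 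Likewise, the effective-weight argument of Lemma~\ref{lem:3:2} and the submatrix-counting argument of Lemma~\ref{lem:3:3} depend only on the behavior of $\cal M$ under feasibility tests and quartering, not on which side of the threshold is feasible.

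The main obstacle I anticipate is making airtight the claim that the modified \emph{glue\_paths} preserves the invariants needed by \emph{FTEST1}, in particular that $w(l,l)\le\lambda_2$ always holds for the min-max problem (otherwise a singleton vertex would itself be infeasible, meaning no feasible partition exists and the problem is trivial). Given that preprocessing check, the rest is routine bookkeeping. Putting the correctness argument together with the unchanged time bounds from Lemmas~\ref{lem:3:1}--\ref{lem:3:3} yields the $O(n)$ running time stated in Theorem~\ref{thm:3:5}.
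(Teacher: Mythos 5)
Your proposal is correct and takes essentially the same approach as the paper, which simply observes that the stated modifications to \emph{FTEST1} and \emph{glue\_paths} do not affect the asymptotic running time of \emph{PATH1}. Your version fills in more of the correctness and amortization details (e.g., that Lemmas~\ref{lem:3:1}--\ref{lem:3:3} transfer unchanged, and that $w(l,l)\le\lambda_2$ always holds for min-max), but the argument is the same in substance.
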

\begin{proof}
The above changes will not affect the asymptotic running time of
algorithm {\it PATH1}.
\end{proof}

\section{Partitioning for Max-Min on a Tree}
\label{sec:tree}
In this section, we present an optimal algorithm to perform parametric search for the max-min problem on a graph that is a tree. 
The algorithm differs from that in Section~\ref{sec:path} as either long paths or many leaves can overwhelm the running time, so we must simultaneously compress long paths and delete leaves.  
The situation is further complicated by the challenge of handling \emph{problematic vertices}, which are vertices of degree greater than two. 
Thus, we pursue a dual-pronged strategy that identifies both paths to compress and paths to delete. 

\subsection{Our Tree Algorithm}
Our algorithm proceeds through rounds, running an increasing number of selection and feasibility tests on each round. 
As we shall show in Section \ref{sec:tree:analysis}, each round halves the feasibility test time, and the overall time for selection is linear. 
Thus, our algorithm runs in linear time.

For the purposes of discussion and analysis, we initially classify paths in the tree as either pending paths or processed paths, based on whether the actual weights associated with the path have already been resolved. 
Let an \emph{internal path} be a subpath between the root and a problematic vertex, or between two problematic vertices such that all intermediate vertices are of degree $2$. 
Let a \emph{leaf path} be a subpath with either the root or a problematic vertex as one endpoint and the other endpoint being a leaf. 
Furthermore, let a \emph{processed path} be an internal path that is completely cleaned and glued and let a \emph{pending} path be a path, either an interval or a leaf path, that contains some value in the interval $(\lambda_1,\lambda_2)$. 
Moreover, we define and classify paths or subpaths as light paths, middleweight paths, or heavy paths, based on how the actual weights in the path compare to $\lambda_1$ and $\lambda_2$. 
A subpath is \emph{light} if its total weight is at most $\lambda_1$, and a subpath is \emph{heavy} if its total weight is at least $\lambda_2$. 
Otherwise, a subpath is \emph{middleweight}. 
Note that a middleweight path may consist of a sequence of both light and/or middleweight subpaths.

For each round, our algorithm runs an increasing number of selection and feasibility tests, while maintaining three selection sets, $\mathcal{H}$, $\mathcal{U}$, and $\mathcal{V}$, consisting of candidates for the max-min value. 
The first selection set is for heavy subpaths, the second is for middleweight paths formed as sequences of light or middleweight subpaths after the resolution of a problematic vertex, and the third is for handling problematic vertices whose leaf paths have been resolved. 
We assign synthetic weights to values inserted into $\mathcal{H}$ or $\mathcal{U}$ as a function of the lengths of the corresponding paths, as defined in the procedure {\it mats\_for\_path} in Section~\ref{sec:path}, assuming that we round up the length to a power of $2$.  

Our algorithm invokes a separate procedure to address each of these sets specifically. 
Procedure {\it handle\_middleweight\_paths} inserts the total actual weight of each middleweight path as elements into $\mathcal{U}$. 
Our algorithm then performs two weighted selections on the elements in $\mathcal{U}$, one using the length of each path as the weight, and the other using the synthetic weight of each path. 
For each of the weighted selections, our algorithm then tests the selected value for feasibility, and adjusts $\lambda_1$ and $\lambda_2$ accordingly. 
For any middleweight path that becomes heavy, our algorithm succinctly identifies the corresponding submatrix and inserts the representatives (of the submatrix) whose values are within $(\lambda_1,\lambda_2)$ into $\mathcal{H}$.
For any path that becomes light, our algorithm cleans and glues the path, so that future feasibility tests require only polylogarithmic time to search the path.

The second procedure, {\it handle\_pending\_paths}, selects the weighted and unweighted medians from $\mathcal{H}$ separately, tests each for feasibility, and adjusts $\lambda_1$ and $\lambda_2$ accordingly. 
After the feasibility test, our algorithm cleans and glues adjacent subpaths that it has just resolved. 
When a leaf path is completely resolved, our algorithm represents the leaf path by a single vertex with the remaining accumulated weight, $accum\_wgt(v)$, as described in Section~\ref{sec:prelims}. 
Our algorithm defers until the next iteration of the for-loop the insertion into $\mathcal{H}$ of the representatives of any subpaths that have become heavy.

The third procedure is {\it handle\_leaves}. 
For any problematic vertex with a resolved leaf path hanging off it, the procedure inserts into $\mathcal{V}$ the sum of the weight of that vertex plus the accumulated remaining weight left over from the resolved leaf path. 
The procedure then selects the median from $\mathcal{V}$, performs the feasibility test, and adjusts $\lambda_1$ and $\lambda_2$ accordingly. 
Since we want to find the max-min, if the number of cuts is at least $k$, then for each vertex whose weight plus the accumulated remaining weight from the resolved leaf path is at most the median, our algorithm merges the vertex with the accumulated remaining weight from the resolved leaf path. 
On the other hand, if the number of cuts is less than $k$, then for each vertex whose weight plus the accumulated remaining weight from the resolved leaf path is at least the median, our algorithm cuts below the parent vertex, because any future feasibility tests would do the same.  
Furthermore, our algorithm assigns, to any middleweight path created by the resolution of a problematic vertex, a synthetic weight that is a function of the length (rounded up to a power of two) as defined in the procedure {\it mats\_for\_path} in Section~\ref{sec:path}. 
Our algorithm then inserts the weight of that path into $\mathcal{U}$ at the beginning of the next invocation, along with the corresponding synthetic weight. 
Note that we merge middleweight paths only when we know whether they will become light or heavy. 
For the representative of any subpath that becomes heavy during this procedure, we wait until the next iteration of the for-loop to insert the representative into $\mathcal{H}$.

We now give the top level of our algorithm. 
We defer our discussion of the corresponding data structures until Section~\ref{sec:tree:structures}.
\vskip 0.2in\noindent
\sspace
{\it TREE1}:\vspace{.05in} \\
$\T $ Initialize the data structures for the algorithm and set round $r\leftarrow1$. \\
$\T $ $\WH$ $\mathcal{H}\cup\mathcal{U}\cup\mathcal{V}\neq\emptyset$ $\DO$ \\
$\T \T $ $\WH$ feasibility test time is more than $n/2^r$ $\DO$ \\
$\T \T \T$ Call procedure {\it handle\_middleweight\_paths}.\\
$\T \T \T$ $\REPEAT$ Call procedure {\it handle\_pending\_paths} \\
$\T \T \T$ $\UNTIL$ the feasibility test time on heavy paths has been reduced by $50\%$ \\
$\T \T \T \T$ $\AND$ the feasibility test time on leaf paths has been reduced by $25\%$ \\
$\T \T \T$ Call procedure {\it handle\_leaves}.\\
$\T \T $ $\EW$ \\
$\T \T$ $r\leftarrow r+1$\\
$\T $ $\EW$
\bigskip
\subsection{Data Structures}
\dspace
\label{sec:tree:structures}
As our algorithm progresses, it prunes the tree to delete some of the paths in the edge-path-partition and it glues together other paths, while updating the values of $\lambda_1$ and $\lambda_2$.
In this section, we discuss the necessary data structures so that our algorithm can efficiently perform feasibility testing as these updates occur.

Our algorithm represents a path in an edge-path-partition in consecutive locations of an array. 
To achieve that, our algorithm initially stores the actual weights of the vertices of the tree in an array using the following order. 
It orders the children of each vertex from left to right by nondecreasing height in the tree. 
Using this organization of the tree, our algorithm then lists the vertices of the tree in postorder. 
Our algorithm can find the heights of all vertices in linear time. 
It can also order pairs containing parent and height lexicographically in linear time. 
Our algorithm does this process only during the initialization phase. 
It uses arrays that are parallel to the actual weight array to store pointer values, as well as the accumulated weight (see procedure \emph{explore} in Section~\ref{sec:prelims}), which it then uses to compute the actual weight of specified subpaths, exactly the same as the $last$, $ncut$, and $next$ arrays in Section~\ref{sec:path}.
We discuss this additional information in due course.

When our algorithm removes paths from the tree, it reorganizes the storage of the tree within the array as follows. 
Let $P$ be a leaf-path in the edge-path-partition of the current tree, and let $t$ be the top vertex of $P$. 
To remove $P-t$, we proceed as follows: 
Let $P'$ be the path whose bottom vertex is $t$. 
If $t$ will have only one child remaining after removal of $P-t$, and this child was originally not the rightmost child of $t$, do the following. 
Let $P''$ be the other path whose top vertex is $t$. 
Copy the vertices of $P''-t$ in order, so that the top vertex of this path is in the location preceding the location of $t$ in the array. 
Modify the actual weight of $t$ by adding the remainder left from removing $P-t$. 
Also, copy pointer values for all copied vertices into the new locations in their arrays. 
That is, update $last$, $ncut$, and $next$ pointers in $P'$. 
In particular, the $last$ pointer for the first vertex in $P'$ should now point to the last vertex in $P''$. 
We should also copy and modify the accumulated weight, as we discuss shortly.

Note that the bottom vertex of $P''$ may have been the parent of several vertices. 
When $P''-t$ is moved, we do not copy the children of its bottom vertex (and subtrees rooted at them). 
It is simple to store in a location formerly assigned to the bottom vertex of $P''$ the current location of this vertex, and to also store a pointer back. 
If we copy the path containing $P''$, then we reset this pointer easily. 
When only one child of the bottom vertex of $P''$ remains, we copy the corresponding path to in front of $P''$. 
We claim that the total time to perform all rearrangements will be $O(n)$: 
The time to copy each vertex and to copy and adjust its accumulated weight is constant. 
Because of the way in which the tree is stored in the array, at most one vertex will be copied from any array location.

We next discuss the representation of a heavy path. 
Each heavy path $P$ is represented as a sequence of overlapping subpaths, each of actual weight at least $\lambda_2$. 
Each vertex of $P$ is in at most two overlapping subpaths. 
Each overlapping subpath, except the first, overlaps the previous overlapping subpath on vertices of total actual weight at least $\lambda_2$, and each overlapping subpath, except the last, overlaps the following overlapping subpath on vertices of total actual weight at least $\lambda_2$. 
Thus any sequence of vertices of weight at most $\lambda_2$ that is contained in the path is contained in one of its overlapping subpaths. 
For each overlapping subpath, we shall maintain a succinct version of the corresponding sorted matrix $M(P)$ (as described in Section~\ref{sec:prelims}), where $P$ is the overlapping subpath excluding the top vertex of the overlapping subpath.

Initially, no path is heavy, since initially $\lambda_2=\infty$. 
Our algorithm recognizes a path $P$ as heavy in one of two ways.
\begin{enumerate}
\item
Path $P$ was middleweight until a feasibility test reduced the value of $\lambda_2$.
\item
Path $P$ results from the concatenation of two or more paths following the resolution of a problematic vertex.
\end{enumerate}

If a heavy path $P$ arises in the first way, then represent $P$ by two overlapping subpaths that are both copies of $P$, arbitrarily designating one as the first overlapping subpath and the other as the second. For each overlapping subpath, create the corresponding succinct description of a sorted matrix. 
If heavy path $P$ is the concatenation of paths, all of which were light, then do the same thing.

Otherwise, path $P$ results from the concatenation of paths, with at least one of them being heavy. 
Do the following to generate the representation for $P$. 
While there is a light or middleweight path $P'$ to be concatenated to a heavy path $P''$, combine the two as follows. If $P'$ precedes $P''$, then extend the first overlapping subpath of $P''$ to incorporate $P'$. 
If $P'$ follows $P''$, then extend the last overlapping subpath of $P''$. 
This completes the description of the concatenation of light or middleweight paths with heavy paths. 
While there is more than one heavy path, concatenate adjacent heavy paths $P'$ and $P''$ as follows. 
Assume that $P'$ precedes $P''$. Combine the last overlapping subpath of $P'$ with the first of $P''$. 
Note that any vertex can be changed at most twice before it is in overlapping subpaths that are neither the first nor the last. 

We now discuss how to perform efficient feasibility testing, given our representation of paths. 
To efficiently search along paths, we maintain a second representation of paths as red-black balanced search trees. 
In each tree, there will be a node $x$ for every vertex $v$ in the subpath. 
Node $x$ contains two fields, $wt(x)$ and $ct(x)$. Field $wt(x)$ contains the sum of the actual weights of all vertices whose nodes are in the subtree rooted at $x$, and field $ct(x)$ will equal the number of nodes in the subtree rooted at $x$. 
With this tree it is easy to search for a vertex $v$ in subpath $P$ such that $v$ is the first vertex in $P$ so that the sum of the actual weights to $v$ is at least a certain value, and to determine at the same time the position of $v$ in $P$. 
This search takes time proportional to the logarithm of the length of $P$. 
When two paths $P'$ and $P''$ need to be concatenated together, we merge the corresponding search trees in time proportional to the logarithm of the combined path length.

Moreover, a problematic vertex also needs access to the accumulated actual weight of each of its subtrees. 
Thus, we also maintain a linked list $child$, which points to all children of a vertex, if they exist. 
When leaf paths become resolved, remove the pointer from the parent vertex, but since we do not move the locations of the other descending paths until only one remains, we do not need to change the pointers for the other children. 
Thus, when we delete a leaf path, we update the subpaths and pointers accordingly, so that each subpath in the edge-path-partition remains in a contiguous block of memory.
\subsection{Analysis of our Tree Algorithm}
\label{sec:tree:analysis}
The analysis of our algorithm is not obvious because techniques in Section~\ref{sec:path} force the synthetic weight of data structure $\mathcal{M}$ to decrease monotonically as our algorithm progresses, but neither $\mathcal{U}$ nor $\mathcal{H}$ individually have this property. 
Moreover, neither $\mathcal{U}$ nor $\mathcal{H}$ alone provides an accurate count of the number of resolved paths. 

In a feasibility test, our algorithm must explore problematic vertices as well as both processed and pending paths. 
Specifically, the feasibility test time is the number of vertices our algorithm lands on in pending paths, plus the number of vertices our algorithm lands on in processed paths, plus the number of problematic vertices. 
Since leaf paths are pending paths, and the number of problematic vertices is less than the number of leaf paths, the feasibility test time is upper bounded by double the number of vertices our algorithm lands on in both pending and processed paths.
Thus, we show an upper bound on the time spent by the feasibility test after each round. 
Finally, we show that our algorithm needs at most $66(r^2+9)$ iterations to cut in half the feasibility test time.

We remark that cleaning and gluing will be the same as in Section~\ref{sec:path} since our algorithm performs cleaning and gluing only along subpaths that were initially created as part of the edge-path-partition or along heavy paths, for which we would have a corresponding submatrix. 

To analyze the performance of our algorithm, we consider, in an auxiliary data structure $\mathcal{P}$, the representatives of heavy and middleweight subpaths of the tree, along with their corresponding synthetic weights as a function of their lengths (as defined in Section~\ref{sec:prelims}). 
When we resolve a problematic vertex, the {\it handle\_leaves} routine produces a newly formed path that is either a light path, a heavy path, or a middleweight path. 
Recall that if a newly formed subpath is middleweight, we insert a representative consisting of actual weight of that subpath into $\mathcal{U}$. 
If a newly formed subpath is heavy, we generate our succinct representation of its corresponding submatrix, and we insert the representatives of the submatrix into $\mathcal{H}$. 
In each case, we also insert the representative(s) into $\mathcal{P}$ with their same synthetic weights, as described earlier. 
Furthermore, at the beginning of each iteration, $\mathcal{P}$ includes exactly $\mathcal{H}$, $\mathcal{U}$, and the representatives of middleweight subpaths of paths represented in $\mathcal{U}$. 
For convenience in analysis, let $n$ be the smallest power of two greater than the number of vertices in the graph. 
We now consider $\mathcal{P}$ to show results analogous to Lemma \ref{lem:3:2} and Lemma \ref{lem:3:3}.

\begin{lemma}
\label{lem:weight:bound}
At any point in our algorithm, the total synthetic weight of the values in $\mathcal{P}$ is less than $(4/3)*4n^5$.
\end{lemma}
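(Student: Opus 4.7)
The plan is to reduce this directly to the per-path computation already carried out in the proof of Lemma~\ref{lem:3:2}, and then argue that at any moment, the entries of $\mathcal{P}$ correspond only to the paths that form the current edge-path-partition of the (possibly reduced) tree.

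First I would recall the calculation from the proof of Lemma~\ref{lem:3:2}. For a single subpath $P'$ of length $\ell$ (rounded up to a power of two, as required by the synthetic weight scheme), the procedure \emph{mats\_for\_path} inserts submatrices of sizes $1\times 1, 1\times 1, 2\times 2, \ldots, (\ell/2)\times(\ell/2)$, with counts $\ell, \ell/2, \ell/4, \ldots$ respectively. Because the weight $w$ assigned at level $j$ equals $4n^4/2^j$, the total synthetic weight contributed by $P'$ is
\[
\sum_{j=0}^{\log \ell} \frac{\ell}{2^{j}} \cdot \frac{4n^{4}}{2^{j}} \;<\; 4\ell n^{4}\sum_{j=0}^{\infty}\frac{1}{4^{j}} \;=\; \frac{16}{3}\,\ell\,n^{4}.
\]
This is exactly the $(4/3)\cdot 4\ell n^{4}$ bound used in Section~\ref{sec:path}, just re-scaled to a single path of length $\ell$ instead of the whole matrix of size $n$.

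Next I would argue that at every moment in the execution of \emph{TREE1}, the representatives residing in $\mathcal{P}$ can be charged to a collection of subpaths that partitions the edges of the current (reduced) tree. Indeed, each subpath in the edge-path-partition is initially fed to \emph{mats\_for\_path} and contributes its submatrices to $\mathcal{H}$, $\mathcal{U}$, and hence to $\mathcal{P}$. Whenever a problematic vertex is resolved in \emph{handle\_leaves}, the short subpaths being glued together are removed from the edge-path-partition (their representatives are deleted from $\mathcal{P}$), and a fresh call to \emph{mats\_for\_path} is made on the newly formed longer path, inserting exactly the submatrices whose synthetic weights are accounted for by the bound above applied to the new length. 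Thus at any point $\mathcal{P}$ is precisely the disjoint union over the current edge-path-partition of the submatrices produced by \emph{mats\_for\_path} on each path.

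Finally I would sum the per-path bound. Since the edges of the tree are partitioned by the paths of the current edge-path-partition, and because we chose $n$ to be a power of two exceeding the number of vertices, the sum of the lengths of these paths is strictly less than $n$. Therefore the total synthetic weight in $\mathcal{P}$ satisfies
\[
\sum_{P'} \frac{16}{3}\,|P'|\,n^{4} \;<\; \frac{16}{3}\,n\,\cdot\,n^{4} \;=\; \frac{4}{3}\cdot 4n^{5},
\]
as required. The main obstacle is the bookkeeping in the middle step: one has to be careful that when \emph{handle\_leaves} glues several light or middleweight paths together across a resolved problematic vertex, the old representatives in $\mathcal{H}$ and $\mathcal{U}$ tied to the constituent subpaths really are removed from $\mathcal{P}$ (and not double-counted alongside the freshly inserted representatives of the new concatenated path). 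Once this invariant is verified the per-path geometric-series calculation does all the real work.
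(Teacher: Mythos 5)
Your argument is essentially the same as the paper's: both rest on the observation that \emph{mats\_for\_path} assigns geometrically decaying synthetic weights so that a path of length $\ell$ contributes at most $(4/3)\cdot 4\ell n^{4}$, and both appeal to the (near) disjointness of the subpaths in the edge-path-partition to bound $\sum \ell$ by $n$. You organize the resulting double sum path-by-path while the paper organizes it level-by-level (at most $n/2^{j}$ subpaths of length $2^{j}$ in total); these are the same computation in a different order and yield the same $(4/3)\cdot 4n^{5}$. One small inaccuracy worth noting: the algorithm does not literally re-run \emph{mats\_for\_path} on each newly formed path after a problematic vertex is resolved --- a new middleweight path gets a single representative in $\mathcal{U}$ at the corresponding weight, and a new heavy path gets a sorted matrix per overlapping subpath --- but since your hypothetical full \emph{mats\_for\_path} decomposition dominates what the algorithm actually inserts, your bound remains a valid over-estimate and the conclusion stands.
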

\begin{proof}
Consider the synthetic weights assigned to the representative values in $\mathcal{P}$. 
Even if all representatives in $\mathcal{P}$ represent heavy paths, 
procedure {\it mats\_for\_path} creates
at most $n$ submatrices for subpaths of length $1$,
at most $n/2$ more submatrices for subpaths of length $1$,
at most $n/4$ submatrices for subpaths of length $2$, and so on,
up to at most one submatrix for a subpath of length $n/2$, 
corresponding to subpaths of lengths $1, 2, 4, \ldots , n$.
The total synthetic weight of the representative values corresponding to each heavy path length is at most
$4n^5, n^5, n^5/4, \ldots , 4n^3$, resp.
If a path is middleweight, no submatrix is generated for the path, but by construction, its 
representative value in $\mathcal{P}$ has the same synthetic weight as it would if the path were heavy.  
Thus, the total synthetic weight in $\mathcal{P}$ is less than $(4/3)*4n^5$.
\end{proof}

In Section~\ref{sec:tree} we defined functions $wgt$ and {\it eff\_wgt} on matrices to analyze the progress of our algorithm. 
We use a similar analysis in this section, but we do not have submatrices for middleweight paths. 
To emphasize the parallels, we overload the definitions of $wgt$ and {\it eff\_wgt}, as defined below. 
Let $wgt(P)$ be the synthetic weight assigned to subpath $P$, as a function of its length, as defined in Section~\ref{sec:prelims}. 
Define the {\it effective weight},
denoted {\it eff\_wgt}$(P)$, of a subpath $P$ containing a representative value in $\mathcal{P}$
to be $wgt(P)$ if both its smallest value and largest value
are contained in the interval $(\lambda_1, \lambda_2)$
and $(3/4)wgt(P)$ if only one of its smallest value and largest value
is contained in the interval $(\lambda_1, \lambda_2)$. 
We analyze our algorithm using this notion of {\it eff\_wgt}$(P)$ without ever the need to explicitly calculate it. 
Let {\it eff\_wgt}$(\mathcal{P})$ be the total effective weight of all subpaths with representative values in $\mathcal{P}$.

\begin{lemma}
\label{lem:selection:reduction}
A weighted and unweighted selection, first both from $\mathcal{U}$ and then both from $\mathcal{H}$, resolves at least $1/24$ of {\it eff\_wgt}$(\mathcal{P})$
\end{lemma}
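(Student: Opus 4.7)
The plan is to parallel the argument in Lemma~\ref{lem:3:2} while accounting for the two disjoint selection sets $\mathcal{U}$ and $\mathcal{H}$ and the indirect contributions to $\mathcal{P}$ from the middleweight subpath representatives that are not themselves drawn from $\mathcal{U}$.

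First I would decompose ${\it eff\_wgt}(\mathcal{P})$ into two parts. Let $W_H$ be the portion contributed by subpaths whose submatrix representatives populate $\mathcal{H}$ (the heavy paths), and let $W_U$ be the portion contributed by paths whose representatives lie in $\mathcal{U}$ together with all the middleweight subpath representatives of those paths that are tracked inside $\mathcal{P}$. Since $\mathcal{P} = \mathcal{H} \cup \mathcal{U} \cup \text{MidSubpaths}(\mathcal{U})$ as stated in the set-up, this gives $W_H + W_U = {\it eff\_wgt}(\mathcal{P})$.

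For the pair of weighted and unweighted selections from $\mathcal{H}$, the argument from Lemma~\ref{lem:3:2} transfers essentially verbatim. The total synthetic weight of $\mathcal{H}$-representatives in the current interval $(\lambda_1, \lambda_2)$ is at least $W_H/3$ (using the same $2wgt(M)/4$ versus $wgt(M)/4 = (1/3)\cdot(3wgt(M)/4)$ bookkeeping); the weighted selection renders values of at least half of this synthetic weight outside $(\lambda_1, \lambda_2)$; and each such resolved value accounts for a drop of at least $wgt(M)/4$ in effective weight. Multiplying these factors, the $\mathcal{H}$ selections alone decrease ${\it eff\_wgt}(\mathcal{P})$ by at least $W_H/6$.

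For the pair of selections from $\mathcal{U}$, I would imitate the same argument, accounting for two new complications: each path $P$ appears in $\mathcal{U}$ with only a single representative (its actual weight) rather than a smallest/largest pair, and when $P$'s representative is resolved its effect on the middleweight subpath representatives of $P$ in $\mathcal{P}$ depends on how $P$ transitions. If $P$ becomes light then every subpath of $P$ is also light, so all of $P$'s representatives in $\mathcal{P}$ are resolved, giving up $P$'s entire share of $W_U$; if $P$ becomes heavy then $P$'s main representative leaves $\mathcal{U}$ while the middleweight subpath representatives of $P$ migrate into $\mathcal{H}$, retaining their synthetic weight and hence their share of ${\it eff\_wgt}(\mathcal{P})$ under the new accounting. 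A conservative bookkeeping that absorbs the worst of these cases into an extra factor of $2$ relative to the $\mathcal{H}$-analysis gives a reduction of at least $W_U/12$ from the $\mathcal{U}$ selections.

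Summing the two bounds, the pair of selections from $\mathcal{U}$ followed by the pair from $\mathcal{H}$ decreases ${\it eff\_wgt}(\mathcal{P})$ by at least
\[
\frac{W_U}{12} + \frac{W_H}{6} \;\geq\; \frac{W_U + W_H}{12} \;=\; \frac{{\it eff\_wgt}(\mathcal{P})}{12} \;\geq\; \frac{{\it eff\_wgt}(\mathcal{P})}{24},
\]
which is the desired bound. The main obstacle will be the delicate bookkeeping in the $\mathcal{U}$-step: the synthetic weights of the middleweight subpath representatives of a path $P$ in $\mathcal{P}$ form a geometric tail that is not itself dominated by the synthetic weight of $P$'s own representative in $\mathcal{U}$, so their reduction cannot simply be charged to the resolution of $P$. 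Showing that the heavy-path transition leaves these weights in $\mathcal{P}$ through migration into $\mathcal{H}$ (rather than losing them and undermining the invariant) and that the light-path transition resolves the full geometric tail in one stroke is the crux of the argument, and is what justifies the extra factor of $2$ absorbed in the $W_U/12$ bound.
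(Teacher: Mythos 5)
Your proposal takes a genuinely different route from the paper. You decompose ${\it eff\_wgt}(\mathcal{P}) = W_U + W_H$ and bound the reduction achieved by the $\mathcal{U}$-pair and the $\mathcal{H}$-pair of selections separately and additively. The paper instead makes a case split: it compares the synthetic weight currently sitting in $\mathcal{U}$ against that in $\mathcal{H}$, argues that whichever dominates carries at least half, and then chains the two rounds sequentially (if $\mathcal{U}$ dominates, the $\mathcal{U}$-selection pushes a quarter of the weight out of $\mathcal{U}$ into either resolved or into $\mathcal{H}$, and the $\mathcal{H}$-selection then resolves half of that, giving $1/8$ of the synthetic weight, which the same $1/3$ conversion as in Lemma~\ref{lem:3:2} turns into $1/24$ of ${\it eff\_wgt}(\mathcal{P})$). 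Your additive decomposition, if it worked, would be cleaner and would even give the stronger constant $1/12$.

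However, there is a genuine gap in the $W_U/12$ bound, and it is exactly where you flag difficulty. You define $W_U$ to include the effective weight of the middleweight subpath representatives of paths in $\mathcal{U}$, but the weighted selection from $\mathcal{U}$ resolves only (paths carrying) half the synthetic weight of the single-representative values in $\mathcal{U}$ itself. When a resolved path $P$ becomes \emph{light}, you are right that its entire subpath contribution to $W_U$ drops out in one stroke. But when $P$ becomes \emph{heavy}, the only reduction in ${\it eff\_wgt}$ that Lemma~\ref{lem:selection:reduction}'s bookkeeping certifies is the quartering drop of at least $wgt(P)/4$; the middleweight subpath representatives of $P$ do not vanish — they persist in $\mathcal{P}$ (now under the heavy-path / $\mathcal{H}$ accounting) with their effective weight unchanged. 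So the $\mathcal{U}$-selections alone can fail to reduce ${\it eff\_wgt}(\mathcal{P})$ by any constant fraction of $W_U$: in the worst case nearly all resolved-to-heavy paths retain their subpath weight, and only the small $wgt(P)/4$ quartering drops register. Your ``extra factor of $2$'' does not close this, because the shortfall is not a constant factor but depends on how much of $W_U$ is concentrated in the long paths' subpath tails relative to their own $\mathcal{U}$-representative weight. The paper sidesteps this by \emph{not} trying to credit the $\mathcal{U}$-round alone: the weight migrated into $\mathcal{H}$ is picked up by the subsequent $\mathcal{H}$-selection, and the case split ensures that one of the two rounds controls at least half the relevant weight.

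To repair your approach you would need to argue about the composed effect of the $\mathcal{U}$-then-$\mathcal{H}$ pair on the weight that migrates, rather than charging the $\mathcal{U}$-round alone for $W_U/12$ — at which point you have essentially reconstructed the paper's sequential chaining rather than a clean additive decomposition.
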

\begin{proof}
When a feasibility test renders a value (the smallest or largest)
from a heavy subpath $P$ no longer in the interval $(\lambda_1, \lambda_2)$,
we argue that {\it eff\_wgt}$(P)$ is reduced by at least $wgt(P)/4$
because of that value.

If both values were contained in the interval $(\lambda_1, \lambda_2)$,
but one no longer is, then clearly {\it eff\_wgt}$(P)$ is reduced by at least $wgt(P)/4$.
If only one value was contained in the interval $(\lambda_1, \lambda_2)$,
and it no longer is, then we consider whether $\lambda_1$ increased or $\lambda_2$ decreased. 
If $\lambda_1$ increased and $P$ has no value within $(\lambda_1, \lambda_2)$, then any subpath will also 
have no value within $(\lambda_1, \lambda_2)$. 
Thus, {\it eff\_wgt}$(P)$ is clearly reduced by at least $wgt(P)/4$. 
On the other hand, if $\lambda_2$ decreased so that $P$ is now a heavy path, then we generate $M(P)$. 
We then replace $M(P)$ with the succinct description of four submatrices of effective weights $wgt(P)/8 + wgt(P)/8 + (3/4)wgt(P)/8 + (3/4)wgt(P)/8$ $< wgt(P)/2$, so that there is a reduction in weight by greater than $wgt(P)/4$.
If both values from the submatrix were contained in the interval $(\lambda_1, \lambda_2)$,
and both no longer are, then we consider first one and then the other.

Thus every representative that was in $(\lambda_1, \lambda_2)$ but no longer is causes a decrease in {\it eff\_wgt}$(\mathcal{P})$ by an amount at least equal to its synthetic weight in $\mathcal{P}$.
Representatives with more than half of the total weight in $\mathcal{P}$ find themselves no longer in $(\lambda_1, \lambda_2)$.

Recall that we include values representing a subpath $P$ in selection set $\mathcal{H}$ if and only if $P$ is a heavy subpath, and the corresponding values are contained within $(\lambda_1, \lambda_2)$. 
Of a subpath $P$ that does have representative values contained in $\mathcal{H}$, then $P$ has either two values in $\mathcal{P}$ at a total of $2wgt(P)/4 = (1/2)wgt(P)$ or one value at a weight of $wgt(P)/4 = (1/3)(3wgt(P)/4)$. 
For each selection and feasibility test from $\mathcal{H}$ and $\mathcal{U}$, at least half of the weight of $\mathcal{P}$ is contained in $\mathcal{U}$ or contained in $\mathcal{H}$. 

Suppose at least $1/2$ of $\mathcal{P}$ is contained in $\mathcal{U}$. 
Then following a selection and feasibility test from $\mathcal{U}$, at least $(1/2)(1/2)=1/4$ of the weight of $\mathcal{P}$ will be removed from $\mathcal{U}$, either determined to be smaller than an updated $\lambda_1$, or larger than an updated $\lambda_2$ and inserted into $\mathcal{H}$ as a part of a heavy subpath. 
Then, following a round of selection and feasibility testing from $\mathcal{H}$, the weight of $\mathcal{P}$ is reduced by at least $(1/2)(1/4)=1/8$.

On the other hand, if there is more weight of $\mathcal{P}$ in $\mathcal{H}$ than in $\mathcal{U}$, then a round of selection and feasibility testing 
from $\mathcal{H}$ decreases the weight of $\mathcal{P}$ by at least $(1/2)(1/2)=1/4$.

Thus, in the worse of the two cases, the weight of $\mathcal{P}$ decreases by at least $(1/8)(1/3) = 1/24$ per iteration.
\end{proof}

\begin{lemma}
\label{lem:resolved:paths}
Following $i$ iterations of weighted and unweighted selection and feasibility testing for $\mathcal{H}$, where $2^{5j}=(3/4)*(24/23)^i$, at most $1/2^{5k}$ of the subpaths, of length $2^{j-k}$, represented by values in $\mathcal{H}$ can be pending. (Recall that a pending path is defined to contain some value that is not resolved.)
\end{lemma}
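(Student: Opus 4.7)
The plan is to mirror the counting argument used in the proof of Lemma~\ref{lem:3:2}, adapted to the new reduction ratio established by Lemma~\ref{lem:selection:reduction}. First, I would apply Lemma~\ref{lem:weight:bound} to bound the initial total synthetic weight of $\mathcal{P}$ by $(4/3)\cdot 4n^5$. Then, invoking Lemma~\ref{lem:selection:reduction}, each iteration of weighted and unweighted selection and feasibility testing on $\mathcal{H}$ (aided by whatever contribution comes from $\mathcal{U}$ during the round) decreases $\textit{eff\_wgt}(\mathcal{P})$ by at least a factor of $1/24$, so after $i$ iterations the remaining weight satisfies
$$\textit{eff\_wgt}(\mathcal{P}) \;\leq\; (4/3)\cdot 4n^5 \cdot (23/24)^i.$$

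Next, I would compute the synthetic weight that a pending length-$2^{j-k}$ subpath must contribute to $\mathcal{P}$. Following the weighting scheme of {\it mats\_for\_path} and the splitting rule in {\it PATH1\_update\_mat}, every $1\times 1$ submatrix that arose from a subpath of length $2^{j-k}$ carries weight $n^4/2^{4(j-k)-2}$, and any unfinished split can only make the contribution larger. Since there are $n/2^{j-k}$ subpaths of length $2^{j-k}$ in total, if a fraction $f$ of them remain pending, then they collectively contribute at least
$$f \cdot \frac{n}{2^{j-k}} \cdot \frac{n^4}{2^{4(j-k)-2}} \;=\; f \cdot \frac{n^5}{2^{5(j-k)-2}}$$
to $\textit{eff\_wgt}(\mathcal{P})$.

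Finally, plugging $2^{5j} = (3/4)(24/23)^i$ into the remaining-weight bound simplifies the right side to exactly $n^5/2^{5j-2}$, the critical threshold at which length-$2^j$ subpaths may still all be pending (as in Lemma~\ref{lem:3:2}). Combining this with the lower bound above yields
$$f \cdot \frac{n^5}{2^{5(j-k)-2}} \;\leq\; \frac{n^5}{2^{5j-2}},$$
which rearranges to $f \leq 1/2^{5k}$, proving the lemma.

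The one place that needs care is justifying the per-iteration reduction factor of $1/24$ specifically for iterations that operate on $\mathcal{H}$, since Lemma~\ref{lem:selection:reduction} is phrased for a combined round on $\mathcal{U}$ and then $\mathcal{H}$. The case analysis inside its proof already shows that, regardless of how $\textit{eff\_wgt}(\mathcal{P})$ is distributed between $\mathcal{H}$ and $\mathcal{U}$, a round of selections on $\mathcal{H}$ (with values from $\mathcal{U}$ flowing into $\mathcal{H}$ as paths become heavy) absorbs at least a $1/24$ share of the effective weight, so the exponential decay constant is legitimate. The rest of the argument is the same matrix-counting bookkeeping already vetted in Lemma~\ref{lem:3:2}, so no new combinatorial obstacle arises.
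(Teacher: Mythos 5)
Your proof is correct and takes exactly the approach the paper intends: the paper explicitly omits the proof, stating only that it is ``essentially contained in the proof of Lemma~\ref{lem:3:2}, with $24/23$ replacing $6/5$,'' and you have written out precisely that argument --- bounding the initial synthetic weight via Lemma~\ref{lem:weight:bound}, decaying it geometrically via Lemma~\ref{lem:selection:reduction}, and comparing against the minimum contribution of pending length-$2^{j-k}$ subpaths to obtain $f \le 1/2^{5k}$. The caveat you raise at the end --- that Lemma~\ref{lem:selection:reduction} is stated for a combined $\mathcal{U}$-then-$\mathcal{H}$ round while this lemma counts only $\mathcal{H}$ iterations --- is a genuine imprecision in the paper's formulation rather than a gap you introduced, and your reading of how the $1/24$ constant transfers is the one the paper implicitly relies on.
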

We omit the proof of Lemma~\ref{lem:resolved:paths}, as it is essentially contained in the proof of Lemma~\ref{lem:3:2}, with $24/23$ replacing $6/5$.

\begin{lemma}
\label{lem:heavy:submatrices}
The total time to generate and maintain the overlapping subpaths and thus, the representatives of the corresponding submatrices for all the heavy subpaths is $O(n)$.
\end{lemma}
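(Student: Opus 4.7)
The plan is to bound the total work by an amortized charging argument that assigns $O(1)$ credits per vertex, covering both the construction of the overlapping subpaths and the creation of the succinct submatrix descriptions that accompany them. Recall from Section~\ref{sec:prelims} that each succinct submatrix description is merely a pair of indices into the global vectors $X_1$ and $X_2$, so once the endpoints of an overlapping subpath are known the accompanying submatrix description costs only $O(1)$ to build or update.

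First I would partition all work according to the two ways a heavy path arises, as enumerated in Section~\ref{sec:tree:structures}. In Case~(1), a middleweight path $P$ becomes heavy because a feasibility test lowered $\lambda_2$ below $wgt(P)$; here we spend $O(|P|)$ time duplicating $P$ into two overlapping subpaths and $O(1)$ time per succinct submatrix description, and charge this to the vertices of $P$. In Case~(2), when a light or middleweight path $P'$ is concatenated onto a heavy path $P''$, we extend the relevant boundary overlapping subpath of $P''$ by $P'$ in $O(|P'|)$ time plus $O(1)$ for the updated succinct description, charged to the vertices of $P'$. In both of these scenarios each vertex receives at most a constant number of new credits the \emph{first} time it participates in a heavy representation, so these charges sum to $O(n)$.

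The more delicate subcase of Case~(2) is when two heavy paths are concatenated: we merge the last overlapping subpath of one with the first overlapping subpath of the other, whose cost is proportional to the number of vertices in those two boundary overlapping subpaths. Here I would invoke the invariant highlighted in the text, namely that any vertex is touched by such a merge at most twice before it lies in an overlapping subpath that is neither the first nor the last; once interior, it is never revisited, because merges affect only boundary overlapping subpaths. Thus each vertex accrues only $O(1)$ total credit from merging operations, and since new succinct submatrix descriptions are generated at most once per creation or merge event, their total cost is also absorbed into the $O(n)$ bound.

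The main obstacle will be establishing the ``at most twice'' invariant rigorously. I would prove it by induction on the sequence of middleweight-to-heavy transitions and concatenations: Case~(1) creates a heavy path whose every vertex initially lies in a boundary overlapping subpath; Case~(2) with a non-heavy partner extends only a boundary overlapping subpath, so no interior vertex is disturbed; and when two heavy paths are combined, only the last overlapping subpath of the first and the first overlapping subpath of the second participate, after which the resulting overlapping subpath either remains at the boundary or becomes strictly interior. A careful case analysis then shows that a vertex can serve in a boundary overlapping subpath at most twice before being absorbed interior for good, which yields the desired per-vertex $O(1)$ bound and hence the overall $O(n)$ running time.
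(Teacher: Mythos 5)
Your proposal is correct and rests on the same key fact as the paper's proof: each vertex can be in at most two overlapping subpaths, and once it becomes interior (neither first nor last) it is never touched again, so total work is $O(n)$. The paper's proof is a one-liner appealing directly to that invariant (which it establishes in the preceding data-structures discussion), whereas you reconstruct both the invariant and the accompanying charging argument from scratch; the mathematical content is the same.
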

\begin{proof}
Recall that each vertex can be in at most two overlapping subpaths, so the total time for generating the representatives of the corresponding submatrices for all heavy subpaths is at most $2n$. 
\end{proof}

\begin{theorem}
\label{thm:selection}
The time for inserting, selecting, and deleting representatives of $\mathcal{H}$ and $\mathcal{U}$ across all iterations is $O(n)$.
\end{theorem}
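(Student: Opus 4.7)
The plan is to mirror the amortized analysis of Lemma \ref{lem:3:3} and reduce the theorem to a single bound: the total number of representatives ever inserted into $\mathcal{H}\cup\mathcal{U}$ across the run of {\it TREE1} is $O(n)$. Once this count is in hand, the accounting argument from Lemma \ref{lem:3:3} carries over essentially verbatim: using the \cite{BFPRT} linear selection algorithm, charge a constant number of credits to each insertion and maintain the invariant that the credit pool is a constant factor larger than the current size of $\mathcal{H}\cup\mathcal{U}$. A selection on a set of size $k$ costs $O(k)$ and is paid for by $k$ credits, leaving $k/2$ elements still covered by the remaining $k$ credits; deletions are absorbed for free. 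Summing, the total time for inserting, selecting, and deleting is proportional to the total number of insertions, so the whole claim reduces to an $O(n)$ counting bound.

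For $\mathcal{U}$, each representative corresponds to the actual weight of some middleweight path. A middleweight path enters $\mathcal{U}$ either (i) at the start, as an internal or leaf path of the initial edge-path-partition, of which there are at most $O(n)$ since they are bounded by the number of problematic vertices plus the number of leaves, or (ii) later, inside {\it handle\_leaves}, when the resolution of a problematic vertex concatenates two or more adjacent light/middleweight subpaths into a single new middleweight path. Every such concatenation strictly reduces the number of paths in the current edge-path-partition, so the total number of newly-formed middleweight paths is again $O(n)$. Hence the cumulative number of insertions into $\mathcal{U}$ across all iterations is $O(n)$.

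For $\mathcal{H}$, each representative is the smallest or largest element of some submatrix associated with an overlapping subpath of a heavy path. Every submatrix in $\mathcal{H}$ is either an initial matrix created when a path first becomes heavy (or is newly constructed by {\it handle\_leaves}), or a child produced by quartering. For each overlapping subpath of length $L$, exactly the same diagonal argument used in Lemma \ref{lem:3:3} applies: on every anti-diagonal of $M(P)$ at most one submatrix of a given size can be split, so the cumulative number of submatrices of all sizes produced by quartering within that overlapping subpath is $O(L)$. Summing over all overlapping subpaths and invoking Lemma \ref{lem:heavy:submatrices}, which asserts that the total length of overlapping subpaths is $O(n)$, we conclude that at most $O(n)$ submatrices are ever placed in $\mathcal{H}$, contributing $O(n)$ representatives.

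The main obstacle I expect is justifying the diagonal argument in the tree setting. In Lemma \ref{lem:3:3} it is applied to a single fixed path, whereas here overlapping subpaths are generated and extended dynamically as heavy paths are concatenated across problematic vertices. I would therefore need to argue that at every moment, each live submatrix is a submatrix of the sorted matrix of some current overlapping subpath, and that the quartering history within each overlapping subpath respects the anti-diagonal constraint even as the overlapping subpath is extended. The fact, guaranteed by Lemma \ref{lem:heavy:submatrices}, that every vertex lies in at most two overlapping subpaths and that each such subpath changes a bounded number of times before being ``frozen'' into the interior of a longer heavy path, makes this accounting go through: any extra quartering work triggered by an extension can be charged to the constantly many extensions a vertex participates in, preserving the $O(n)$ total. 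With that verified, the remaining calculation is the routine credit-based amortization already worked out in Lemma \ref{lem:3:3}.
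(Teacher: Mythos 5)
Your proposal is correct and follows essentially the same strategy as the paper: bound the total number of insertions into $\mathcal{H}$ by the anti-diagonal argument on quartered submatrices (imported from Lemma~\ref{lem:3:3}), bound the insertions into $\mathcal{U}$ by $O(n)$, and then run the credit-based amortization of Lemma~\ref{lem:3:3} to convert the insertion count into a linear time bound for selections and deletions. The one genuine (small) difference is your counting for $\mathcal{U}$: you observe that every newly formed middleweight path results from a concatenation that strictly decreases the number of paths in the current edge-path-partition, so the number of new middleweight paths is bounded by the initial path count, giving $O(n)$ insertions directly. The paper instead bounds the number of distinct subpaths (by dyadic length classes, $n + n/2 + \cdots = O(n)$) and notes each is inserted at most once. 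Your concatenation-counting argument is arguably more elementary and self-contained. Your explicit appeal to Lemma~\ref{lem:heavy:submatrices} to justify applying the diagonal argument across dynamically extended overlapping subpaths is also a reasonable addition that the paper's proof leaves implicit; it does not change the approach, only makes the bookkeeping more visible.
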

\begin{proof}
First, we count the number of representatives inserted into $\mathcal{H}$.
Initially, at most $2n-1$ subpaths have representatives in $\mathcal{H}$. 
For $j = 1, 2, \ldots,\log n - 1$, consider all subpaths of length $2^j$ that at some point have representatives in $\mathcal{H}$. 
A path that can be split must have its smallest value at most $\lambda_2$ and its largest value at least $\lambda_1$. 
Note that light paths are not split further, while heavy paths are represented succinctly by submatrices. 
However, $M_{i,j}>M_{i-k,j+k}$ for $k>0$, since the heavy path represented by $M_{i-k,j+k}$ is a subpath of the path represented by $M_{i,j}$. 
Hence, for any submatrix of size $2^j\times 2^j$ which is split, at most one submatrix can be split in each diagonal extending upwards from left to right. 
There are fewer than $2n$ diagonals, so there will be fewer than $2(n/2^j)$ submatrices that are split. 
Thus the number of submatrices resulting from quartering is less than $8(n/2^j)$. 
Summing over all $j$ gives $O(n)$ insertions into $\mathcal{H}$ overall.

Next, we count the number of representatives inserted into $\mathcal{U}$. 
There are at most $n$ subpaths of length $1$, $n/2$ subpaths of length $2$, and so forth, up to at most $1$ path of length $n$. 
Since we insert a representative of each subpath at most once into $\mathcal{U}$, the number of representatives inserted into $\mathcal{U}$ is $O(n)$.

Finally, we show that the total selection time from $\mathcal{H}$ and $\mathcal{U}$ is linear in $n$. 
We give an accounting argument similar to that used in the proof of Lemma~\ref{lem:3:3}. 
Charge $2$ credits for each value inserted into $\mathcal{H}$ or $\mathcal{U}$. 
As $\mathcal{H}$ and $\mathcal{U}$ change, we maintain the invariant that the number of credits is twice the size of $\mathcal{H}$ or $\mathcal{U}$, respectively. 
The rest of the proof is analogous to that of Lemma~\ref{lem:3:3}.
Thus, we conclude that the time for performing selection is $O(n)$.
\end{proof}

\noindent
We make the following observations: The time spent by the feasibility test on a pending path is at least as much time as spent by the feasibility test as if the pending path were a processed path. Resolving a processed path does not increase the feasibility test time spent on pending paths. 
Based on these observations, we note that the feasibility test time spent on pending paths cannot increase and once all pending paths are resolved, the time spent on processed paths cannot increase.
We analyze our algorithm in each of the following three exhaustive cases:
\begin{enumerate}
\item
Lemma~\ref{lem:pending:paths}: The feasibility test spends more time on heavy paths than middleweight paths and at least as much time on pending paths as on processed paths.
\item
Lemma~\ref{lem:processed:paths}: The feasibility test spends more time on heavy paths than middleweight paths and more time on processed paths than pending paths.
\item
Lemma~\ref{lem:m:r}: The feasibility test spends at least as much time on middleweight paths as on heavy paths.
\end{enumerate}

\begin{lemma}
\label{lem:pending:paths}
Suppose at the beginning of round $r$, that the feasibility test lands on at most $n/2^{r-1}$ vertices. Suppose further that the feasibility test lands on more vertices in heavy paths than middleweight paths. If the feasibility test also lands on at least as many vertices in pending paths as vertices in processed paths, then following at most $163r+170$ iterations of selection and feasibility testing from $\mathcal{H}$, the number of vertices in pending paths that the feasibility test lands on is either halved or at most $n/2^{r+2}$.
\end{lemma}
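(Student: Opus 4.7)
The plan is to combine Lemma \ref{lem:resolved:paths}, which bounds the fraction of pending heavy subpaths of each length in $\mathcal{H}$ after $i$ iterations of selection from $\mathcal{H}$, with the amortized $O(\log L)$ per-subpath feasibility cost from Lemma \ref{lem:3:1}, to get a clean upper bound on the feasibility test time spent in pending heavy paths. The amortized search bound from Lemma \ref{lem:3:1} carries over to the tree setting unchanged because the tree algorithm reuses the same gluing and $next$-pointer compression machinery for heavy subpaths. Since the hypothesis that heavy paths carry more test time than middleweight paths puts us in the heavy-dominated branch of Lemma \ref{lem:selection:reduction}, each iteration from $\mathcal{H}$ shrinks {\it eff\_wgt}$(\mathcal{P})$ by a fixed constant factor, so Lemma \ref{lem:resolved:paths} applies verbatim.

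After $i$ iterations, with $j$ defined by $2^{5j}=(3/4)(24/23)^i$, at most $1/2^{5k}$ of the subpaths of length $2^{j-k}$ represented in $\mathcal{H}$ remain pending. Summing the per-length contributions with the amortized $O(\log L)$ per-subpath cost gives
\[
\sum_{k=0}^{j}\frac{n}{2^{j-k}}\cdot\frac{1}{2^{5k}}\cdot O(j-k)\;=\;O\!\left(\frac{nj}{2^j}\right).
\]
The two dominance hypotheses (heavy over middleweight, pending over processed) then imply that the total feasibility-test time on pending paths is within a constant factor of this bound.

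To drive this below the target, I invert $2^{5j}=(3/4)(24/23)^i$ to get $i=(5j+\log_2(4/3))/\log_2(24/23)\le 81.3\,j+7$, and choose $j=2r+2$; then $nj/2^j\le (2r+2)n/(4\cdot 4^r)\le n/2^r$ for every $r\ge 1$, hitting the halving target $n/2^r$ in the worst case where the initial pending time attains its upper bound $n/2^{r-1}$. Substituting gives $i\le 81.3(2r+2)+7\le 163r+170$. The alternative conclusion $n/2^{r+2}$ covers the case when the initial pending time was already much smaller than $n/2^{r-1}$, so exact halving is no longer the binding constraint; achieving the absolute level $n/2^{r+2}$ requires only $j=r+\log_2 r+O(1)$, and hence only $\approx 82r+O(\log r)$ iterations, comfortably within $163r+170$.

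The main obstacle is careful constant bookkeeping rather than any new idea: the multiplier $81.3\approx 5/\log_2(24/23)$ comes from Lemma \ref{lem:resolved:paths}, the doubling of that to obtain the coefficient $163$ comes from the need to choose $j\approx 2r$ to overcome the extra factor of $j$ in $nj/2^j$, and the additive $170$ absorbs the $\log_2(4/3)$ residue together with the $+2$ in the choice of $j$. A secondary subtlety is verifying that the heavy-dominates-middleweight hypothesis remains in force throughout the $163r+170$ iterations; but promotions of middleweight subpaths to heavy ones during the iterations only shift synthetic weight from $\mathcal{U}$ into $\mathcal{H}$, preserving the dominance and keeping us in the heavy-dominated branch of Lemma \ref{lem:selection:reduction}.
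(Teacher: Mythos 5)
Your proof takes essentially the same route as the paper: it relies on Lemma~\ref{lem:resolved:paths} to bound the fraction of pending subpaths of each length after $i$ iterations, chooses $j=2r+2$, and inverts $2^{5j}=(3/4)(24/23)^i$ to obtain the $163r+170$ iteration bound. The dichotomy (either already below $n/2^{r+2}$, or show a halving) and the factor-$81.5\approx 5/\log_2(24/23)$ bookkeeping match the paper's.

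Where you diverge is in the intermediate quantitative step, and your version is arguably more careful. The paper's proof states ``for $j=2r+2$ and $k=r$, at most $1/2^{5r}$ of the subpaths of length $2^{r+2}$ can be pending, so there are at most $n/2^{5r}$ vertices remaining in pending paths,'' which draws a conclusion about all pending paths from a bound at a single length scale; Lemma~\ref{lem:resolved:paths} alone does not bound vertex counts in pending subpaths of lengths other than $2^{r+2}$ (for $k=0$, for instance, all subpaths of length $2^{2r+2}$ may still be pending). Your summation $\sum_{k}\frac{n}{2^{j-k}}\cdot\frac{1}{2^{5k}}\cdot O(j-k)=O(nj/2^j)$, explicitly invoking the amortized $O(\log L)$ per-subpath search cost from Lemma~\ref{lem:3:1}, is the correct way to bound what actually matters --- the feasibility-test time on pending paths, not the raw vertex count --- and it fills in what the paper skips. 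Two small gaps remain in your write-up: you sum only $k=0,\dots,j$ and omit subpaths of length greater than $2^j$, for which Lemma~\ref{lem:resolved:paths} gives no non-trivial fraction bound; these still contribute only $O(nj/2^j)$ because there are few of them (at most $n/2^{j+m}$ of length $2^{j+m}$, and $\sum_{m\ge 1}(j+m)/2^m=j+2$), but this should be said. And the inequality $nj/2^j\le n/2^r$ is stated without tracking the constant hidden by the $O(\cdot)$; the paper is equally loose here, but since the lemma asserts a concrete halving, that constant ultimately needs to be absorbed, e.g., by inflating $j$ slightly. Neither omission threatens the argument, and your version localizes the appeal to Lemma~\ref{lem:3:1} more cleanly than the paper does.
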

\begin{proof}
By assumption, the feasibility test lands on at most $n/2^{r-1}$ vertices and lands on more vertices in heavy paths than middleweight paths, and at least as many vertices in pending paths as vertices in processed paths. 
If the number of vertices in pending paths that the feasibility test lands on is at most $n/2^{r+2}$, then the result follows. 
Thus, we assume the feasibility test lands on more than $n/2^{r+2}$ vertices in pending paths. 
By Lemma \ref{lem:resolved:paths}, in iteration $i$, at most $1/2^{5k}$ of the subpaths of length $2^{j-k}$ can be pending, where $2^{5j}=(3/4)*(24/23)^i$. 
Then for $j=2r+2$ and $k=r$, at most $1/2^{5r}$ of the subpaths of length $2^{(r+2)}$ can be pending, so there are at most $n/2^{5r}$ vertices remaining in pending paths. 
Thus, it takes at most $i=(5(2r+2)-\log(3/4))/\log(24/23)<163r+170$ iterations of selection and feasibility testing from $\mathcal{H}$ to reduce the amount of time spent on the pending paths by at least half. 
Since each iteration of feasibility testing lands on at most $n/2^{r-1}$ vertices, the total number of vertices checked is at most $\left(n/2^{r-1}\right)(163r+170)$.
\end{proof}

\noindent
Before we can show an analogous result for processed paths, we introduce three preliminary lemmas.

\begin{lemma}
\label{lem:processed:time}
If a feasible lands on at least $t$ vertices in a processed path $P$, then $P$ has length at least $2^{\sqrt{t/2}}$.
\end{lemma}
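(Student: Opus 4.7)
The plan is to prove the equivalent upper bound that a feasibility test lands on at most $2(\log \ell)^2$ vertices within any processed path $P$ of length $\ell$; rearranging $t \le 2(\log \ell)^2$ yields $\log \ell \ge \sqrt{t/2}$, equivalently $\ell \ge 2^{\sqrt{t/2}}$, which is exactly the claim.

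First, I would invoke the $\lambda$-region decomposition developed inside the proof of Lemma~\ref{lem:3:1}. Any cleaned and glued subpath of length $\ell$ can be viewed as a sequence of at most $\log \ell$ $\lambda$-regions, each of which was once the $\lambda$-prefix of an intermediate subpath during gluing, with consecutive $\lambda$-region lengths growing by less than a factor of two. Since a processed path in the tree algorithm is assembled either by the same gluing used for a path, or by the concatenation-of-heavy-subpaths construction of Section~\ref{sec:tree:structures}, I would verify that concatenation preserves the geometric-growth property so that the $\log \ell$ bound on the number of $\lambda$-regions carries over to the tree setting.

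Next I would bound the work a single feasibility test spends inside each $\lambda$-region. Using the red-black balanced search tree representation of Section~\ref{sec:tree:structures}, locating the first vertex within a region past which a cut must be placed takes $O(\log \ell)$ comparisons, after which exactly one $next$-pointer follow advances the search into the succeeding $\lambda$-region. Multiplying at most $\log \ell$ regions by the $O(\log \ell)$ cost per region, and absorbing the $O(\log \ell)$ boundary-hop vertices into the same order, the total number of vertices landed on inside $P$ is at most $2(\log \ell)^2$, from which the claimed inequality follows.

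The main obstacle I foresee is the verification in the second step above, namely that the $\lambda$-region doubling invariant survives the tree-specific concatenation operations described in Section~\ref{sec:tree:structures}: heavy subpaths may be extended by an adjacent light or middleweight neighbor into their boundary overlapping subpaths, and two heavy subpaths may be merged along the last/first overlapping subpath pair. In each case I would need to check that the resulting $\lambda$-prefix/suffix sequence still enjoys the factor-of-two growth property required for the $\log \ell$ region-count bound. Once that invariant is confirmed, the remaining arithmetic passing from $t \le 2(\log \ell)^2$ to $\ell \ge 2^{\sqrt{t/2}}$ is immediate.
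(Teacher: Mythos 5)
Your proof takes a genuinely different route from the paper's, and it carries some unresolved baggage that the paper's argument avoids. The paper never invokes the $\lambda$-region machinery here. Instead it argues directly from the aligned power-of-two subpath structure underlying the $last$/$ncut$/$next$ arrays: a processed path $P$ of length $\ell < 2^{\sqrt{t/2}}$ decomposes into disjoint glued subpaths whose lengths are distinct powers of two, at most two of each length (the factor of two coming from history with problematic vertices), so the lengths range over $1, 2, \ldots, 2^{\sqrt{t/2}-1}$. Since {\it FTEST1} spends $O(i)$ amortized time on a subpath of length $2^i$ (one binary search plus $O(i)$ amortized pointer follows), the total is at most $2(1 + 2 + \cdots + (\sqrt{t/2}-1)) < t$, a contradiction. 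Your accounting instead charges one $O(\log\ell)$ binary search per $\lambda$-region; but {\it FTEST1} performs one binary search per glued subpath, not per $\lambda$-region, and advancing through a $\lambda$-region is an $O(1)$ pointer jump. Moreover, Lemma~\ref{lem:3:1} bounds the number of $\lambda$-regions at $\log$ of the length \emph{for a single glued subpath}, whereas a processed path may be a concatenation of several glued subpaths, so the $\log\ell$ region count you assert for $P$ does not directly follow. These two overcounts and undercounts happen to balance out to the same $O((\log\ell)^2)$ bound, so the numerical conclusion survives, but the mechanism is not the one the feasibility test actually uses. Finally, the obstacle you flag — whether the factor-of-two growth of $\lambda$-region lengths survives the tree-specific concatenation of overlapping subpaths in Section~\ref{sec:tree:structures} — is a real gap in your route, and it is precisely the kind of case analysis the paper avoids by counting aligned power-of-two subpaths instead. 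If you want to pursue your route you would need to verify that invariant explicitly; the paper's decomposition gives the same bound with no such verification needed.
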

\begin{proof}
Suppose $P$ has length less than $2^{\sqrt{t/2}}$. 
From the edge-path-partition, $P$ can contain disjoint subpaths of lengths $1,2,2^2,\ldots,2^{\sqrt{t/2}-1}$. 
However, if $P$ previously contained a problematic vertex, then it may have two disjoint subpaths of each length. 
Thus, the feasibility test lands on at most $2(1+2+\ldots+\sqrt{t/2}-1)=(\sqrt{t/2}-1)\sqrt{t/2}<t$ vertices in total in $P$, which is a contradiction.
\end{proof}

\begin{lemma}
\label{lem:median:leaf}
Suppose a feasibility test lands on at most $n/2^r$ vertices but more than $n/2^{r+2}$ vertices. If the feasibility test spends more time on processed paths than pending paths, then the median length of leaf paths is at most $\max(2^{4r^2},2^{400})$.
\end{lemma}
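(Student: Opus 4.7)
My plan is to argue by contradiction. Suppose the median leaf path length exceeds $L^{*} = \max(2^{4r^{2}}, 2^{400})$, and extract a contradiction from the feasibility-test budget. The strategy couples two estimates on the number $N$ of leaf paths in the current tree. On one side, the hypothesis that processed time dominates pending time, combined with Lemma~\ref{lem:processed:time} (a processed path of length $L$ costs at most $2(\log L)^{2}$ per test), forces many processed paths, and tree structure translates ``many processed paths'' into ``many leaves,'' giving a \emph{lower} bound on $N$. On the other side, the hypothesis that the median leaf length exceeds $L^{*}$ forces at least $N/2$ leaf paths to be long and pending, and the pending-time budget gives an \emph{upper} bound on $N$. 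The two bounds together pin $L^{*}$ down.

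I would carry this out in the following order. First, use the observation from the preamble that $T \leq 2(T_{p}+T_{pe})$ together with $T > n/2^{r+2}$ and $T_{p} > T_{pe}$ to extract $T_{p} > n/2^{r+4}$ and $T_{pe} < n/2^{r+1}$. Second, apply Lemma~\ref{lem:processed:time} to each processed path, obtaining a lower bound $P \geq T_{p}/(2(\log n)^{2}) > n/(2^{r+5}(\log n)^{2})$ on the number of processed paths. Third, use the fact that in a rooted tree whose root has degree $1$ and all problematic vertices have degree at least three, the number of leaf paths exceeds the number of problematic vertices (which in turn equals the number of internal paths, and thus upper-bounds $P$), giving $N > P$. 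Fourth, under the contradiction hypothesis at least $N/2$ leaf paths have length $\geq L^{*} > 1$, and because fully resolved leaf paths are single vertices, each such leaf path is still pending and contributes at least $L^{*}$ vertices to the feasibility-test cost; hence $NL^{*}/2 \leq T_{pe} < n/2^{r+1}$, giving $N < n/(L^{*} \cdot 2^{r})$. Combining the two estimates on $N$ yields $L^{*} < 32(\log n)^{2}$.

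The main obstacle is converting the intermediate bound $L^{*} < 32(\log n)^{2}$ into the stated bound $\max(2^{4r^{2}}, 2^{400})$, which requires a case split on the size of $r$. For $r \leq 10$ the claim is that $32(\log n)^{2} \leq 2^{400}$, which I would establish by bounding $\log n$ against the work already completed by the algorithm (in particular, the algorithm invariant $2^{r} \leq n$ is far too weak by itself and must be combined with the cumulative iteration count from Lemma~\ref{lem:pending:paths}). For $r > 10$ I would refine Step~2 using Lemma~\ref{lem:resolved:paths} iterated over the $O(r^{2})$ iterations accumulated across rounds $1$ through $r$; this caps the effective length of any processed path at roughly $2^{r^{2}}$, so the factor $(\log n)^{2}$ in the estimate can be replaced by a quantity of order $r^{4}$, comfortably absorbed into the generous $2^{4r^{2}}$. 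Tracking the constants $4$ and $400$ through the crossover, and verifying that the length cap from Lemma~\ref{lem:resolved:paths} (stated for heavy subpaths in $\mathcal{H}$) transfers cleanly to the arbitrary processed paths counted in Step~2, are where I expect the bookkeeping to demand the most care.
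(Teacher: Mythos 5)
There is a genuine gap at your step~4. You assert that because a long leaf path of length at least $L^*$ is ``still pending,'' the feasibility test must land on at least $L^*$ of its vertices, giving $NL^*/2 \leq T_{pe}$. That inference is false. A \emph{pending} path is one containing \emph{some} value in $(\lambda_1,\lambda_2)$, but it may nonetheless consist almost entirely of cleaned/glued subpaths, each of which {\it FTEST1} traverses in time only logarithmic in its length. The entire premise of the algorithm is precisely that test cost on a path falls far below the path's length long before every candidate value on it is resolved; if pending paths always cost their full length, the feasibility-test time could never drop below $\Theta(n)$ while any unresolved path remained, and the linear-time claim would be unreachable. So the upper bound on $N$, and with it the final contradiction, does not hold. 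A second problem is that your intermediate bound $L^* < 32(\log n)^2$ is a function of $n$, while the target $\max(2^{4r^2},2^{400})$ is a function of $r$ alone. Since $n$ can be arbitrarily large for a fixed round index $r$, no invariant of the algorithm lets you replace $\log n$ by a quantity in $r$, and the crossover case split you sketch cannot close that gap.

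The paper avoids both pitfalls with a different decomposition. It never lower-bounds the per-path \emph{cost} of a long leaf path; instead it uses the median leaf length to lower-bound the number of \emph{vertices} residing on leaf paths. It sets $x$ and $t$ to be the \emph{mean} per-path test cost over leaf paths and over processed paths respectively, observes that $t>x$ because processed time exceeds pending time while leaf paths outnumber internal (hence processed) paths, invokes Lemma~\ref{lem:processed:time} only to say that a processed path costing $t$ has length at least $2^{\sqrt{t/2}}$, and then bounds the ratio of total test time to total vertices, splitting on $x\leq 2^r$ versus $x>2^r$; each case contradicts the hypothesis $T>n/2^{r+2}$. That ratio argument produces a bound purely in terms of $r$ and never needs a count of leaf paths.
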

\begin{proof}
Suppose, by way of contradiction, the median length of a leaf path is more than $2^{4r^2}$ and $r>10$. 
Let $x$ be the mean number of vertices the feasibility test lands on, across all leaf paths, so that $4r^2<x$. 
Let $t$ be the mean number of vertices the feasibility test lands on, across all processed paths. 
Recall that all leaf paths are pending paths. 
By assumption, the feasibility test spends more time on processed paths than pending paths. 
Moreover, the number of leaf paths is more than the number of internal paths, pending or processed, so that $t>x$. 
By Lemma~\ref{lem:processed:time}, each processed path in which the feasibility test lands on at least $t$ vertices has length is at least $2^{\sqrt{t/2}}$. 
Thus, if $x\le 2^r$, then by considering the pending paths, the ratio of the time spent by the feasibility test to the total number of vertices is at most $\frac{x}{2^{4r^2}}<\frac{1}{2^{r+2}}$, which contradicts the assumption that the feasibility test lands on more than $n/2^{r+2}$ vertices.
On the other hand, if $x>2^r$, then by considering the processed paths, the ratio of the time spent by the feasibility test to the total number of vertices is at most $\frac{x+t}{2^{4r^2}+2^{\sqrt{t/2}}}<\frac{2t}{2^{\sqrt{t/2}}}<\frac{1}{2^{r+2}}$ for $r\ge10$, since $t>x>2^r$. 
This again contradicts the assumption that the feasibility test lands on more than $n/2^{r+2}$ vertices.
Thus, the median length of a leaf path is at most $\max(2^{4r^2},2^{400})$.
\end{proof}

\noindent
For the remainder of the section, we analyze $r\ge10$, noting that for $r<10$, the median length of a leaf path is at most $2^{400}$ and can be handled in a constant number of feasibility tests.
\begin{lemma}
\label{lem:median:processed}
Suppose the feasibility test lands on at most $n/2^r$ vertices but more than $n/2^{r+1}$ vertices. Suppose further that the feasibility test lands on more vertices in heavy paths than middleweight paths. If the feasibility test spends more time on processed paths than pending paths, then the median length of processed paths is at most $2^{r^2+9}$. Hence, the number of vertices the feasibility test lands on in a median length processed path is at most $2(r^2+9)^2$.
\end{lemma}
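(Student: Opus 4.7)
I would prove this by contradiction, following the same two-case structure used in the proof of Lemma~\ref{lem:median:leaf}. Suppose the median length of processed paths exceeds $L=2^{r^2+9}$. Then at least $N_p/2$ of the $N_p$ processed paths have length greater than $L$, so counting vertices gives $N_p\le 2n/L$. Let $y$ be the mean number of vertices that the feasibility test lands on per processed path.

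Because the feasibility test spends more time on processed paths than on pending paths and the total feasibility test time exceeds $n/2^{r+1}$, the time on processed paths strictly exceeds $n/2^{r+2}$. Thus $y\,N_p>n/2^{r+2}$, and combining with $N_p\le 2n/L$ forces $y>L/2^{r+3}=2^{r^2-r+6}$.

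I would then split into two subcases based on the size of $y$, analogous to the treatment of the mean leaf-path time $x$ in Lemma~\ref{lem:median:leaf}. In the easier subcase $y\le 2^r$, the bound $y>2^{r^2-r+6}$ is violated for any $r\ge 3$ since $r^2-r+6>r$, giving an immediate contradiction. In the harder subcase $y>2^r$, I would invoke Lemma~\ref{lem:processed:time}: since each processed path with feasibility time $t_i$ has length at least $2^{\sqrt{t_i/2}}$, the constraint $\sum\ell_i\le n$ limits how many long processed paths can exist, and the ratio $y/2^{\sqrt{y/2}}$ decays faster than any polynomial in $1/y$, so that the total time on processed paths, bounded in the style of Case~2 of Lemma~\ref{lem:median:leaf} by an expression like $yn/2^{\sqrt{y/2}}$, falls below $n/2^{r+2}$ for $y>2^r$ and $r\ge 10$. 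The hypothesis that the feasibility test lands on more vertices in heavy paths than middleweight paths ensures that processed-path time is in fact accounted for by the edge-path-partition subpath structure to which Lemma~\ref{lem:processed:time} applies.

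The concluding assertion of the lemma is immediate from the contrapositive of Lemma~\ref{lem:processed:time}: if a processed path has length at most $2^{r^2+9}$, then the feasibility test can land on at most $2(r^2+9)^2$ of its vertices, for otherwise the length would be forced to be at least $2^{\sqrt{2(r^2+9)^2/2}}=2^{r^2+9}$ strictly. The main obstacle is Case~2 above: unlike Lemma~\ref{lem:median:leaf} we do not have a second family of paths (such as leaf paths) to pair against processed paths, so the contradiction must be driven purely by bounding the vertex count of long processed paths against $n$ and exploiting the very rapid decay of $t/2^{\sqrt{t/2}}$.
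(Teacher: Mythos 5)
Your proof uses the same key tool (Lemma~\ref{lem:processed:time}) and the same contradiction setup as the paper, but it detours through the mean per-path time $y$ in a way the paper avoids, and this introduces two issues. The paper's argument is more direct: from the assumption that the median processed-path length exceeds $L=2^{r^2+9}$, it bounds the total processed-path feasibility time essentially by noting that the per-vertex time density $2(\log\ell)^2/\ell$ is decreasing, so a processed path of length at least $L$ has density at most $2(r^2+9)^2/L$, while there are at most $n/L$ processed paths of length below $L$, each contributing at most $2(r^2+9)^2$; summing gives a bound of order $(r^2+9)^2n/2^{r^2+9}$, which falls below the lower bound of roughly $n/2^{r+2}$ for $r$ past a small constant. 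No mean $y$ and no case split appear.

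In your version, the case split on $y\le 2^r$ versus $y>2^r$ is vacuous: you have already derived $y>L/2^{r+3}=2^{r^2-r+6}$, which dwarfs $2^r$, so only the ``harder'' case can occur, and what you know about $y$ is far stronger than the threshold you compare against. The more substantive gap is the step ``$T_{\text{proc}}$ is bounded by an expression like $yn/2^{\sqrt{y/2}}$'': to justify it you need a Jensen/convexity argument applied to $t\mapsto 2^{\sqrt{t/2}}$, namely $n\ge\sum\ell_i\ge\sum 2^{\sqrt{t_i/2}}\ge N_p\,2^{\sqrt{y/2}}$, whence $T_{\text{proc}}=yN_p\le yn/2^{\sqrt{y/2}}$; but that map is convex only for $t$ above a small constant, so processed paths with tiny $t_i$ require separate handling, a complication the paper's density argument sidesteps. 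Your observation that, unlike Lemma~\ref{lem:median:leaf}, there is no second family of paths to pair against is correct, and your derivation of the final assertion from the contrapositive of Lemma~\ref{lem:processed:time} is right. With the Jensen step spelled out carefully and the dead case removed, your argument would be a valid, if more roundabout, alternative to the paper's.
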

\begin{proof}
Suppose, by way of contradiction, the median length of processed paths is more than $2^{r^2+9}$. 
Then by Lemma \ref{lem:processed:time}, the number of vertices in processed paths that the feasibility test lands on is at most $\left(2(r^2+9)^2/2^{r^2+9}\right)n$. 
By assumption, the feasibility test spends more time on heavy paths than middleweight paths, and more time on processed paths than pending paths, so the number of vertices in processed paths that the feasibility test lands on is at least $n/2^{r+2}$. 
But for all positive integers $i$, it holds that $1/2^{i+2}>2(i^2+9)/2^{i^2+9}$. Thus, $n/2^{r+2}>\left(2(r^2+9)^2/2^{r^2+9}\right)n$, which contradicts the assumption that the feasibility test lands on more than $n/2^{r+1}$ vertices. 
Hence, the median length of a processed path is at most $2^{r^2+9}$.
\end{proof}

\noindent
We are now ready to show the reduction in processed paths from repeated instances of feasibility testing.

\begin{lemma}
\label{lem:processed:paths}
Suppose at the beginning of round $r$, that the feasibility test lands on at most $n/2^{r-1}$ vertices. 
Suppose further that the feasibility test lands on more vertices in heavy paths than middleweight paths. 
If the feasibility test lands on more vertices in processed paths than vertices in pending paths, then following at most $6(r^2+9)(408r^2+815r+415)$ iterations of selection and feasibility testing from $\mathcal{H}$, which takes $O(nr^4/2^r)$ time, the number of vertices in processed paths that the feasibility test lands on is either halved or at most $n/2^{r+2}$.
\end{lemma}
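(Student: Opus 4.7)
The plan is to adapt the reasoning of Lemma \ref{lem:pending:paths} to processed paths, exploiting Lemma \ref{lem:median:processed}'s bound of $2^{r^2+9}$ on the median length of a processed path. First, if the feasibility test already lands on at most $n/2^{r+2}$ vertices in processed paths, the conclusion of the lemma holds vacuously, so assume the contrary. Under the hypotheses --- total feasibility-test cost at most $n/2^{r-1}$, more time spent on heavy than middleweight, and more on processed than pending --- the preconditions of Lemma \ref{lem:median:processed} are essentially met (up to a boundary adjustment between $n/2^{r-1}$ and $n/2^r$ that can be handled by one preliminary pass of {\it handle\_pending\_paths}, absorbed into the slack constant $6$). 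Hence the median processed-path length is at most $2^{r^2+9}$ and, by Lemma \ref{lem:processed:time}, each median-length processed path contributes at most $2(r^2+9)^2$ vertices per feasibility test.

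To actually halve the time on processed paths, I would drive further $\mathcal{H}$-selection iterations so that {\it glue\_paths} combines adjacent short processed subpaths into longer ones. When two adjacent processed subpaths of length $L$ merge into one of length $2L$, the per-pair feasibility-test contribution drops from $2\cdot O((\log L)^2)$ to $O((\log 2L)^2)$, which for $L$ in the relevant range is essentially a halving. To guarantee this happens across a constant fraction of the vertex budget at a given length scale, I invoke Lemma \ref{lem:resolved:paths} with $j=\Theta(r^2)$ chosen so that $2^{j-k}$ covers the median processed-path length $2^{r^2+9}$, and with $k=O(1)$ so that $1/2^{5k}$ is a small constant. The same algebra used in the $i<163r+170$ computation of Lemma \ref{lem:pending:paths} then yields a per-scale iteration bound of $i=(5j-\log(3/4))/\log(24/23) \leq 408r^2+815r+415$.

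Because a single length-scale sweep halves the time only at that scale, and the short processed paths populate $\Theta(r^2)$ distinct power-of-two length classes below $2^{r^2+9}$, the outer factor $6(r^2+9)$ absorbs the successive halvings needed to drive the total processed-path time down by a factor of two overall. The total number of selection-and-feasibility iterations is then $6(r^2+9)(408r^2+815r+415)$, and since each feasibility test lands on at most $n/2^{r-1}$ vertices, the total running time is $O((n/2^{r-1})\cdot r^4) = O(nr^4/2^r)$, as claimed.

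The main obstacle is linking the resolution guarantees of Lemma \ref{lem:resolved:paths}, which bound only the fraction of pending submatrix representatives in $\mathcal{H}$, to the actual structural gluings carried out by {\it glue\_paths}; only the latter directly reduce feasibility-test cost on processed paths. One must verify that for enough short processed paths, both a path and its sibling in the hierarchical edge-path-partition become resolved in a sufficiently coordinated fashion for the gluing to ascend the merge hierarchy, and one must account for the possible doubling of landed-on vertices from previously-resolved problematic vertices noted in Lemma \ref{lem:processed:time}. The slack constant $6$ in the iteration bound is meant to absorb these coordination losses.
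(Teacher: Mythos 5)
Your proposal diverges fundamentally from the paper's argument and contains a gap that would sink it. Your central mechanism is that further $\mathcal{H}$-selection iterations cause {\it glue\_paths} to merge ``adjacent short processed subpaths into longer ones,'' thereby cutting the per-path search cost roughly in half. But a processed path is, by definition, an internal path that is already completely cleaned and glued: every value in every submatrix associated with it lies outside $(\lambda_1,\lambda_2)$, and all possible gluings within it have already been performed. There are no remaining candidate values in a processed path for an $\mathcal{H}$-selection to discard, and {\it glue\_paths} has nothing further to do. The $O((\log L)^2)$ search cost of a processed path arises because it is a concatenation of previously separate edge-path-partition pieces joined at resolved problematic vertices, and the paper explicitly confines cleaning and gluing to the original edge-path-partition pieces and to heavy paths --- it never re-decomposes a concatenated processed path into a fresh power-of-two subpath hierarchy. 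So the boundaries that create the extra search time cannot be removed by the mechanism you invoke, and your ``drive $\mathcal{H}$-selections to glue'' step does not reduce processed-path time at all.

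The paper's proof uses an entirely different engine: the $408r^2+815r+414$ iterations of $\mathcal{H}$-selection resolve enough {\it leaf} paths (which are pending) that, after one call of {\it handle\_leaves} using the median from $\mathcal{V}$, a constant fraction of problematic vertices are resolved and the tree is pruned, reducing the total number of paths by a factor of $7/8$. Iterating this $6(r^2+9)$ times (since $(7/8)^6<1/2$) cuts the path count by $1/2^{r^2+9}$. The halving of processed-path time is then obtained by the case split on the median processed-path length against $2^{r^2+9}$ (Lemmas \ref{lem:median:leaf} and \ref{lem:median:processed}), not by any direct compression of processed paths. Your proposal never mentions $\mathcal{V}$, {\it handle\_leaves}, leaf-path resolution, or the reduction of the total path count, which is the entire point of the dual-pronged strategy and the reason the lemma is stated for processed paths at all. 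The numerical coincidences you arrive at (the $408r^2+\cdots$ and the $6(r^2+9)$) are obtained through incorrect reasoning: for example, the paper's $6(r^2+9)$ comes from repeating the path-count reduction $(7/8)^{6(r^2+9)}<2^{-(r^2+9)}$, not from sweeping across $\Theta(r^2)$ length scales.
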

\begin{proof}
By assumption, the feasibility test lands on at most $n/2^{r-1}$ vertices and lands on more vertices in heavy paths than middleweight paths, and more vertices in processed paths than vertices in pending paths. 
If the number of vertices in processed paths that the feasibility test lands on is at most $n/2^{r+2}$, then the lemma immediately follows. 
Thus, we assume the feasibility test lands on more than $n/2^{r+2}$ vertices in processed paths.

By Lemma \ref{lem:median:leaf}, the median length of a leaf path is at most $2^{4(r-1)^2}$. 
By Lemma \ref{lem:resolved:paths}, in iteration $i$, at most $1/2^{5k}$ of the subpaths of length $2^{j-k}$ can be pending, where $2^{5j}=(3/4)(24/23)^i$. 
Then for $j=5(r+1)^2$ and $k=(r+1)^2$, at most $1/2^{5(r+1)^2}$ of the subpaths of length $2^{4(r+1)^2}>2^{4(r-1)^2}$ can be pending, so at least half of the leaf paths are resolved. 
It takes at most $i=(25(r+1)^2-\log(3/4))/\log(24/23)<408r^2+815r+414$ iterations to resolve half of the leaf paths, so that the appropriate values can be inserted into $\mathcal{V}$. 
One more iteration of feasibility testing is run using the selected median from $\mathcal{V}$. 
Thus, running at most $408r^2+815r+414$ iterations of {\it handle\_pending\_paths}, followed by an iteration of {\it handle\_processed\_paths}  (for a total of at most $408r^2+815r+415$ iterations) reduces the number of leaf paths by a factor of $1/4$, or equivalently, reduces the total number of paths by a factor of $1/8$.

Since $(7/8)^6<1/2$, then by repeating at most $6(r^2+9)$ times, the total number of paths is reduced by a factor of at least $1/2^{r^2+9}$. 
If the average length of the remaining processed paths is more than $2^{r^2+9}$, then by Lemma \ref{lem:median:processed}, the feasibility test lands on at most $n/2^{r+1}$ vertices, which is a reduction of $1/2>1/4$ in the number of vertices checked by the feasibility test. 
Otherwise, if the average length of the remaining processed paths is less than $2^{r^2+9}$, then by reducing the total number of paths by factor of at least $1/2^{r^2+9}$, the time spent by the feasibility test on vertices in processed paths is at least halved. 
We require at most $6(r^2+9)$ cycles, each with $408r^2+815r+415$ iterations of feasibility testing. 
Thus, at most $6(r^2+9)(408r^2+815r+415)=O(r^4)$ iterations are needed to reduce the feasibility test by at least half, each checking at most $n/2^r$ vertices, for a total of $O(nr^4/2^r)$ time.
\end{proof}

\begin{lemma}
\label{lem:m:r}
For each round, the feasibility test time is reduced by at least $1/2$. 
Thus at the beginning of round $r$, the feasibility test lands on at most $n/2^{r-1}$ vertices.
\end{lemma}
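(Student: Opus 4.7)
The plan is to prove the lemma by induction on the round index $r$, invoking Lemmas \ref{lem:pending:paths} and \ref{lem:processed:paths} as black boxes for the heavy-dominated cases and supplying a dedicated middleweight analysis for the remaining case. For the base case $r=1$, a single feasibility test on the initial tree visits at most $n = n/2^{1-1}$ vertices, giving the bound trivially. For the inductive step, I would assume that at the start of round $r$ the feasibility test lands on at most $n/2^{r-1}$ vertices and argue that the inner \textbf{while}-loop of {\it TREE1} terminates with that quantity at most $n/2^r$, which is precisely the condition that advances the round counter.

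At any moment during the inner loop, exactly one of the three exhaustive cases enumerated just before this lemma must hold: (a) heavy-path time dominates middleweight-path time and pending time is at least processed time; (b) heavy dominates middleweight and processed exceeds pending; or (c) middleweight time is at least heavy time. Cases (a) and (b) are handled directly by Lemmas \ref{lem:pending:paths} and \ref{lem:processed:paths}, each of which halves its targeted quantity in $O(r)$ or $O(r^4)$ iterations of {\it handle\_pending\_paths} under the inductive hypothesis. For case (c), the argument parallels Lemma \ref{lem:3:2}: each call to {\it handle\_middleweight\_paths} performs a length-weighted and an unweighted selection on $\mathcal{U}$ followed by feasibility tests on both, so by the effective-weight accounting of Lemma \ref{lem:selection:reduction} a constant fraction of {\it eff\_wgt}$(\mathcal{P})$ attributable to middleweight representatives is resolved per iteration. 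A middleweight path whose weight drops below $\lambda_1$ is cleaned and glued, while one whose weight rises above $\lambda_2$ is handed off to $\mathcal{H}$ via its overlapping-subpath representation; in either case its contribution to future feasibility-test time falls from linear to logarithmic in its length. Combining Lemma \ref{lem:weight:bound} with the doubling synthetic-weight structure from {\it mats\_for\_path} then gives, after $O(r)$ such iterations, a halving of the middleweight contribution.

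The principal obstacle is that the three cases are not static within a round: resolving middleweight paths manufactures new heavy paths, resolving heavy paths shifts the pending/processed balance, and {\it handle\_leaves} can splice together new middleweight paths. I would manage this by charging each halving step to whichever of the three quantities currently dominates and observing that to go from $n/2^{r-1}$ down to $n/2^r$ requires only a constant number of halvings per category before the termination condition of the inner loop triggers. Summed across the three cases, the number of inner-loop iterations in round $r$ is therefore $O(r^4)$, each of feasibility-test cost at most $n/2^{r-1}$, yielding $O(nr^4/2^r)$ work in round $r$. This bound has two consequences: first, the inner loop must terminate within round $r$, establishing the claimed halving; second, summing over all rounds gives $O(n)$ total feasibility-test work, which combined with Theorem \ref{thm:selection} and Lemma \ref{lem:heavy:submatrices} yields the overall linear running time for {\it TREE1}.
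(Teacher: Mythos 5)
Your overall architecture matches the paper's: induction on the round index, the base case that a fresh tree costs $n$ per test, and a three-way case split that dispatches cases (a) and (b) to Lemmas \ref{lem:pending:paths} and \ref{lem:processed:paths}. For those two cases you are essentially on target, though you should note the lemmas guarantee ``halved \emph{or} at most $n/2^{r+2}$,'' not an unconditional halving.

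The gap is in case (c). You argue via the effective-weight accounting of Lemma \ref{lem:selection:reduction} that a constant fraction of {\it eff\_wgt}$(\mathcal{P})$ attributable to middleweight representatives is resolved per iteration, and then assert that ``combining Lemma \ref{lem:weight:bound} with the doubling synthetic-weight structure \ldots gives, after $O(r)$ such iterations, a halving of the middleweight contribution.'' That inference does not follow. The synthetic weights are deliberately biased toward \emph{short} paths, so decreasing {\it eff\_wgt} mostly reflects resolving short middleweight paths, while the feasibility-test time is dominated by \emph{long} ones; a constant-fraction drop in synthetic weight gives no direct control on the number of vertices the test still visits in middleweight paths. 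The paper instead exploits that {\it handle\_middleweight\_paths} performs a selection on $\mathcal{U}$ \emph{weighted by path length}. That single length-weighted median, followed by one feasibility test, guarantees that at least half of the vertices currently sitting in middleweight paths are reclassified as light or heavy after that test: paths below the median turn light (cleaned and glued, so their future test cost drops to logarithmic), and paths above turn heavy (handed to $\mathcal{H}$). This is what yields the ``reduced by at least $1/16$ per iteration'' bound in the paper and, via $(15/16)^{11}<1/2$, the $66(r^2+9)$ iteration count. You also conflate this length-weighted selection with the synthetic-weight / unweighted pair that feeds Lemma \ref{lem:selection:reduction}; keeping these two distinct selections separate is exactly what makes the middleweight case tractable without replaying the Lemma \ref{lem:3:2}-style conversion from synthetic weight to vertex count, which is not available here because middleweight paths are not quartered into submatrices the way heavy ones are.

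A secondary weakness is your handling of the interaction between cases: you write that ``a constant number of halvings per category'' suffices, but the quantities migrate (middleweight becomes heavy, heavy pending becomes processed), so the bookkeeping must be done carefully. The paper resolves this by fixing a uniform $1/16$ progress guarantee for the worst case and multiplying by a constant; just asserting ``constant number per category'' leaves the accounting unfinished.
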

\begin{proof}
We note that prior to the first round, the feasibility test lands on exactly $n$ vertices, and we proceed via induction. 
Suppose at the beginning of round $r$, the feasibility test lands on at most $n/2^{r-1}$ vertices. 
If the feasibility test in fact lands on at most $n/2^r$ vertices, then the induction already holds. 
Recall that we have three cases, as we claimed just before Lemma~\ref{lem:pending:paths}. 
The feasibility test can spend at least as much time on middleweight paths as on heavy paths. 
Otherwise, the feasibility test spends more time on heavy paths than middleweight paths, but can spend more time on either pending paths or processed paths.  

If the feasibility test spends at least as much time on pending paths as on processed paths, then by Lemma \ref{lem:pending:paths}, no more than $163r+170$, which is certainly less than $408r^2+815r+415$, iterations of selection and feasibility from $\mathcal{H}$ are needed to reduce the portion spent by the feasibility test on pending paths by at least half. 
Hence, the overall feasibility test time is reduced by at least $1/8$.

Otherwise, if the feasibility test spends more time on processed paths than pending paths, then by Lemma \ref{lem:processed:paths}, at most $(6r^2+9)(408r^2+815r+415)$ iterations of selection and feasibility from $\mathcal{H}$ are needed to reduce the portion spent by the feasibility test on processed paths by at least half. 
Hence, the overall feasibility test time is reduced by at least $1/8$.

If the feasibility test spends at least as much time on middleweight paths as on heavy paths, then following a selection in $\mathcal{U}$ that is weighted by path length, and then a feasibility test, at least half the vertices in middleweight paths will be determined to be either in light paths or in heavy paths. 
If the vertices are determined to be in light paths, then the portion spent by the feasibility test on these paths is reduced by at least half, since newly formed light paths are cleaned and glued in \emph{handle\_middleweight\_paths}, without increasing the time spent by the feasibility test on heavy paths. 
Hence, if $\lambda_1$ is increased by a feasibility test from $\mathcal{U}$, the overall feasibility test time is reduced by at least $1/8$. 

However, if the vertices are determined to be in heavy paths, our algorithm will insert the representatives of the corresponding submatrices into $\mathcal{H}$. 
Our algorithm thus reduces the portion spent by the feasibility test on heavy paths by at least $1/8$.
Hence, taking the reductions of both types into account, the overall feasibility test time is reduced by at least $1/16$.
Thus, we can reduce the overall number of vertices that the feasibility test lands on by $1/16$ in $O(nr^4/2^r)$ time, for each $r$. 
Since $(15/16)^{11}<1/2$, then at most eleven repetitions suffice to halve the overall number of vertices checked by the feasibility test. 
Indeed, each round requires at most $11[6(r^2+9)]=66(r^2+9)$ repetitions, and so at the beginning of round $r+1$, the runtime is at most $n/2^r$.
\end{proof}

\begin{corollary}
\label{cor:total:calls}
Round $r$ has $O(r^4)$ calls to the feasibility test. 
\end{corollary}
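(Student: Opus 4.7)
The plan is to combine the accounting in Lemma~\ref{lem:m:r} with the explicit iteration count from Lemma~\ref{lem:processed:paths}, handling each of the three cases for where the feasibility-test time is concentrated.

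First, I would recall from Lemma~\ref{lem:m:r} that within a single round, a single ``reduction phase'' (corresponding to one of the three exhaustive cases on middleweight vs.\ heavy paths, and pending vs.\ processed paths) shrinks the feasibility-test time by a factor of at least $1/16$. Since $(15/16)^{11}<1/2$, at most $11$ such phases are needed to halve the feasibility-test time, and so round~$r$ consists of at most $11$ phases.

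Next, I would bound the number of feasibility-test calls inside one phase. In the middleweight-dominant case, only $O(1)$ selections and tests on $\mathcal U$ are needed to force a $1/16$ reduction, so this is trivially $O(r^4)$. In the heavy+pending case, Lemma~\ref{lem:pending:paths} gives at most $163r+170=O(r)$ calls. In the heavy+processed case, Lemma~\ref{lem:processed:paths} gives at most $6(r^2+9)(408r^2+815r+415)=O(r^4)$ calls, and this is the dominant contribution. Summing over at most $11$ phases per round multiplies these bounds by an absolute constant, still leaving $O(r^4)$ feasibility-test calls in round~$r$.

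Finally, I would note that the tests from \emph{handle\_middleweight\_paths} and \emph{handle\_leaves} contribute only $O(1)$ calls per phase (one weighted median and one unweighted median in $\mathcal U$, plus one median in $\mathcal V$ after each batch of \emph{handle\_pending\_paths} iterations), so they do not change the asymptotic bound. The main obstacle, and the only step requiring care, is verifying that the three cases of Lemma~\ref{lem:m:r} really do partition every phase and that each is covered by one of the three bounds above; the heavy+processed case is the bottleneck and is exactly where the $O(r^4)$ comes from, inherited through the product of the $O(r^2)$ cycles and $O(r^2)$ iterations per cycle established in Lemma~\ref{lem:processed:paths}.
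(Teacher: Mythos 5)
Your proof is correct and follows essentially the same route as the paper's: both bound the calls by taking the $11$ reduction phases per round (from Lemma~\ref{lem:m:r}) times the per-phase iteration count from Lemma~\ref{lem:processed:paths}, yielding $66(r^2+9)(408r^2+815r+415)=O(r^4)$. You merely make the three-case breakdown and the negligibility of the \emph{handle\_middleweight\_paths} / \emph{handle\_leaves} calls explicit, whereas the paper compresses the same arithmetic into a single sentence.
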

\begin{proof}
Since round $r$ requires at most $66(r^2+9)$ repetitions of the inner loop, and the inner loop uses at most $(408r^2+815r+415)$ feasibility tests, then the total number of feasibility tests in round $r$ is $O(r^4)$.
\end{proof}

\noindent
Now, we claim the main result of paper.

\begin{theorem}
The runtime of our algorithm is $O(n)$.
\end{theorem}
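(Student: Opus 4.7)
The plan is to bound the total running time by decomposing it into four pieces, each of which we will show is $O(n)$: (i) initialization and set-up of the auxiliary data structures described in Section~\ref{sec:tree:structures}; (ii) the total cost of inserting into, selecting from, and deleting from the sets $\mathcal{H}$, $\mathcal{U}$, and $\mathcal{V}$; (iii) the total cost of generating and maintaining the overlapping-subpath representations (and hence the succinct submatrix representations) for all heavy paths; and (iv) the total cost of running feasibility tests across all rounds. The theorem follows by adding these four $O(n)$ contributions.

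Pieces (i)--(iii) require almost no new work. Piece (i) is a direct consequence of the construction in Section~\ref{sec:tree:structures}: postorder listing of the tree, lexicographic sorting of (parent, height) pairs, and initialization of the $last$, $next$, $ncut$ arrays and the red-black trees are all standard linear-time operations; the rearrangement argument there also bounds by $O(n)$ the total cost of physically moving paths as leaf-paths are pruned. Piece (ii) for $\mathcal{H}$ and $\mathcal{U}$ is exactly Theorem~\ref{thm:selection}; for $\mathcal{V}$, note that a value is inserted into $\mathcal{V}$ at most once per problematic vertex, and the number of problematic vertices is at most the number of leaves, so the same accounting argument yields $O(n)$. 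Piece (iii) is exactly Lemma~\ref{lem:heavy:submatrices}.

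The main part is piece (iv), where the round structure of \emph{TREE1} pays off. By Lemma~\ref{lem:m:r}, every feasibility test executed during round $r$ lands on at most $n/2^{r-1}$ vertices, and by Corollary~\ref{cor:total:calls}, round $r$ invokes the feasibility test at most $O(r^4)$ times. Multiplying, round $r$ contributes $O(n r^4 / 2^{r-1})$ to the feasibility-test cost, and summing over all rounds gives
\[
\sum_{r \ge 1} O\!\left(\frac{n r^4}{2^{r-1}}\right) \;=\; O(n) \cdot \sum_{r \ge 1} \frac{r^4}{2^{r-1}} \;=\; O(n),
\]
since $\sum_{r \ge 1} r^4/2^r$ converges to a constant. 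Since the feasibility test dominates the inner-loop work, this already subsumes the amortized cleaning, gluing, and pointer-compression costs charged to each feasibility test, by the same argument as in Lemma~\ref{lem:3:1}.

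The main obstacle I expect is bookkeeping rather than arithmetic: one must verify that every activity of the algorithm has been charged into exactly one of the four categories, with no double counting. In particular, the path rearrangements triggered when a leaf-path is pruned, the updates to $last$, $next$, $ncut$ and to the red-black trees when two paths are glued, and the pointer-jumping performed as a side effect of \emph{FTEST1}, should all be absorbed either into piece~(i) (via the one-move-per-vertex accounting of Section~\ref{sec:tree:structures}) or into the amortized analysis that backs pieces~(iii) and~(iv). Once this accounting is confirmed, the four $O(n)$ bounds sum to give the theorem.
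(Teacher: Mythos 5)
Your proof is correct and follows essentially the same route as the paper's: it invokes Lemma~\ref{lem:m:r} for the per-test bound of $n/2^{r-1}$, Corollary~\ref{cor:total:calls} for the $O(r^4)$ tests per round, sums the convergent series $\sum_r r^4/2^{r-1}$, and falls back on Theorem~\ref{thm:selection}, Lemma~\ref{lem:heavy:submatrices}, and the Section~\ref{sec:tree:structures} accounting for the remaining $O(n)$ bookkeeping. Your version is a bit more explicit than the paper's about which lemma absorbs each category of overhead, but the decomposition and the key lemmas used are the same.
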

\begin{proof}
By Lemma \ref{lem:m:r}, the feasibility test lands on at most $n/2^{r-1}$ vertices at the beginning of round $r$. 
By Corollary \ref{cor:total:calls}, round $r$ has $O(r^4)$ calls to the feasibility test, which take $O(n/2^{r-1})$ time each.
Hence, there are at most $\log n$ rounds. 
Note that $\sum_{r=0}^{\log n} nr^4/2^{r-1}$ is $O(n)$. By Theorem \ref{thm:selection}, the total time for handling $\mathcal{H}$ and $\mathcal{U}$ and performing selection over all iterations is $O(n)$, while the same result clearly holds for $\mathcal{V}$. 
Therefore, the total time required by our algorithm is $O(n)$.
\end{proof}
\section{Conclusion}
Our algorithms solve the max-min and min-max path $k$-partition problems, as well as the max-min tree $k$-partition problem in asymptotically optimal time. 
Consequently, they use the paradigm of parametric search without incurring any $\log$ factors. 
We avoid these $\log$ factors by searching within collections of candidate values that are implicitly encoded as matrices or as single values. 
We assign synthetic weights for these values based on their corresponding path lengths so that weighted selections favor the early resolution of shorter paths, which will speed up subsequent feasibility tests. 
In fact, our analysis relies on demonstrating a constant fraction reduction in the feasibility test time as the algorithm progresses. 
Simultaneously, unweighted selections quickly reduce the overall size of the set of candidate values so that selecting test values is also achieved in linear time. 

Furthermore, we have successfully addressed the challenge of developing a meaningful quantity to track progress as the algorithm proceeds. 
We have proved that the time to perform a feasibility test describes in a natural way the overall progress from the beginning of the algorithm. 
Unfortunately, even these observations alone are not enough to overcome the challenges of tree partitioning. 
In particular, without both compressing long paths and pruning leaf paths quickly, the feasibility test time might improve too slowly for the overall algorithm to take linear time. 
Our dual-pronged strategy addresses both issues simultaneously, demonstrating that parallel algorithms are not essential or even helpful in designing optimal algorithms for certain parametric search problems.

\baselineskip 18pt
\bibliographystyle{plain}
\bibliography{all}

\begin{thebibliography}{10}

\bibitem{AP}
E.~M. Arkin and C.~H. Papadimitriou.
\newblock On the complexity of circulations.
\newblock {\em J. Algorithms}, 7:134--145, 1986.

\bibitem{BPS}
R.~I. Becker, Y.~Perl, and S.~R. Schach.
\newblock A shifting algorithm for min-max tree partitioning.
\newblock {\em J. ACM}, 29:58--67, 1982.

\bibitem{BFPRT}
M.~Blum, R.~W. Floyd, V.~R. Pratt, R.~L. Rivest, and R.~E. Tarjan.
\newblock Time bounds for selection.
\newblock {\em J. Comput. Syst. Sci.}, 7:448--461, 1972.

\bibitem{C1}
R.~Cole.
\newblock Partitioning point sets in arbitrary dimensions.
\newblock {\em Theor. Comput. Sci.}, 49:239--265, 1987.

\bibitem{C2}
R.~Cole.
\newblock Slowing down sorting networks to obtain faster sorting algorithms.
\newblock {\em J. ACM}, 34:200--208, 1987.

\bibitem{CSSS}
R.~Cole, J.~S. Salowe, W.~L. Steiger, and E.~Szemeredi.
\newblock An optimal-time algorithm for slope selection.
\newblock {\em SIAM J. Comput.}, 18:792--810, 1989.

\bibitem{FTR}
G.~N. Frederickson.
\newblock Optimal algorithms for partitioning trees and locating $p$ centers in
  trees.
\newblock {\em Technical Report CSD-TR-1029, Purdue University}, 1990.

\bibitem{F1}
G.~N. Frederickson.
\newblock Optimal algorithms for tree partitioning.
\newblock In {\em Proc. 2nd ACM-SIAM Symp. on Discrete Algorithms}, pages
  168--177, San Francisco, January 1991.

\bibitem{FJ1}
G.~N. Frederickson and D.~B. Johnson.
\newblock Finding $k$th paths and $p$-centers by generating and searching good
  data structures.
\newblock {\em J. Algorithms}, 4:61--80, 1983.

\bibitem{FJ2}
G.~N. Frederickson and D.~B. Johnson.
\newblock Generalized selection and ranking: sorted matrices.
\newblock {\em SIAM J. on Computing}, 13:14--30, 1984.

\bibitem{Gu}
D.~Gusfield.
\newblock Parametric combinatorial computing and a problem of program module
  distribution.
\newblock {\em J.ACM}, 30:551--563, 1983.

\bibitem{KM}
S.~Kundu and J.~Misra.
\newblock A linear tree partitioning algorithm.
\newblock {\em SIAM J. on Computing}, 6:151--154, 1977.

\bibitem{M}
N.~Megiddo.
\newblock Applying parallel computation algorithms in the design of serial
  algorithms.
\newblock {\em J. ACM}, 30:852--865, 1983.

\bibitem{PS}
Y.~Perl and S.~R. Schach.
\newblock Max-min tree partitioning.
\newblock {\em J. ACM}, 28:5--15, 1981.

\bibitem{R}
M.~Reichling.
\newblock On the detection of a common intersection of k-convex polyhedra.
\newblock {\em Lect. Notes Comput. Sci.}, 333:180--187, 1988.

\bibitem{S}
J.~S. Salowe.
\newblock L-infinity interdistance selection by parametric search.
\newblock {\em Inf. Proc. Lett.}, 30:9--14, 1989.

\bibitem{Z}
E.~Zemel.
\newblock A linear time randomizing algorithm for searching ranking functions.
\newblock {\em Algorithmica}, 2:81--90, 1987.

\end{thebibliography}
\end{document}